\newtheorem{theorem}{Theorem}
\newtheorem{lemma}[theorem]{Lemma}
\newtheorem{claim}[theorem]{Claim}
\newtheorem{observation}[theorem]{Observation}
\newtheorem{definition}[theorem]{Definition}
\newcommand{\namedref}[2]{\hyperref[#2]{#1~\ref*{#2}}}
\newcommand{\Questionref}[1]{\namedref{Question }{#1}}
\newcommand{\bag}[1]{\ensuremath{\operatorname{Bag}(#1)}}
\newcommand{\net}[1]{\ensuremath{\operatorname{Net}(#1)}}
\newcommand{\dist}{\ensuremath{d}}
\newcommand{\Texp}{\mathsf{Texp}}
\newcommand{\supp}{\mathrm{supp}}
\def\cA{\ensuremath{\mathcal{A}}}  
\def\cC{\ensuremath{\mathcal{C}}}
\def\cF{\ensuremath{\mathcal{F}}}
\def\cK{\ensuremath{\mathcal{K}}}
\def\cP{\ensuremath{\mathcal{P}}}
\def\cQ{\ensuremath{\mathcal{Q}}}
\def\cT{\ensuremath{\mathcal{T}}}
\def\cX{\ensuremath{\mathcal{X}}}
\newcommand{\ddim}{{\rm ddim}}
\newcommand{\tw}{{\rm tw}}
\newcommand{\pw}{{\rm pw}}
\newcommand{\Root}{\operatorname{root}}
\newcommand{\SPCS}{{\rm SPCS}}
\newcommand{\dom}{\ensuremath{\mathrm{dom}}}
\def\eps{\varepsilon}
\title{How to Protect Yourself from Threatening Skeletons:\\ Optimal Padded Decompositions for Minor-Free Graphs}
\author{%
Jonathan Conroy%
\thanks{Department of Computer Science, Dartmouth College. Email: {\tt jonathan.conroy.gr@dartmouth.edu}. Supported by the U.S.\ National Science Foundation CAREER Award under the Grant No.\ CCF-2443017.
}
\and
Arnold Filtser%
\thanks{Bar-Ilan University. Email: \texttt{arnold.filtser@biu.ac.il}. This research was supported by the ISRAEL SCIENCE FOUNDATION (grant No.\ 1042/22).
}}
\date{}
\date{}
\begin{document}
\maketitle

\begin{abstract}
Roughly, a metric space has padding parameter $\beta$ if for every $\Delta>0$, there is a stochastic decomposition of the metric points into clusters of diameter at most $\Delta$ such that every ball of radius $\gamma\Delta$ is contained in a single cluster with probability at least $e^{-\gamma\beta}$.
The padding parameter is an important characteristic of a metric space with vast algorithmic implications.
In this paper we prove that the shortest path metric of every $K_r$-minor-free graph has padding parameter $O(\log r)$, which is also tight.
This resolves a long standing open question, and exponentially improves the previous bound.
En route to our main result, we construct sparse covers for $K_r$-minor-free graphs with improved parameters, and we prove a general reduction from sparse covers to padded decompositions.

\end{abstract}

\vfill
\setcounter{secnumdepth}{5}
\setcounter{tocdepth}{3} \tableofcontents
\newpage
\pagenumbering{arabic}

\section{Introduction}
Given a metric space, a \EMPH{padded decomposition} is a stochastic partition of the space into clusters of bounded diameter such that small neighborhoods are likely to be clustered together.
Roughly, we say that a metric space has \EMPH{padding parameter} $\beta$ if it can be stochastically partitioned into clusters of (arbitrary) diameter $\Delta$, such that every ball of radius $\frac{\Delta}{\beta}$ is contained in a single cluster with probability at least $\frac12$.
From an algorithmic perspective, the padding parameter is a vitally important characteristic of a metric space.
Indeed, padded decompositions provide a natural approach for divide-and-conquer algorithms, and in numerous problems the padding parameter determines performance in the following sense:
Given a certain problem on an $n$-point metric space with padding parameter $\beta$, it is possible to achieve a solution of value $f(\beta)$ (or sometimes $g(n,\beta)$) for some function $f$ (or $g$). 
A partial list of examples include:
multi-commodity flow \cite{KPR93,LR99},
 flow sparsifiers \cite{EGKRTT14},
metric embeddings \cite{Bar96,Rao99,Rab03,KLMN04,LN05,ABN11,BFT24,Fil24sparse},
spanners \cite{HIS13,FN22,HMO23}, 
near linear SDD solvers \cite{BGKMPT14}
and other spectral methods \cite{KLPT09,BLR10}, 
Lipschitz and $0$-extension \cite{CKR04,AFHKTT04,LN05},
and many more.

We continue with a formal definition. We will use the language of shortest path metrics of finite weighted graphs $G=(V,E,w)$, as this is the topic of our paper. Note however that every finite metric space can be represented in this way. 
The \EMPH{weak diameter} of a cluster $C\subseteq V$ is the maximum pairwise distance between cluster points with respect to original distances: $\max_{u,v\in C}d_G(u,v)$.
On the other hand, the \EMPH{strong diameter} of a cluster $C$ is the maximum pairwise distance with respect to the induced cluster distances: $\max_{u,v\in C}d_{G[C]}(u,v)$.
Unless explicitly stated otherwise, in this paper we will always refer to weak diameter. 
A partition $\cP$ of $V$ is said to be \EMPH{$\Delta$-bounded} if the diameter of every cluster $C\in\cP$ is at most $\Delta$.
Given a partition $\cP$, we write \EMPH{$P(z)$} to denote the cluster containing the vertex $z$.
\begin{definition}[Padded Decomposition]\label{def:PadDecompostion}
    A distribution $\mathcal{D}$ over partitions of a graph $G=\left(V,E,w\right)$ is a \EMPH{$(\beta,\delta,\Delta)$-padded decomposition} if every $\mathcal{P}\in\supp(\mathcal{D})$ is $\Delta$-bounded, and for every $0\le\gamma\le\delta$ and $z\in V$, the ball $B_G(z,\gamma\Delta)$ satisfies
    $\Pr[B_G(z,\gamma\Delta) \subseteq P(z)] \ge e^{-\beta\gamma}$.
    We say that $G$ admits $(\beta,\delta)$-padded decomposition scheme if for every $\Delta>0$, there is a $(\beta,\delta,\Delta)$-padded decomposition for $G$.
\end{definition}

The parameter $\beta$ in \Cref{def:PadDecompostion} is called the \EMPH{padding parameter}. 
Note that as $e^{-x}\ge1-x$ (for $x\ge0$), the probability that the ball $B_G(z,\gamma\Delta)$ is cut is at most $\Pr[B_G(z,\gamma\Delta)\not\subseteq P(z)]<\beta\gamma$. 
In particular, the probability that a ball of radius $\gamma\Delta$ is contained in a single cluster goes to $1$ as $\gamma$ goes to $0$. 
The $\delta$ parameter governs range of the provided guarantee, and is always at most $\frac12$ (as the diameter of a ball with radius $\gamma\Delta$ is $2\gamma\Delta$). 
If $\delta\ge\frac1\beta$, then \Cref{def:PadDecompostion} guarantees that a ball of radius $\frac\Delta\beta$ is contained in a single cluster with probability $\frac1e$. 
Desirably, $\delta$ should be an absolute constant $\Omega(1)$.
In the discussion that follows we will ignore the $\delta$ parameter and say that a graph is $\beta$-decomposable if it admits a $(\beta,\Omega(\frac1\beta))$-padded decomposition scheme. The exact parameters are stated in \Cref{tab:padded}.

The study of padded decompositions was initiated by Klein, Plotkin, and Rao \cite{KPR93}, who showed that every $K_r$-minor free graph is $O(r^3)$-decomposable.
Later, Bartal \cite{Bar96} showed that every $n$-point metric space is $O(\log n)$-decomposable, which is also tight  \cite{Bar96} (the lower bound instance being a constant degree expander).
Note that as every $n$-vertex graph is $K_{n+1}$-minor free, this also implies an $\Omega(\log r)$ lower bound on the decomposability of $K_r$-minor free graphs.
Later, Fakcharoenphol and Talwar \cite{FT03} improved the analysis of \cite{KPR93} to show that every $K_r$-minor free graph is $O(r^2)$-decomposable.
In 2014, Abraham, Gavoille, Gupta, Neiman, and Talwar \cite{AGGNT19} showed that every $K_r$-minor free graph is $O(r)$-decomposable by introducing a decomposition based on the cops and robbers approach (completely different from \cite{KPR93}). 
Closing the exponential gap between the upper bound of $O(r)$ and the lower bound $\Omega(\log r)$ has been an outstanding problem asked by various authors \cite{FT03,Lee12,AGGNT19,Fil19padded,FIK+23}.
This is due to its fundamental nature to the understanding of the shortest path metric in minor-free-graphs, as well as its numerous algorithmic applications.

Partial progress was made in some limited special cases:
Lee and  Sidiropoulos \cite{LS10} proved that every graph of genus $g$ is $O(\log g)$-decomposable.
Abraham \etal~\cite{AGGNT19} showed that pathwidth $\pw$ graphs are $O(\log \pw)$-decomposable.
Recently, Filtser \etal~\cite{FFIKLMZ24} (improving over \cite{AGGNT19,KK17}) showed that treewidth $\tw$ graphs are $O(\log \tw)$-decomposable.
This long line of research culminates in the current paper, where we answer the main open question by providing a tight bound for $K_r$-minor free graphs.

\begin{theorem}
\label{thm:padded}
    Every $K_r$-minor-free graph admits an $(O(\log r), \Omega(1))$-padded decomposition scheme.
\end{theorem}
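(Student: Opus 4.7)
Following the roadmap announced in the abstract, my plan is to attack \Cref{thm:padded} through two intermediate results: (i) an improved sparse cover for $K_r$-minor-free graphs, and (ii) a general reduction from sparse covers to padded decompositions. A \emph{sparse cover} with parameters $(\tau,s,\Delta)$ is a collection $\cC$ of clusters with each cluster of diameter at most $\tau\Delta$, each vertex lying in at most $s$ clusters, and every ball of radius $\Delta$ contained in some cluster of $\cC$.

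\textbf{Step 1: Sparse covers for minor-free graphs.} I would construct, for every $\Delta>0$, a sparse cover of an arbitrary $K_r$-minor-free graph with padding $\tau=O(\log r)$ and sparsity $s=\mathrm{poly}(r)$. The starting point is the KPR/FT construction in which one iteratively carves out clusters by taking geodesic balls or shortest-path trees around carefully chosen centers. The key structural fact is that after $r-1$ such carving rounds the chosen shortest-path trees, together with the residual graph, encode a $K_r$ minor, so the process must terminate. In the classical analysis each round contributes an additive $O(1)$ to the padding, yielding the $O(r)$ bound of \cite{AGGNT19}. The improvement would come from grouping the rounds into $O(\log r)$ geometrically-spaced scales and running carving within each scale in parallel, so that the padding penalty is only $O(1)$ per scale rather than per round, while controlling overlap via a net of representative vertices to keep the sparsity polynomial in $r$.

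\textbf{Step 2: Sparse covers $\Rightarrow$ padded decompositions.} Given a sparse cover with padding $\tau$ and sparsity $s$, my reduction would impose a uniformly random ordering on the clusters and attach an i.i.d.\ truncated-exponential delay to each cluster; each vertex $v$ is assigned to the first cluster (in this weighted order) whose slightly shrunk version contains $v$. A ball of radius $\gamma\Delta$ around $v$ is cut only if some cluster's boundary falls inside it, and the cut probability can be charged against the at most $s$ clusters that can possibly separate the ball. The exponential-delay calculation then gives a cut probability of order $(\tau+\log s)\gamma$, which is precisely a padding parameter of $\beta=O(\tau+\log s)$. Plugging in $\tau=O(\log r)$ and $s=\mathrm{poly}(r)$ from Step 1 yields $\beta=O(\log r)$, and the diameter bound $O(\tau\Delta)$ is rescaled to $\Delta$ by dividing the input scale by $\tau$.

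\textbf{The main obstacle.} Step 2 is essentially folklore once phrased correctly and should not present deep difficulties beyond carefully handling weak versus strong diameter. The genuine bottleneck is Step 1: each individual KPR-style carving round really does cost $\Omega(1)$ in padding, so the target of $O(\log r)$ cannot be reached by tightening constants in the existing analysis. What is needed is either a batching mechanism that carves out many disjoint ``layers'' at once while still maintaining the minor-forbidding invariant that bounds the number of carving phases, or an appeal to a structural theorem (Robertson--Seymour style, or the genus/treewidth decompositions underlying \cite{LS10,FFIKLMZ24}) that reduces the problem to pieces on which the $O(\log g)$ or $O(\log\tw)$ bounds can be applied and then stitched together without reintroducing a $\mathrm{poly}(r)$ factor. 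Executing this batching while simultaneously keeping the sparsity polynomial and the weak diameter small is where I would expect the novel technical work to lie.
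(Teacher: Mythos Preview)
Your two-step architecture (sparse cover, then reduce to padded decomposition) matches the paper exactly, and Step~2 is indeed the easy half. The genuine gap is in the \emph{target parameters} for Step~1 and, correspondingly, in the reduction formula you state.

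First, the reduction you sketch does \emph{not} give an additive $O(\tau+\log s)$ padding parameter; the truncated-exponential / MPX-style argument gives a \emph{multiplicative} $O(\tau\cdot\log s)$. The reason is that the random shift you can afford is of order the cover's padding radius $\Delta$, while the ball whose survival you want to control has radius $\gamma$ times the cluster diameter $\tau\Delta$; the ratio $\gamma\tau$ is what enters the exponent, and the $\log s$ comes from the union over the $s$ clusters that can threaten a vertex. This is precisely the paper's \Cref{thm:coversToPaddedDecomp}: a $(\beta,s,\Delta)$-sparse cover yields an $(O(\beta\log s),\,\Omega(1/\beta),\,\Delta)$-padded decomposition. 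Consequently, if you only obtain a cover with padding $\tau=O(\log r)$ and sparsity $s=\poly(r)$, the reduction gives $O(\log^{2}r)$, not $O(\log r)$.

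This means Step~1 must aim for something stronger than you propose: you need \emph{constant} padding, $\tau=O(1)$, together with $s=\poly(r)$. That is exactly what the paper proves (\Cref{thm:cover}: padding $8+\eps$, sparsity $O(r^{4}/\eps^{2})$), and the mechanism is quite different from ``batching KPR rounds into $O(\log r)$ scales.'' The paper uses the \emph{buffered cop decomposition} of \cite{CCLMST24}, which gives a partition tree $\cT_G$ into supernodes, each built around a shortest-path-tree skeleton with at most $r-1$ leaves, together with a deterministic $\Delta/r$ buffer between a supernode and any ancestor not in its bag. On top of this structure the paper runs a divide-and-conquer on $\cT_G$: it selects a set $\cX$ of \emph{separator supernodes} (not a single centroid) so that (i) each vertex is ``threatened'' by only $O(r)$ supernodes in $\cX$, thanks to the buffer, and (ii) deleting $\cX$ strictly decreases the effective bag size in every remaining component, so the recursion depth is at most $r-1$. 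At each level you grow a \emph{constant}-radius ball (a fixed multiple of $\Delta$) around the relevant supernodes, which is why the padding stays $O(1)$; the $\poly(r)$ sparsity comes from the $O(r)$ threateners per level times the $O(r)$ recursion depth times the net on each skeleton. The ``batching'' you are looking for is thus not over geometric scales of radius but over the tree structure of the cop decomposition, and the buffer property is the ingredient that makes the threatener count polynomial rather than exponential.
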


\begin{table}[t]\center
    \def\arraystretch{1.15}
	\begin{tabular}{|l|l|l|l|l|}
		\hline
		\textbf{Family} & \textbf{Padding parameter} & \textbf{$\delta$} & \textbf{Diameter} & \textbf{Ref} \\ \hline
		General & $O(\log n)$ & $\Omega(1)$ & strong & \cite{Bar96} \\ \hline 
		
		\multirow{5}{*}{$K_r$ minor free} & $O(r^3)$ & $\Omega(r^{-2})$ & weak & \cite{KPR93} \\ \cline{2-5} 
		& $O(r^2)$ & $\Omega(r^{-1})$ & weak & \cite{FT03} \\ \cline{2-5} 
		& $O(r)$ & $\Omega(1)$ & weak & \cite{AGGNT19} \\ \cline{2-5} 
		& $O(r^2)$ & $\Omega(r^{-2})$ & strong & \cite{AGGNT19} \\ \cline{2-5} 
		& $O(r)$ & $\Omega(r^{-1})$ & strong & \cite{Fil19padded} \\ \hline
		\multirow{2}{*}{Genus $g$} & $O(\log g)$ & $\Omega(1)$ & weak  & \cite{LS10} \\ \cline{2-5} 
  
		& $O(\log g)$ & $\Omega(1)$ & strong  & \cite{AGGNT19} \\ \hline
  
		Pathwidth $\pw$ & $O(\log\pw)$ & $\Omega(1)$ & strong 
  & \cite{AGGNT19} \\ \hline
  
		\multirow{3}{*}{Treewidth $\tw$} & $O(\log\tw+\log\log n)$ & $\Omega(1)$ & weak & \cite{KK17} \\ \cline{2-5}
  & $O(\tw)$ & $\Omega(\tw^{-1})$ & strong & \cite{AGGNT19} \\ \cline{2-5}
  & $O(\log\tw)$ & $\Omega(1)$ & weak & \cite{FFIKLMZ24} \\ \cline{2-5}

  \hline\hline
		\rowcolor[HTML]{EFEFEF}$K_r$ minor free & $O(\log r)$ & $\Omega(1)$ & weak & \Cref{thm:padded} \\ \hline
	\end{tabular}
	\caption{Summary of previous and new results on padded decompositions.}
    \label{tab:padded}
\end{table}

\subsection{Sparse Cover}
\EMPH{Sparse cover} is in some sense a dual notion to padded decompositions. Here we have a collection of (non-disjoint) bounded-diameter clusters such that every small enough ball is fully contained in some cluster, while every vertex belongs only to a limited number of clusters.

\begin{definition}[Sparse Cover]\label{def:SparseCover}
	Given a weighted graph $G=(V,E,w)$, a collection of clusters $\mathcal{C} = \{C_1,..., C_t\}$ is a \EMPH{$(\beta,s,\Delta)$ sparse cover} if:
	\begin{enumerate}
		\item Bounded diameter: The diameter of every cluster $C_i\in\mathcal{C}$ is bounded by $\Delta$.\label{condition:RadiusBlowUp}
		\item Padding: For each $v\in V$, there exists a cluster $C_i\in\mathcal{C}$ such that $B_G(v,\frac\Delta\beta)\subseteq C_i$.
		\item Sparsity: For each $v\in V$, there are at most $s$ clusters in $\mathcal{C}$ containing $v$.		
	\end{enumerate}	
	If the clusters $\cC$ can be partitioned into $s$ partitions $\cP_1,\dots,\cP_s$ such that $\cC=\cup_{i=1}^s\cP_i$, then  $\{\cP_1,\dots,\cP_s\}$ is called a \EMPH{$(\beta,s,\Delta)$ sparse partition cover}. 
We say that a graph $G$ admits a  $(\beta,s)$ sparse (partition) cover scheme, if for every parameter $\Delta>0$ it admits a $(\beta,s,\Delta)$ sparse (partition) cover that can be constructed in expected polynomial time. Sparse partition cover scheme is abbreviated \SPCS. 
\end{definition}

If all the clusters in the sparse cover have strong diameter guarantee, then we will call it a strong sparse cover (respectively, strong sparse cover scheme and strong \SPCS).
Sparse covers were introduced by Awerbuch and Peleg \cite{AP90} (even before padded decompositions), and have found numerous applications. 
A partial list includes: 
compact routing schemes \cite{PU89,Peleg00,TZ01b,AGMNT08,AGM05,AGMW10,BLT14},
distant-dependent distributed directories \cite{AP91,Peleg93,Peleg00,BLT14},
network synchronizers \cite{AP90b,Lynch96,Peleg00,AW04,BLT14},
distributed deadlock prevention \cite{AKP94},
construction of spanners and ultrametric covers \cite{HIS13,FN22,LS23,HMO23,FL22,FGN24},
metric embeddings \cite{Rao99,KLMN04,Fil24sparse},
universal TSP and Steiner tree constructions \cite{JLNRS05,BDRRS12,Fil24scattering,BCFHHR24}, and oblivious buy-at-bulk \cite{SBI11}.

Given a padded decomposition one can construct a sparse cover. Indeed, suppose that $G$ admits a  $(\beta,\delta,\Delta)$-padded decomposition. Fix $\rho\ge\frac1\delta$, and take all the clusters in a union of $e^{\frac\beta\rho}\cdot O( \log n)$ independent samples from the padded decomposition. With high probability, the resulting collection of clusters will be a $(\rho,e^{\frac\beta\rho}\cdot O( \log n),\Delta)$-sparse cover.
However, no similar statement in the other direction was ever made.
The second contribution of the current paper is to show that in general, given a sparse cover one can construct a padded decomposition.\footnote{\Cref{thm:coversToPaddedDecomp} was previously included in a manuscript \cite{Fil24sparse} uploaded to arXiv. The reduction has since been removed from that manuscript and now appears exclusively here.}

\begin{restatable}[From sparse covers to padded decompositions]{theorem}{SparseCoverToDecomposition}
	\label{thm:coversToPaddedDecomp}
	Consider a weighted graph $G=(V,E,w)$ that admits a $(\beta,s,\Delta)$-sparse cover. Then $G$ admits $(O(\beta\cdot\log s),\frac{1}{4\beta},\Delta)$-padded decomposition.
\end{restatable}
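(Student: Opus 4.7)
The plan is to build the padded decomposition on top of the sparse cover via a Calinescu--Karloff--Rabani (CKR) style random partition. Sample a single radius $R$ uniformly from $\left[\tfrac{\Delta}{4\beta}, \tfrac{\Delta}{2\beta}\right]$, together with a uniformly random permutation $\pi$ of the clusters $C_1,\dots,C_t$; assign each $v$ to $P(v) := C_j$, where $j$ minimizes $\pi$-rank subject to $B_G(v,R)\subseteq C_j$. The sparse-cover property guarantees some $C^\ast$ with $B_G(v,\Delta/\beta)\supseteq B_G(v,R)$, so $P(v)$ is always well defined, and each piece of the resulting partition lies inside some $C_j$, making the partition $\Delta$-bounded.

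For the padding analysis, fix $z \in V$ and $0\le \gamma\le \tfrac{1}{4\beta}$, let $B = B_G(z,\gamma\Delta)$, and write $b(v,C) := d_G(v, V\setminus C)$. Set
\[
\mathcal U := \{i : \exists\, v\in B,\ B_G(v,R)\subseteq C_i\},\qquad \mathcal I := \{i : \forall\, v\in B,\ B_G(v,R)\subseteq C_i\}.
\]
Then $B\subseteq P(z)$ is equivalent to the $\pi$-minimum of $\mathcal U$ lying in $\mathcal I$. Two structural facts anchor the rest: (i) since $R\ge \Delta/(4\beta)\ge \gamma\Delta$, every $i\in\mathcal U$ forces $z\in C_i$ (because $z\in B_G(v,R)\subseteq C_i$ for some nearby $v\in B$), so $|\mathcal U|\le s$; and (ii) any $C^\ast$ with $b(z,C^\ast)\ge \Delta/\beta$ satisfies $R+\gamma\Delta\le \Delta/\beta\le b(z,C^\ast)$, so $C^\ast\in\mathcal I$ always.

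The main obstacle is extracting a $\log s$ (rather than $s$) factor from the cut probability, which I would do via CKR's harmonic-sum trick. Order the $\leq s$ clusters containing $z$ by depth $b(z,\cdot)$ decreasingly as $C_{[1]},\dots,C_{[s']}$ and decompose
\[
\Pr[\text{cut}] = \sum_{k}\Pr\!\left[C_{[k]}\ \text{is the $\pi$-min of }\mathcal U\ \text{and}\ C_{[k]}\in\mathcal U\setminus\mathcal I\right].
\]
The second conjunct confines $R$ to an interval of length $\leq 2\gamma\Delta$ around $b(z,C_{[k]})$, which by uniformity of $R$ contributes at most $8\gamma\beta$. Whenever $R\le b(z,C_{[k]})$, every deeper $C_{[j]}$ with $j<k$ also lies in $\mathcal U$ (via $v=z$, since $b(z,C_{[j]})\ge b(z,C_{[k]})\ge R$), so by the independence and symmetry of $\pi$ the probability that $C_{[k]}$ has minimum $\pi$-rank among these $k$ clusters is at most $1/k$. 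Summing the harmonic series yields $\Pr[\text{cut}]\leq O(\gamma\beta\log s)$.

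Finally, converting this linear bound into the required exponential form $\Pr[B\subseteq P(z)]\ge e^{-O(\beta\log s)\gamma}$ is routine when $\gamma\beta\log s$ is small (via $1-x\ge e^{-2x}$); the remaining range near $\gamma=\tfrac{1}{4\beta}$ is covered by the universal bound $\Pr[B\subseteq P(z)] = E[|\mathcal I|/|\mathcal U|]\ge 1/s\ge e^{-\log s}$, which absorbs into the desired exponent up to constants. The subtlest case of the harmonic-sum argument is when $R$ lands in the upper half $\left(b(z,C_{[k]}),\, b(z,C_{[k]})+\gamma\Delta\right]$ of $C_{[k]}$'s border---then a deeper cluster could temporarily drop out of $\mathcal U$---and I expect to absorb it via CKR's standard continuous-integration over $R$, losing only constants in the final bound.
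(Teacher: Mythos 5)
Your proposal takes a genuinely different route from the paper. The paper's construction is Miller--Peng--Xu style: it samples a truncated exponential shift $\delta_C\sim\Texp(\lambda)$ for \emph{each} cluster $C$ and assigns $v$ to the cluster maximizing $f_C(v)=\delta_C\cdot\frac{\Delta}{\beta}+d_G(v,V\setminus C)$, with $\lambda=2+2\ln s$. You instead use a Calinescu--Karloff--Rabani style rule with a single uniform radius $R$ and a random permutation. These are not the same construction, so the analyses diverge as well.

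There are two concrete problems with the analysis as written.

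First, the harmonic-sum ordering. You order the clusters containing $z$ by $b(z,\cdot)$ decreasing and argue that when $R\le b(z,C_{[k]})$ all deeper clusters lie in $\mathcal{U}$. You correctly flag the ``upper half'' $R\in(b(z,C_{[k]}),\,b(z,C_{[k]})+\gamma\Delta]$ as problematic; however ``CKR's standard continuous-integration over $R$'' does not close this hole, because nothing forces $\{[1],\dots,[k-1]\}\subseteq\mathcal{U}$ there. The clean fix is to order by $b^+_i:=\max_{v\in B}d_G(v,V\setminus C_i)$ decreasing instead. Then the cut event for $C_{[k]}$ already forces $R< b^+_{[k]}\le b^+_{[j]}$ for every $j<k$, so some witness $v\in B$ certifies $C_{[j]}\in\mathcal{U}$ \emph{throughout} the cut interval, and the $1/k$ bound holds unconditionally. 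With the correct ordering there is no upper/lower-half case split at all.

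Second, and more seriously, the conversion from the linear cut bound to the exponential guarantee does not work. Even after fixing the ordering, the CKR analysis gives
\[
\Pr[\text{not cut}]\;\ge\;\max\!\Bigl(1-\Theta(\gamma\beta\log s),\;1/s\Bigr),
\]
and this does \emph{not} imply $\Pr[\text{not cut}]\ge e^{-O(\beta\log s)\gamma}$ for the full range $\gamma\le\frac{1}{4\beta}$. Writing $u=8\gamma\beta\in[0,2]$ and $c=\beta'/(8\beta)$, you need $\max\bigl(1-u(1+\ln s),\,1/s\bigr)\ge e^{-cu}$. The linear bound (via $1-x\ge e^{-2x}$) handles $u\lesssim\frac{1}{1+\ln s}$ with $c=\Theta(\log s)$, while the $1/s$ bound requires $cu\ge\ln s$, i.e.\ $u\gtrsim\frac{\ln s}{c}=\Theta(1)$; for large $s$ there is a gap of width $\Theta(\log s)$ in between. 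Closing it forces $c=\Theta(\log^2 s)$, i.e.\ padding $\Theta(\beta\log^2 s)$ rather than $O(\beta\log s)$, or equivalently forces shrinking $\delta$ to $\Theta\bigl(\frac{1}{\beta\log s}\bigr)$, either of which falls short of the theorem. This is not a constant-factor loss: the ``absorbs into the desired exponent up to constants'' claim is off by a $\log s$ factor. The paper sidesteps this entirely because the per-cluster truncated exponential shifts are (nearly) memoryless, which lets it bound $\Pr[\mathcal{Q}_i]\le(1-e^{-2\gamma\lambda})(\Pr[\mathcal{F}_i]+\tfrac{1}{e^{\lambda}-1})$ \emph{proportionally} to the assignment probability $\Pr[\mathcal{F}_i]$, summing directly to $1-e^{-4\gamma\lambda}$; the uniform-$R$ CKR scheme has no analogue of this cancellation.
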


\begin{table}[]\center
    \def\arraystretch{1.15}
	\begin{tabular}{|l|l|l|l|l|}
		\hline
		\multicolumn{1}{|l|}{\textbf{Family}} & \textbf{Padding} & \textbf{Sparsity}                          & \textbf{Diameter} & \textbf{Ref}       \\ \hline
		\multicolumn{1}{|l|}{General}         & $4k-2$           & $2k\cdot n^{\nicefrac1k}$                      & strong            & \cite{AP90}               \\ \hline
		Planar&$32$&$18$ &strong&\cite{BLT14}\\\hline
		\multirow{8}{*}{$K_r$-minor free}     & $O(r^3)$         & $O(\log n)$                                & weak              & \cite{KPR93}              \\ \cline{2-5} 
		& $O(r^2)$         & $O(\log n)$                                & weak              & \cite{FT03}               \\ \cline{2-5} 
		
		
		& $O(r)$         & $O(\log n)$                                & strong              & \cite{Fil19padded}               \\ \cline{2-5} 
		
		& $8$              & $O_r(\log n)$                       & strong           & \cite{BLT14}              \\ \cline{2-5} 
		& $O(r^2)$         & $2^r$                                      & weak              & \cite{FT03,KLMN04}             \\ \cline{2-5} 
		& $O(r^2)$         & $2^{O(r)}\cdot r!$                                           & strong            & \cite{AGMW10}             \\ \cline{2-5} \cline{2-5} 
		
		& $O(r)$             & $O(r^2)$                                  & strong            & \cite{Fil24sparse}  \\ \cline{2-5} 		
		& $4+\eps$             & $O(\frac1\eps)^r$                                   & strong            & \cite{Fil24sparse}\\ \hline
		Treewidth $\tw$       & $6$              & $\poly(\tw)$ &weak&\cite{FFIKLMZ24}\\\hline  \hline   
		\rowcolor[HTML]{EFEFEF}$K_r$-minor free       & $8+\eps$              & $O(r^4/\eps^2)$ &weak&\Cref{thm:cover}\\\hline 
		
	\end{tabular}
	\caption{\small{Summery of new and previous work on sparse covers. 
 The $O_r$ notation for \cite{BLT14} hides the constant from the Robertson Seymour \cite{RS03} structure theorem$^{\ref{foot:RSconstant}}$.}}
	\label{tab:Covers}
\end{table}	

See \Cref{tab:Covers} for a summary of previous work on sparse covers. 
Awerbuch and Peleg \cite{AP90} showed that for $k\in\N$, general $n$-vertex graphs admit a $(4k-2,2k\cdot n^{\frac1k})$ sparse cover scheme.
In the context of $K_r$-minor-free graphs, using \cite{KPR93} and the reduction above one can get a cover with padding $\poly(r)$ and $O(\log n)$ sparsity. 
Krauthgamer \etal~\cite{KLMN04} showed that, using \cite{KPR93}, one can get sparse covers with parameters independent from the cardinally of the graph; specifically, \cite{KPR93} can be transformed into an $(O(r^2),2^r)$-sparse cover scheme.
Recently, Filtser \cite{Fil24sparse} showed that the cops and robbers approach of \cite{AGGNT19} (more specifically, its refinement recently developed in \cite{CCLMST24}) can be transformed into an $(O(r),O(r^2))$-sparse cover scheme, and $(4+\eps,O(\frac1\eps)^r)$-sparse cover scheme, both with strong diameter.
Even later,  Filtser \etal~\cite{FFIKLMZ24} showed that every graph with treewidth $\tw$ admits a $(6,\poly(\tw))$-sparse cover scheme.
The third contribution of this paper is a much improved sparse covers for minor-free graphs. 
\begin{theorem}\label{thm:cover}
    Every $K_r$-minor-free graph $G$ admits an $\left(8+\eps, O\left(\frac{r^4}{\eps^2}\right)\right)$-\SPCS~
    for every $\eps>0$.
\end{theorem}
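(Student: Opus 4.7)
The plan is to construct the sparse partition cover by iterating a shortest-path ball-carving template over $\Theta(r^4/\eps^2)$ parameter choices, marrying the KPR framework \cite{KPR93} with the radius-shift refinement of \cite{Fil24sparse,FFIKLMZ24}, and invoking the $K_r$-minor-free hypothesis only to bound how many clusters can simultaneously compete for a given vertex.

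First, I would fix $\Delta>0$ and describe the template: given an ordering $\sigma$ of $V$ and a radius $R\in[\Delta/4,\Delta/2]$, iteratively pick the first unclustered vertex $c$ in $\sigma$, form the cluster $C_c = B_G(c,R)\cap V'$ where $V'$ is the current unclustered set, then remove $C_c$ from $V'$. Every cluster is contained in $B_G(c,\Delta/2)$, so its weak diameter is at most $\Delta$. The full sparse partition cover consists of the resulting partitions $\cP_{\sigma,R}$ ranging over a carefully chosen family of $\Theta(r^4/\eps)$ orderings paired with $\Theta(1/\eps)$ discretized radii; the product gives the desired $\Theta(r^4/\eps^2)$ partitions.

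Second, for the padding claim, fix $v$ and write $\gamma=1/(8+\eps)$. In partition $\cP_{\sigma,R}$, padding fails (i.e.\ $B_G(v,\gamma\Delta)\not\subseteq P(v)$) only in two cases: (i) the cluster's center $c^*$ satisfies $R<d(c^*,v)+\gamma\Delta$, leaving a boundary vertex of $B_G(v,\gamma\Delta)$ just outside the carved ball; or (ii) some point of $B_G(v,\gamma\Delta)$ was already claimed by a cluster produced earlier in the ordering. Case (i) bans only an $R$-interval of length $2\gamma\Delta$ around $d(c^*,v)$, killing $O(1)$ out of the $\Theta(1/\eps)$ discretized radii. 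Handling case (ii) requires a family of orderings rich enough that for every $v$ at least one ordering places no earlier cluster near $v$, which is where the minor-free structure enters.

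The crux is the sparsity argument, i.e.\ showing $\Theta(r^4/\eps)$ orderings suffice. I would use a minor-extraction argument: if $v$ is ``betrayed'' by $t$ distinct orderings under case (ii), one assembles $t$ pairwise-disjoint branch sets from the shortest-path trees rooted at the intruding cluster centers, together with linking paths emanating from $v$, to form a $K_{\Omega(\sqrt{t})}$-minor via a Mader-style counting, forcing $t=O(r^2)$ per radius and $O(r^4)$ in aggregate after summing over radius interactions. The main obstacle will be executing this minor-extraction tightly enough to yield a polynomial $r^4$ dependence rather than an exponential one, while keeping the padding as small as $8+\eps$. Earlier constructions attaining padding $4+\eps$ paid exponential sparsity $(1/\eps)^r$ precisely because their minor argument was weaker; bridging to polynomial sparsity requires a carefully hierarchical ordering scheme that isolates disjoint minor contributions and channels them into a single $K_r$ construction.
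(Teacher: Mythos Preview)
Your proposal has a genuine gap at exactly the step you yourself flag as ``the crux'': the minor-extraction argument. You assert that if $v$ is betrayed by $t$ orderings one can assemble $t$ pairwise-disjoint branch sets from the intruders' shortest-path trees, link them through $v$, and extract a $K_{\Omega(\sqrt t)}$-minor. But nothing in your template forces the intruding centers across \emph{different} orderings to sit in disjoint regions, nor does it produce edges between all pairs of branch sets; a single vertex $v$ with paths to $t$ centers gives at best a star, not a clique minor. ``Mader-style counting'' does not bridge this: Mader's theorem turns high average degree into a minor, but your intruder configuration has no average-degree content. You then concede that ``the main obstacle will be executing this minor-extraction tightly enough''---this is not an obstacle in execution, it is the entire proof, and no mechanism is offered. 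Historically the KPR ball-carving template yields padding $\poly(r)$ precisely because the minor one can extract from overlapping annuli is weak; getting constant padding with $\poly(r)$ sparsity is what made this problem hard.

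The paper takes a completely different route. It does \emph{not} carve balls in $G$ directly. Instead it invokes the buffered cop decomposition of \cite{CCLMST24} as a black box, obtaining a partition tree $\cT_G$ whose supernodes have shortest-path skeletons with at most $r-1$ leaves and a deterministic $\Delta/r$ buffer between non-adjacent supernodes. The sparse cover is then built by a recursive divide-and-conquer on $\cT_G$: at each step one selects a set $\cX$ of \emph{separator supernodes} (via a marking rule on the bags), grows clusters around the supernodes in $\bigcup_{X\in\cX}\bag X$, and recurses on the components of $\cT_G\setminus\cX$. The buffer property---not a minor-extraction argument---bounds the number of separator supernodes threatening any fixed vertex by $O(\rho\Delta/\gamma)=O(\rho r)$, and the marking rule guarantees the subtree-width drops by one at each level, capping recursion depth at $r-1$. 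Multiplying these with the $O(\rho w)$ net points per skeleton and the $\hat w$ supernodes per bag gives the $O(\rho^2 r^4)$ sparsity; setting $\rho=4/\eps$ yields $(8+\eps,O(r^4/\eps^2))$.
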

Note that for constant padding, \Cref{thm:cover} is an exponential improvement compared to the previous state of the art \cite{Fil24sparse}.
As a \SPCS~(sparse partition cover scheme) is in particular a sparse cover, by combining this improved sparse cover (\Cref{thm:cover}) with the reduction from \Cref{thm:coversToPaddedDecomp}, our main \Cref{thm:padded} follows.
Our \Cref{thm:padded} and \Cref{thm:cover} have numerous algorithmic applications. We mention some of them in \Cref{sec:apps}.

\subsection{Related Work and Additional Background}
\paragraph*{Minor structure theorem.} In their celebrated work on the structure theorem, Robertson and Seymour \cite{RS03} showed that minor-free graphs can be decomposed into ``basic components”: surface-embedded graphs, apices, vortices and clique-sums.
This decomposition provides an algorithmic methodology: solve the problem on planar graphs, and then generalize to richer structure step by step until  finally we get to minor-free graphs (see e.g. \cite{CFKL20} for an example).
Alas, the constants hiding in the structure theorem \cite{RS03} are enormous%
\footnote{Johnson \cite{Johnson87} estimated that the constant hiding in the structure theorem of \cite{RS03} is larger than $2\Uparrow(2\Uparrow((2\Uparrow\frac r2)+3))$ where $2\Uparrow t$ is the exponential tower function ($2\Uparrow0=1$ and $2\Uparrow t=2^{2\Uparrow(t-1)}$).\label{foot:RSconstant}}, 
making any algorithm following this path completely impractical. 
A significant advantage of the padded decomposition/sparse cover framework for algorithm design is that there are no enormous hidden constants, and the dependence on the minor size $r$ is reasonable. In this work we get the best possible dependence on $r$.

\paragraph*{Strong diameter} Recall that the strong diameter of a cluster $C$ is $\max_{u,v\in C}d_{G[C]}(u,v)$ the maximum pairwise distance in the induced graph.
While this paper is concerned with weak diameter, it is often more convenient to work with strong diameter, and in fact some applications indeed require strong diameter guarantee (e.g. for routing, spanners, etc.).
Filtser \cite{Fil19padded} (improving over \cite{AGGNT19}) showed that $K_r$-minor free graphs admit strong $(O(r),\Omega(r^{-1}))$-padded decomposition scheme. This remains the state of the art.
There been many previous works on strong sparse covers (see \Cref{tab:Covers} for summary).

\paragraph*{Hierarchies.} Padded decompositions were used to construct stochastic tree embeddings \cite{Bar96,FRT04,AN19}. Specifically, in \cite{Bar96} one samples independently padded decompositions in all possible scales and combines them into an HST to get expected distortion $O(\log^2n)$.
In follow-up works \cite{Bartal98,FRT04,Bartal04}, padded decomposition in different scales are sampled in a correlated manner to obtain the optimal $O(\log n)$ expected distortion (see \cite{AN19} for a strong diameter counterpart).
A stronger version of hierarchical padded decomposition was studied in the context of Ramsey type embedding, where with some probability a single vertex is padded in all possible scales simultaneously (see e.g. \cite{MN07,BGS16}, and \cite{ACEFN20} for a strong diameter counterpart).
Hierarchical sparse covers were studied in the context of clan embeddings (or multi-embeddings) \cite{BM04,FL21,Bar21,Fil21}, ultrametric covers \cite{FL22,Fil23,FGN24}, and sparse partitions \cite{BDRRS12,BCFHHR24}.

\paragraph*{Other metric spaces.}
Metric spaces with doubling dimension $\ddim$ are $O(\ddim)$-decomposable \cite{GKL03,ABN11,Fil19padded} and admit $(O(t),2^{\frac{\ddim}{t}}\cdot\ddim\cdot\log t)$ sparse covers (for arbitrary $t\ge 1$) \cite{Fil19padded}.
Interestingly, while this is tight even for $d$-dimensional Euclidean space $\R^d$, if one is interested only in a pair of vertices (instead of a ball) belonging to the same cluster, the padding parameter is improved to $O(\sqrt{d})$ \cite{CCGGP98}.
Recently it was shown that metrics with highway dimension $h$ are $O(\log h)$-decomposable \cite{FF25}.

\section{Technical Ideas}

\paragraph*{Cop Decomposition.} 
The starting point of our story is the padded decomposition by Abraham, Gavoille, Gupta, Neiman, and Talwar \cite{AGGNT19}: inspired by the cops-and-robbers technique of \cite{And86}, they construct a \emph{cop decomposition} for minor-free graphs.%
\footnote{Our presentation here actually follows the interpretation from \cite{Fil19padded}.}
This process creates \EMPH{supernodes} with special structure (rather than immediately constructing diameter $\Delta$ clusters). 
To create the first supernode, choose an arbitrary vertex $v_1$ and sample a radius $r_1\in[0,\Theta(\Delta)]$ using truncated exponential distribution (that is, exponential distribution conditioned on the value being in $[0,\Theta(\Delta)]$).
The first supernode is simply the ball $\eta_1\coloneqq B(v_1,r_1)$.
For the second supernode, choose an unclustered vertex $v_2$, and let $T_{\eta_2}$ be a shortest path from $v_2$ to a vertex neighboring $\eta_1$. 
The path $T_{\eta_2}$ is called the \EMPH{skeleton} of $\eta_2$. 
The supernode $\eta_2 \coloneqq B_{G\setminus\eta_1}(T_{\eta_2},r_2)$ is then created as a ball around its skeleton (with  radius $r_2$ sampled as previously).
In general, after creating supernodes $\eta_1,\dots,\eta_{i-1}$, we create the $i$'th supernode \EMPH{$\eta_i$} as follows: Choose an arbitrary connected component of unclustered vertices, denoted \EMPH{$\dom(\eta_i)$}, or the \EMPH{domain} of \EMPH{$\eta_i$}.
Let \EMPH{$\cK_{\eta_i}$}$=\{\eta_{j_1},\eta_{j_2},\dots,\eta_{j_k}\}$ be all the previously created supernodes with neighbors in $\dom(\eta_i)$.
The skeleton of $\eta_i$, denoted \EMPH{$T_{\eta_i}$}, is a shortest path tree in $\dom(\eta_i)$ rooted at an arbitrary vertex $v_i\in\dom(\eta_i)$ such that for each supernode $\eta_j\in\cK_{\eta_i}$, the skeleton $T_{\eta_i}$ contains some vertex neighboring $\eta_j$. 
In particular, the skeleton tree $T_{\eta_i}$ has at most $|\cK_{\eta_i}|$ leaves (not counting the root).
The supernode $\eta_i \coloneqq B_{\dom(\eta_i)}(T_{\eta_i},r_i)$ is than created as a ball around its skeleton (with radius $r_i$ sampled as previously). 
See \Cref{fig:CopDecompAnimation} for illustration of this process.
This process recursively partitions all the vertices of the graph into supernodes, and naturally induces a tree over the supernodes denoted \EMPH{$\cT_G$} (where $\eta_i$ is a child of the most recently created supernode in $\cK_{\eta_i}$). We call $\cT_G$ a \EMPH{partition tree}.
Additionally, with each supernode $\eta_i$, we associate a set of supernodes $\EMPH{$\bag{\eta_i}$} \coloneqq \cK_{\eta_i} \cup \set{\eta_i}$.
There is a naturally induced tree decomposition of the graph $G$ with the same structure as the partition tree $\cT_G$, where the bag associated with $\eta_i\in \cT_G$ consist of all the vertices in the supernodes of $\bag{\eta_i}$; see \Cref{fig:CopDecomp}.

\begin{figure}[h!]

\includegraphics[width=.24\textwidth,page=3]{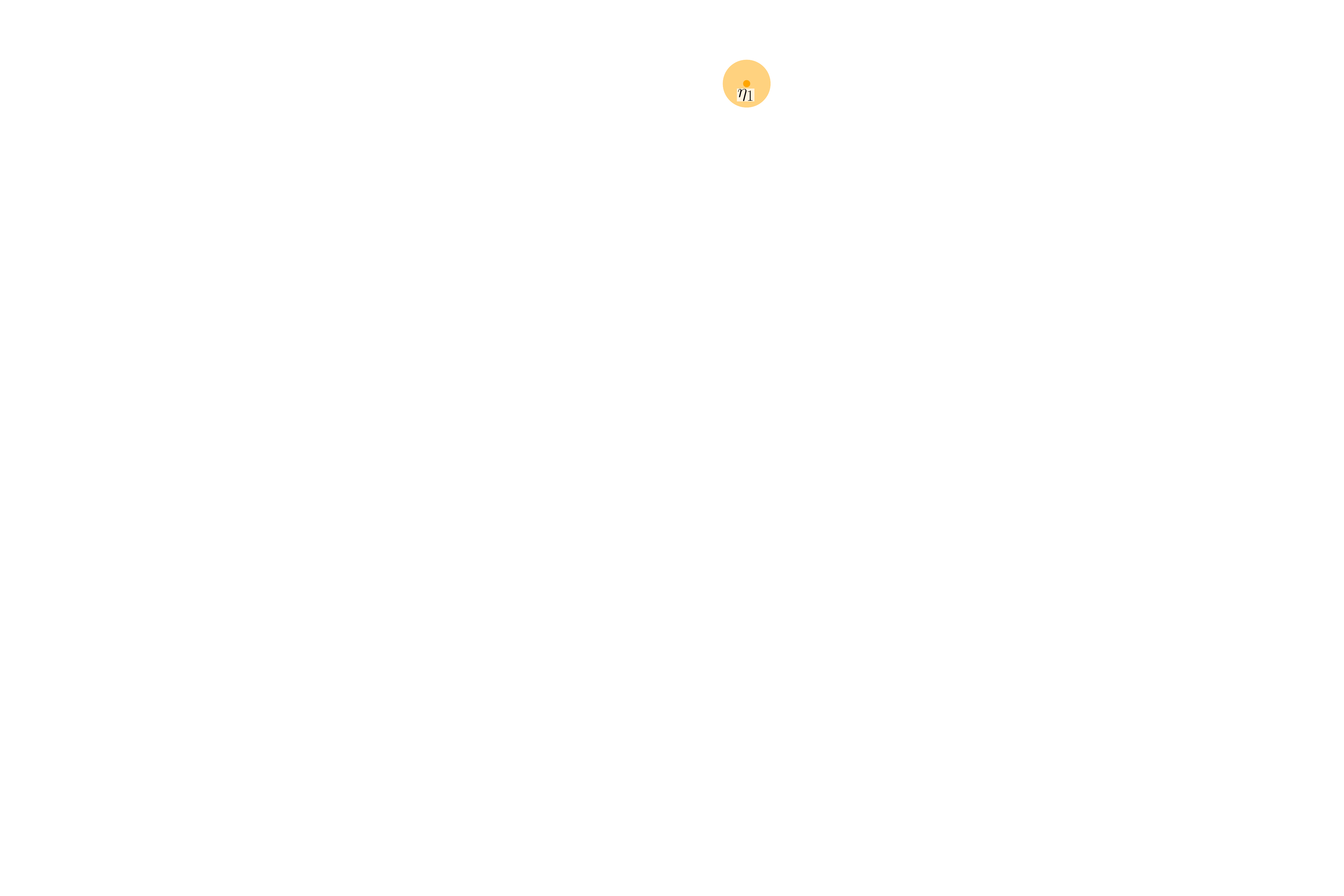}~~          \includegraphics[width=.24\textwidth,page=4]{fig/bcdViews}~~          \includegraphics[width=.24\textwidth,page=5]{fig/bcdViews}~~  \includegraphics[width=.24\textwidth,page=6]{fig/bcdViews}

	\caption{\textit{An animation of constructing supernodes on an (implict) planar graph, following the cop decomposition of \cite{AGGNT19}.
    Each supernode $\eta_i$ is constructed in a connected component $\dom(\eta_i)$ of $G\setminus\cup_{j<i}\eta_j$.
    The skeleton $T_{\eta_i}$ of $\eta_i$ is a shortest path tree in $\dom(\eta_i)$, with at least one vertex neighboring each previously-created supernode adjacent to $\dom(\eta_i)$. 
    In \cite{AGGNT19} the supernode $\eta_i$ is a ball around $T_{\eta_i}$ of radius at most $\Delta$. Later, we discuss buffered cop decomposition \cite{CCLMST24} which is a similar object (but with a different construction); there, $\eta_i$ is
    a connected set of vertices at distance $\le \Delta$ around $T_{\eta_i}$.}
    }
	\label{fig:CopDecompAnimation}
\end{figure}

\begin{figure}[h!]
	\begin{center}
         \includegraphics[width=.6\textwidth]{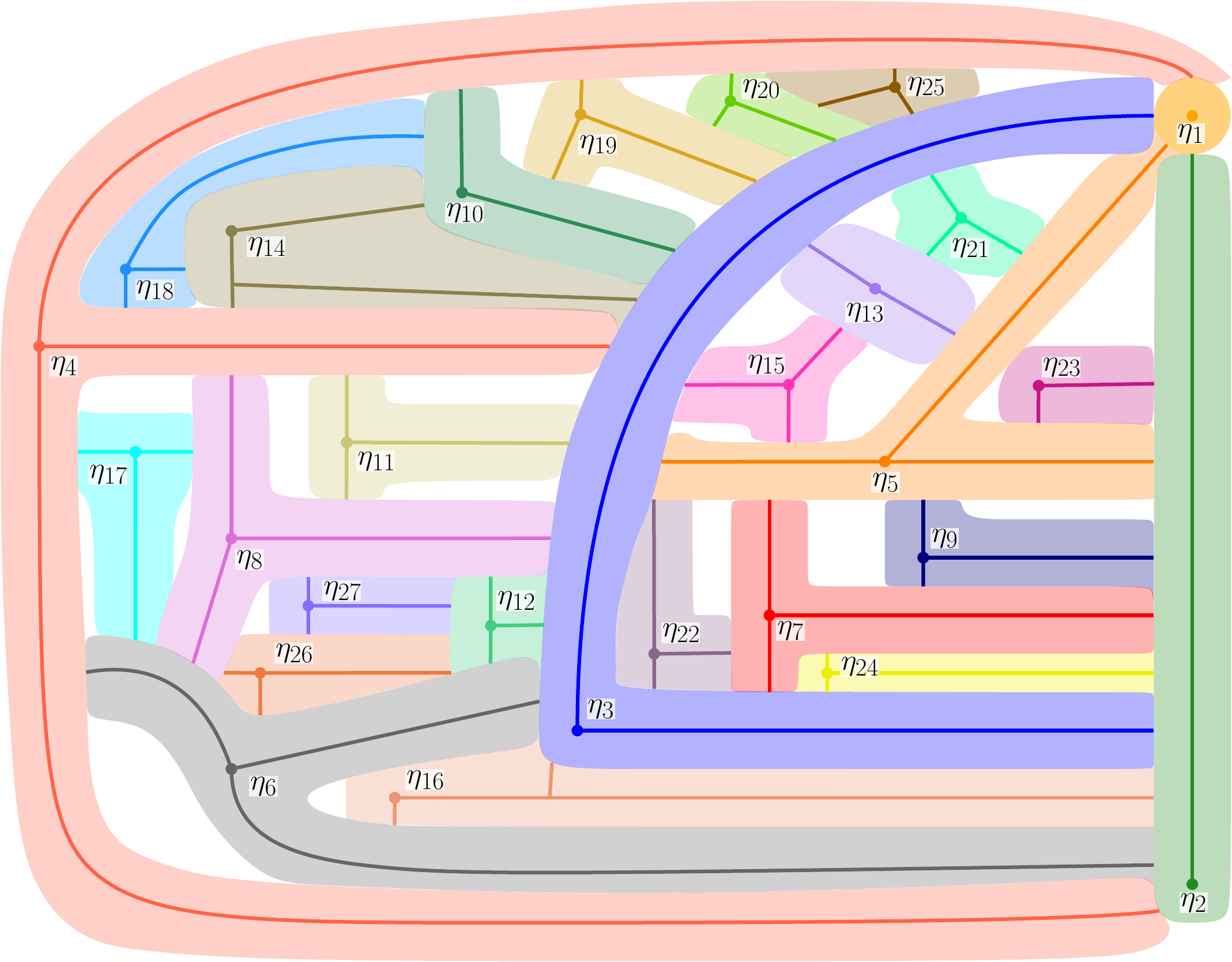}

        \vspace{1em}
         
        \includegraphics[width=\textwidth]{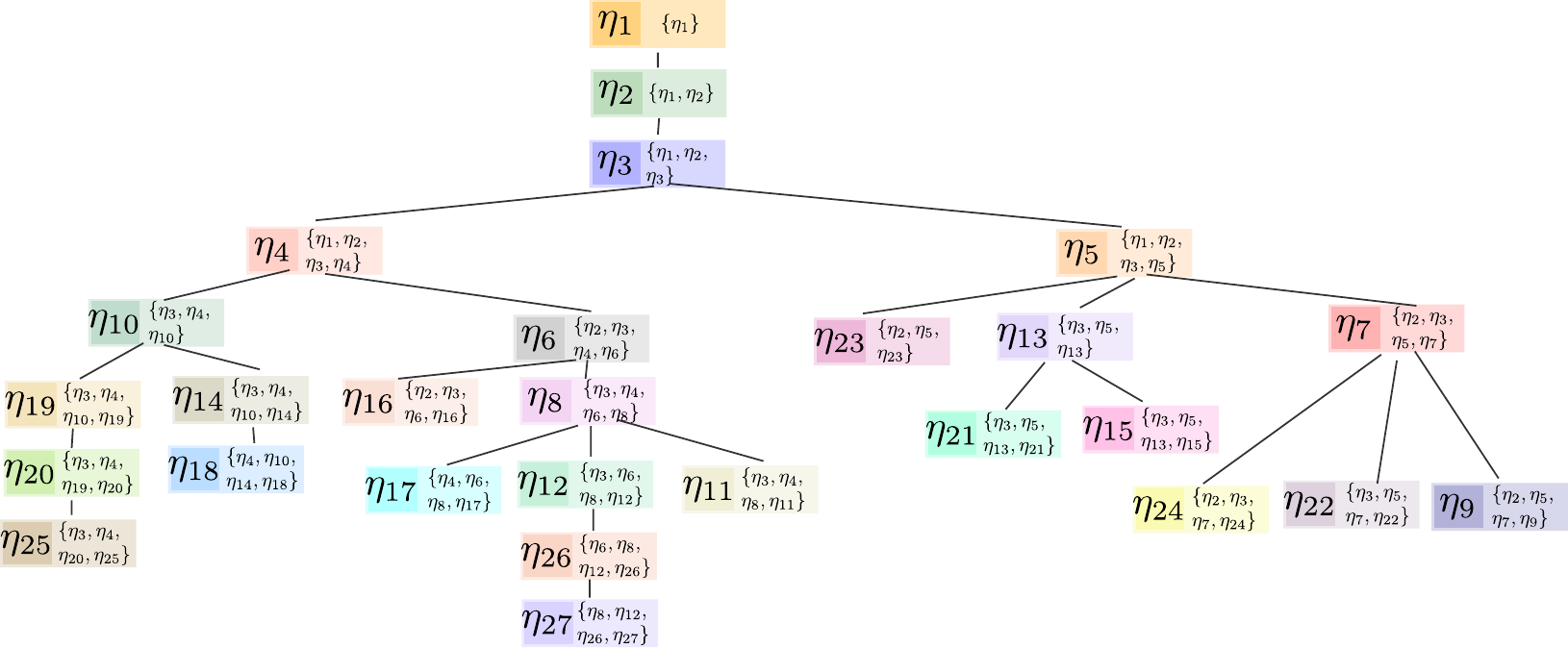} 
	\end{center}
	\caption{\textit{Top: Graph $G$ partitioned into supernodes of a cop decomposition.
    Bottom: Partition tree $\cT_G$ of the cop decomposition.
    The supernode $\eta_i$ is the child of the most-recently created supernode adjacent to $\dom(\eta_i)$.
    The set $\bag{\eta_i}$ contains $\eta_i$ and all previously-created supernodes adjacent to $\dom(\eta_i)$; it is illustrated to the right of each $\eta_i$ in the partition tree.
    The bags induce a tree decomposition of $G$.
    }}
	\label{fig:CopDecomp}
\end{figure}

The cop decomposition can be executed on an arbitrary graph. The property that makes it especially interesting for $K_r$-minor-free graphs is that the number of previously-created neighboring supernodes $k \coloneqq |\cK_{\eta_i}|$ is necessarily bounded by $r-2$ --- indeed, by contracting all the internal edges in $\eta_i$, and each of the supernodes in $\cK_{\eta_i}$ we will obtain a $K_{k+1}$ minor.
In particular, the skeleton $T_{\eta_i}$ is a shortest path tree with at most $r-2$ leaves.
While the supernodes themselves don't have any bound on their diameter, the existence of the skeleton guarantees that each supernode admits an $\left(O(\log r),\Omega(1),\Delta\right)$-padded decomposition \cite{AGGNT19,Fil19padded}.
As one can concatenate padded decompositions, in order to get a padded decomposition for the entire graph it is enough to analyze the probability that a fixed ball is fully contained in a single supernode.
Fix a ball $\EMPH{$B$} \coloneqq (v,\gamma\Delta)$.
A skeleton $T_{\eta_i}$ is called a \EMPH{threatening skeleton} if it has a positive probability to intersect $B$ (that is, if the distance from $T_{\eta_i}$ to $B$ in $\dom(\eta_i)$ is at most $O(\Delta)$).
Abraham \etal~\cite{AGGNT19} showed that the expected number of threatening skeletons is $2^{O(r)}$.
As a rule of thumb, the padding parameter is logarithmic in the number of threatening events,\footnote{Roughly speaking, if we were to use real exponential distribution to sample the supernodes radii, then the padding parameter would equal to the parameter of the exponential distribution. 
Truncated exponential distribution behaves similarly to real exponential distribution up to a small error. The errors from the samples of the different threateners accumulate, and in order to make their sum negligible, the parameter has to be at least logarithmic in the number of threateners.} 
and thus \cite{AGGNT19} were able to obtain $O(r)$ padding parameter.
Thus to improve the padding parameter one have to drastically reduce the number of threateners. Numerous attempts (in particular by the authors) were made to introduce different changes to the cop decomposition process in order to reduce the number of threateners, but to no avail.%
\footnote{A slightly different approach was presented in the full version of \cite{Fil19padded}. Here Filtser used real exponential distribution with parameter $\lambda$. It is then straightforward that the resulting padding parameter is $\Theta(\lambda)$. However, the radii of the supernodes is not bounded anymore. To fix this, at the end of the process, Filtser recurses on each supernode with radius larger than $\Omega(\Delta)$. For $\lambda=O(r)$ it then holds that a fixed vertex belongs to a supernode with radius $O(\Delta)$ with constant probability.}

\paragraph*{Padding Decomposition via Sparse Covers.}
When creating padded decomposition by iteratively carving balls (as in the cop decomposition process described above), each decision is irrevocable.
Sparse cover, on the other hand, is a much more ``forgiving'' object, as one might hope to cover a vertex with multiple different balls without making irrevocable decisions.
To exploit this, 
we reduce padded decomposition to sparse cover.

Filtser \cite{Fil19padded} proved that the existence of a ``sparse net'' implies padded decomposition. 
Specifically, consider a scenario where there is a subset $N\subseteq V$ of vertices (called a \EMPH{net}) such that every vertex $v$ has a net point at distance at most $d_G(v,N)\le \Delta$, and the number of net points in the ball $B_G(v,3\Delta)$ is at most $\tau$.
Here one can create clusters around the net points (for example curve balls with radii in $[\Delta,2\Delta]$ sampled using truncated exponential distribution).
Then for a ball $B=B(v,\gamma\Delta)$, all the ``threateners'' are net points at distance at most $(2+\gamma)\Delta$ from $v$, and in particular there are at most $\tau$ threateners.
Using a proof based on the rule of thumb from above, Filtser showed that such a net implies a $(O(\log \tau),\Omega(1),\Theta(\Delta))$-padded decomposition.
Now, suppose that we are given a graph with a $(\beta,s,\Delta)$-sparse cover $\cC$. 
Construct an auxiliary graph where for every cluster $C\in \cC$ we add an auxiliary vertex $v_C$ with an edge towards every vertex $u\in C$ of weight $w(\{v_{C},u\})=\max\left\{ \Delta,2\Delta-d_{G}(u,V\setminus C)\right\}$. 
Note that the weight of this edge decreases as the distance from $u$ to the boundary of $C$ increases.
In particular, every vertex $u$ will be at distance at most $2\Delta-\frac{\Delta}{\beta}$ from a cluster $C$ where it is padded, and at distance at least $2\Delta$ from any cluster $C'$ not containing $u$. 
Thus there are at most $\tau$ auxiliary vertices at distance strictly below $2\Delta$.
This gap then allows us to construct a padded decomposition where the auxiliary vertices are used as the net above, and thus each ball has at most $\tau$ threateners.
Morally, this is already enough to prove our \Cref{thm:coversToPaddedDecomp}.%
\footnote{Actually, in our proof we don't construct this auxiliary graph, as for technical reasons we cannot use \cite{Fil19padded} as a black box. Instead, we directly create clusters using the exponential clocks clustering of \cite{MPX13}.}

\paragraph*{Buffered Cop Decomposition.} 
Chang, Conroy, Le, Milenkovi\'{c}, Solomon, and Than
\cite{CCLMST24} introduced a deterministic construction of the cop decomposition with an additional \EMPH{buffer property}.
Their construction follows similar lines to \cite{AGGNT19}, except that the supernode $\eta_i$ is not simply a random ball around the skeleton $T_{\eta_i}$; rather, $\eta_i$ is a delicately-chosen subset of vertices in $\dom(\eta_i)$ at distance at most $\Delta$ from $T_{\eta_i}$.
We do not describe the specifics of their construction, which is somewhat involved, but instead use their result as a black box.
Chang \etal~obtain the following buffer property: for every ancestor supernode $\eta_j$ of $\eta_i$
(that is, such that $\eta_i\subset\dom(\eta_j)$), if $\eta_j\notin \bag{\eta_i}$ then $\dist_{\dom(\eta_j)}(\eta_j,\eta_i)\ge\frac{\Delta}{r}$.
In words, either there is a vertex in $\dom(\eta_i)$ (and thus also in $T_{\eta_i}$) with a neighbor in $\eta_j$, or the distance from $\eta_j$ to $\eta_i$ w.r.t. the domain of $\eta_j$ (the connected component where $\eta_j$ is created) is at least $\frac{\Delta}{r}$.
The randomized construction of \cite{AGGNT19} achieves something similar to this buffer property \emph{in expectation}%
\footnote{For the sake of intuition, fix a vertex $v$, and suppose that we were to sample the radii of supernodes using exponential distribution (as opposed to truncated exponential distribution). Consider a situation where we create the supernode $\eta_i$, where  $v\in\dom(\eta_i)$, and $\dom(\eta_i)$ is broken into several connected components, such that the connected component $\cA$ containing $v$ is disconnected from $\eta_j\in\bag{\eta_i}$. Then as the exponential distribution is memoryless, in expectation $r_i$ will grow by additional $\frac\Delta r$ factor, insuring that $\mathbb{E}[\dist_{\dom(\eta_i)}(\eta_j,v)]\ge\frac\Delta r$. The contribution of \cite{CCLMST24} is to choose $\eta_i$ (as well as other supernodes) so that for every such vertex $v$, deterministically $\dist_{\dom(\eta_j)}(\eta_j,v)]\ge\frac\Delta r$.}
and this is how they prove that the \emph{expected} number of threatening skeletons is at most $2^{O(r)}$.
See \Cref{def:buffered-cop} for a formal definition of buffered cop decomposition.

Filtser \cite{Fil24sparse} used the buffered cop decomposition of \cite{CCLMST24} to construct a $(O(1),O(1)^r)$ sparse cover scheme.
Specifically, for every supernode $\eta_i\in\cT_G$, construct a cluster $C_i \coloneqq B_{\dom(\eta_i)}(\eta_i,3\Delta)$.
While there is no bound on the diameter of the cluster $C_i$, one can use the skeleton $T_{\eta_i}$ to construct an $(\Omega(1),O(r),O(\Delta))$ sparse cover for $C_i$. Thus it is enough to prove the padding and sparsity properties with respect to the clusters $\{C_i\}_{\eta_i\in\cT_G}$.
To see the padding property, fix a ball $\EMPH{$B$}\coloneqq B(v,\Delta)$ and consider the supernode $\eta_i$ of minimum depth (with respect to $\cT_G$, i.e. the one created first) containing any vertex from $B$. 
By minimality and the triangle inequality, it holds that $B\subseteq C_i$. For sparsity, consider a vertex $v$ belonging to some supernode $\eta_i$. 
Using the deterministic $\frac\Delta r$ buffer, Filtser showed that there are at most $O(1)^r$ ancestor supernodes of $\eta_i$ at distance less than $3\Delta$ from $\eta_i$. It follows that $v$ is contained in only $O(1)^r$ clusters.
By applying our reduction (\Cref{thm:coversToPaddedDecomp}) on the sparse cover of \cite{Fil24sparse}, we obtain a padded decomposition with padding parameter $O(r)$, recovering the result of \cite{AGGNT19}. To go beyond that, we need better sparse covers.

\paragraph*{First Attempt: Sparse Covers Using Centroids.}
We describe a divide-and-conquer algorithm to construct a sparse cover with padding $O(1)$ and sparsity $\poly(r) \cdot \log n$. Consider the cop decomposition $\cT_G$. For every supernode $\eta$, $\bag{\eta}$ is a separator for $G$ consisting of at most $r-1$ supernodes. This suggests that a divide-and-conquer approach may be possible. Let \EMPH{$X$} be the \EMPH{centroid} supernode in $\cT_G$ 
(that is, every connected component of $\cT_G\setminus X$ contains at most $|\cT_G|/2$ supernodes)\footnote{While we usually denote supernodes with $\eta$, we will use $X$ to denote the centroid supernode.};
for every supernode $X' \in \bag{X}$, grow a cluster around $X'$; and then for each connected component $\cT_i$ of $\cT_G\setminus  X$, recurse on the graph induced by the vertices in the supernodes in $\cT_i$. Unfortunately, this doesn't work: because the skeleton $T_{X'}$ is only a shortest path tree in $\dom(X')$, we must only grow a cluster around $X'$ in $\dom(X')$. But growing clusters in $\dom(X')$ is not sufficient --- vertices of a single ball $B \coloneqq B(v, \Delta)$ might belong to both ancestor and descendant supernodes of $X'$; such a ball $B$ will neither be contained in the cluster around $X'$, nor in any graph induced by the supernodes in a connected component of $\cT_G \setminus X$. See \Cref{fig:recursive-cover-simple} for an illustration of this incorrect algorithm, and why it fails.
\begin{figure}[h!]
    \centering
    \includegraphics[width=0.65\linewidth]{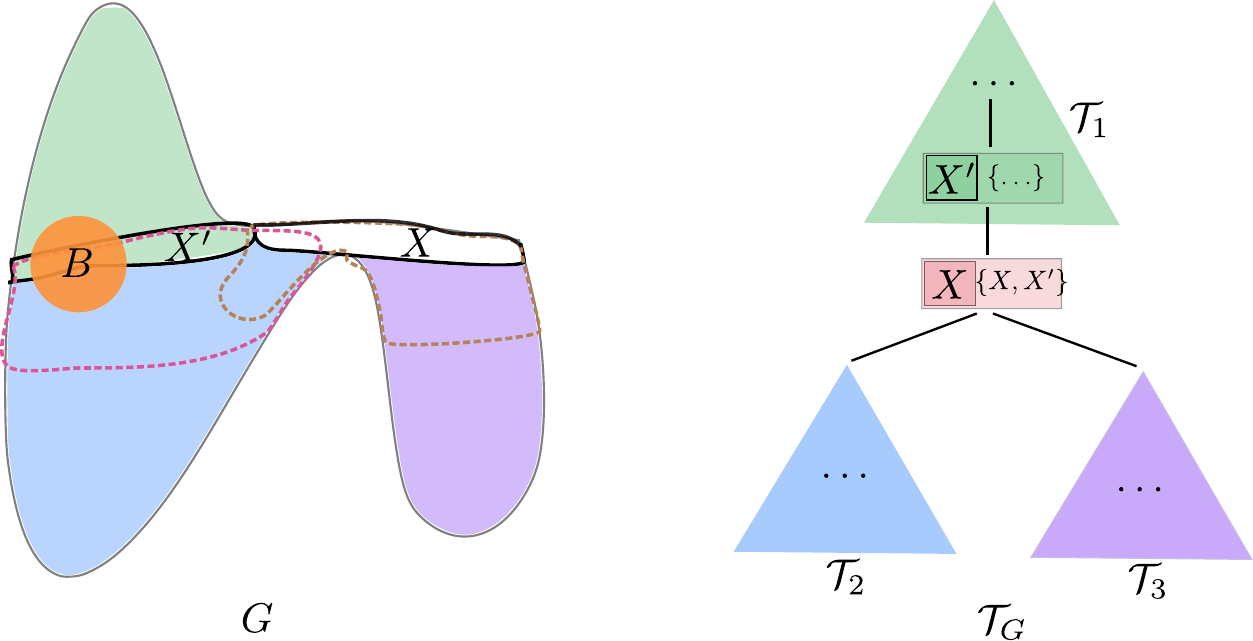}
    \caption{\textit{A styled depiction of the incorrect sparse cover algorithm on a graph $G$ (left) and its cop decomposition $\cT_G$ (right).
    On the right: Supernode $X$ is highlighted in red. Deleting $X$ from $\cT_G$ creates three connected components $\cT_1$, $\cT_2$, and $\cT_3$.
    On the left: The supernodes $\set{X, X'} = \bag{X}$ form a separator in $G$. The subgraphs induced by $\cT_1$, $\cT_2$, and $\cT_3$ are highlighted in $G$. Red and brown dashed lines outline a cluster grown around $X$ in $\dom(X)$ and a cluster grown around $X'$ in $\dom(X')$. The ball $B$ is not contained in either of the clusters, nor is it contained in a subgraph that is dealt with recursively.
    }}
    \label{fig:recursive-cover-simple}
\end{figure}

Perhaps surprisingly, one can fix this issue with a small tweak; the idea is illustrated in \Cref{fig:recursive-cover}. In the recursion, instead of simply recursing on a subtree $\cT$ of $\cT_G$, we will keep track of both a \EMPH{subtree $\cT$} of $\cT_G$ and a subset of \EMPH{active vertices $A$}. 
Crucially, the vertices in $A$ do not necessarily belong to a supernode in $\cT$, but every active vertex is within distance $2 \Delta$ of some supernode in $\cT$.
We construct a sparse cover for the active vertices $A$, using the tree $\cT$.
(Each created cluster will contain only active vertices.)
As before, we select a centroid supernode $X$ from $\cT$ and grow a cluster $C_{ X'} \coloneqq B_{\dom( X')}( X',4\Delta)\cap A$ around every supernode $X' \in \bag{X}$.
We will then recurse on every connected component $\cT_i$ of $\cT\setminus X$, after selecting a new set of active vertices $A_i$. 
The active vertex sets of the different connected components of $\cT\setminus X$ will be disjoint.
For an active vertex $v\in A$, let \EMPH{$\eta[v]$} be the supernode of minimum depth (with respect to $\cT$) such that $\dist_{\dom(\eta[v])}(\eta[v],v)\le2\Delta$.
Each vertex $v$ will join the set $A_i$ associated with the connected component $\cT_i$ of $\cT\setminus X$ that contains $\eta[v]$ (or $v$ will join no active set if $\eta[v]= X$).
We recursively create a sparse cover $\cC_i$ for each connected component $\cT_i$ and active set $A_i$.
The final cover will consist of the clusters 
$\left\{C_{ X'}\right\}_{ X'\in\bag{ X}}$ together with the clusters in $\cC_i$ for each recursive call; see \Cref{fig:recursive-cover}.

\begin{figure}[h!]
    \centering
    \includegraphics[width=0.65\linewidth]{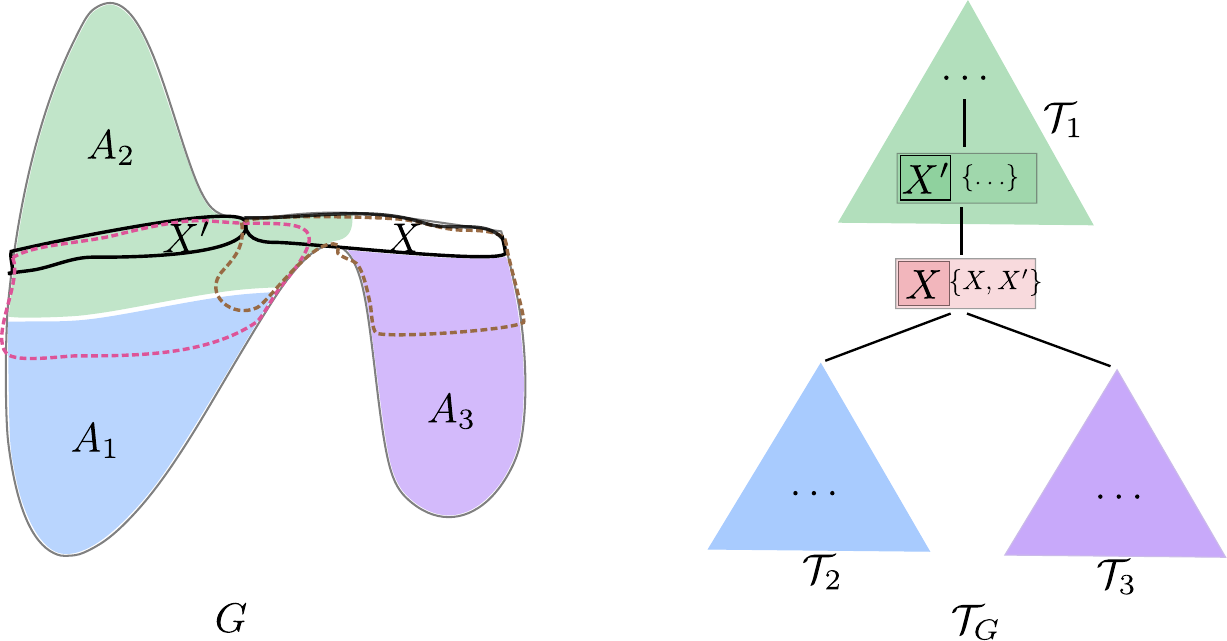}
    \caption{\textit{A stylized depiction of the divide-and-conquer algorithm for sparse cover. Initially all vertices are active. After deleting supernode $X$ from $\cT$, we recuse on each connected component $\cT_i$ of $\cT \setminus X$ (shown right) and its associated active vertex set $A_i$ (shown left).}}
    \label{fig:recursive-cover}
\end{figure}

To prove correctness, we consider an arbitrary ball $B \coloneqq B(v,\Delta)$ whose vertices are all active ($B\subseteq A$), and show that $B$ is fully contained in some cluster.
If $B$ is fully contained in some subset of active vertices $A_i$, then the recursive call to $(\cT_i,A_i)$ will insure that $B$ is fully contained in some cluster of $\cC_i$.
Otherwise, there will be vertices $x,y\in B$ such that the path from $\eta[x]$ to $\eta[y]$ in $\cT$ goes though the centroid supernode $X$.
Let $\EMPH{$v_1$}\in B$ be the vertex such that $\eta[v_1]$ is of minimum depth (among all supernodes $\{\eta[u]\}_{u\in B}$, with respect to $\cT$).
As $\cT$ can be viewed also as a tree decomposition of the graph, $\eta[v_1]$ is an ancestor of all the supernodes $\{\eta[u]\}_{u\in B}$, and in particular, $B\subseteq \dom(\eta[v_1])$.
From the properties of tree decomposition, the shortest path from $\eta[v_1]$ to $v_1$ must go through some supernode $X'\in\bag{X}$. It follows that the cluster $C_{ X'}=B_{\dom( X')}( X',4\Delta)\cap A$ contains the ball $B$, as required.
Finally as the depth of the recursion is at most $\log n$ (as each connected component of $\cT\setminus X$ has at most $|\cT|/2$ supernodes),
and in each level of the recursion we create at most $r-1$ clusters,
the sparsity of the created cover is at most $r\cdot\log n$.
While the diameter of each cluster may be large, the existence of the skeleton guarantees the existence of a $(\Omega(1), O(r), O(\Delta))$ sparse cover for the cluster.
This leads to a sparse cover scheme for $G$ with padding $O(1)$ and sparsity $\poly r \cdot O(\Delta)$;
in particular, due to \Cref{thm:coversToPaddedDecomp}, a padding decomposition with padding parameter $O(\log r+\log\log n)$ follows.

\paragraph*{Sparse Covers Using Separator Supernodes.}
To obtain the desired $O(\log r)$ padding parameter we need a sparse cover with sparsity $\poly(r)$.
Our divide and conquer algorithm above used a centroid supernode of a cop decomposition, and thus divides the number of supernodes in each step of the recursion by $2$, leading to recursion depth $O(\log n)$.
In order to avoid this dependence,
we use the \emph{buffered cop decomposition} of \cite{CCLMST24} combined with techniques inspired by the recent padded decomposition for bounded-treewidth graphs \cite{FFIKLMZ24}. Rather than removing a single centroid $X$, we remove simultaneously a set of \EMPH{separator supernodes $\cX$} from $\cT$. We show how to construct a set of separator supernodes such that 
(1) every vertex in $V(G)$ is ``threatened'' by $O(r)$ separator supernodes; and
(2) for every remaining connected component $\cT_i$ in $\cT\setminus\cX$, and for every supernode $\eta\in\cT_i$,
we have $|\bag{\eta} \cap \cT_i| \le |\bag{\eta} \cap \cT| - 1$ (that is, the size of the bag of $\eta$ ``with respect to $\cT$'' is reduced).
In particular, the depth of the recursion will be bounded by $r$.

We choose the set $\cX$ as follows.
The root supernode $X_{\rm rt}$ of $\cT$ joins $\cX$. 
All the supernodes $\eta$ such that $X_{\rm rt}\in\bag{\eta}$ become marked and will not join $\cX$.
In general, we pick an unmarked supernode $X$ of minimum depth and add it to $\cX$.
All the supernodes $\eta$ such that $\bag \eta$ contains any supernode in $\bag{X} \cap \cT$ becomes marked and will not join $\cX$. 
We continue with the process until all the supernodes in $\cT$ are marked. 
See \Cref{fig:sep-supernodes} for an illustration.
The buffer property of buffered cop decomposition can be used to show that every vertex in $G$ is threatened by few supernodes in $\cX$.
Moreover, for every connected component $\cT_i$ of $\cT\setminus\cX$, and every supernode $\eta$ of $\cT_i$, there is some supernode in $\bag{\eta} \cap \cT$ that is separated from $\eta$ in $\cT$ by $\cX$.
Thus the size of $\bag{\eta}\cap\cT$ is reduced in every recursive call, and the depth of the recursion will be at most $r-1$.
To construct sparse cover, we use the same divide-and-conquer algorithm described before, but we divide based on the set $\cX$ instead of a single centroid node $X$. This leads to a sparse cover with $O(\poly r)$ sparsity, and thus to $O(\log r)$ padding parameter.

\section{Preliminaries}
$\tilde{O}$ notation hides poly-logarithmic factors, that is $\tilde{O}(g)=O(g)\cdot\polylog(g)$. All logarithms are at base $2$ (unless specified otherwise), $\ln$ stand for the natural logarithm.

We consider connected undirected weighted graphs $G=(V,E,w)$, where 
$w: E \to \R_{\ge 0}$ is a weight function.
We also use $V(G)$ and $E(G)$ to denote the vertices and edges of $G$, respectively.
We say that vertices $v,u\in V(G)$ are neighbors if $\{v,u\}\in E(G)$, and two clusters $C,C'\subseteq V(G)$ are neighbors if there are vertices $v\in C$, $u\in C'$, such that $v$ and $u$ are neighbors. 
Given a path $P=(v_0,v_1,\dots,v_k)$, $\|P\|=\sum_{i=1}^{k}w(\{v_{i-1},v_i\})$ denotes its length.
Let $\dist_{G}$ denote the shortest path metric in $G$, that is $\dist_G(u,v)=\min\left\{\|P\|\mid P\mbox{ is a path from }u\mbox{ to }v\right\}$.
Let $B_G(v,r)=\{u\in V(G)\mid d_G(v,u)\le r\}$ denote the closed ball of radius $r$ around $v$. 
For a vertex $v\in V(G)$ and a subset $A\subseteq V(G)$, let $d_{G}(x,A):=\min_{a\in A}d_G(x,a)$,
where $d_{G}(x,\varnothing)= \infty$. 
A ball around a set $B_G(A,r)$ is defined similarly.
For two subsets $A_1,A_2,\subseteq V(G)$, their distance is defined to be 
$d_{G}(A_1,A_2):=\min_{x\in A_1,y\in A_2}d_G(x,y)$.
For a subset of vertices
$A\subseteq V(G)$, $G[A]$ denotes the induced graph on $A$,
and $G\setminus A := G[V(G)\setminus A]$.

A graph $H$ is a \EMPH{minor} of a graph $G$ if we can obtain $H$ from
$G$ by edge deletions/contractions, and isolated vertex deletions.  A graph
family $\mathcal{G}$ is \EMPH{$H$-minor-free} if no graph
$G\in\mathcal{G}$ has $H$ as a minor.
Some examples of minor free graphs are planar graphs ($K_5$ and $K_{3,3}$ minor-free), outer-planar graphs ($K_4$ and $K_{3,2}$ minor-free), series-parallel graphs ($K_4$ minor-free) and trees ($K_3$ minor-free).

A \EMPH{tree decomposition} of a graph $G$ is a  tree $\mathcal{T}$ whose nodes are 
subsets $S$ of $V(G)$ called \emph{bags}, such that: (i) $\cup_{S\in V(\mathcal{T})} S = V(G)$, (ii) for every edge $(u,v) \in E(G)$, there exists a bag $S$ in $V(\cT)$ such that $\{u,v\}\subseteq S$, and (iii) for every $u\in V(G)$, the bags containing $u$ induces a connected subtree of $\mathcal{T}$. The \emph{width} of $\mathcal{T}$ is $\max_{S\in V(\mathcal{T})}\{|S|\}$-1. The \emph{treewidth} of $G$ is the minimum width among all possible tree decompositions of $G$.

\subsection{Buffered cop decomposition.}
The buffered cop decomposition was introduced by \cite{CCLMST24}, building on the work of \cite{And86, AGGNT19}. We recall their definition (see \Cref{fig:CopDecomp} for an illustration).
\begin{definition}
    A \EMPH{supernode} $\eta$ with \EMPH{skeleton} $T_\eta$ and \EMPH{radius} $\Delta$ is set of vertices $\eta \subseteq V(G)$ and a tree $T_\eta$, such that (i) $T_\eta$ is contained in $\eta$, and (ii) every vertex $\eta$ is within distance $\Delta$ from $T_\eta$, where distance is measured w.r.t. shortest-paths in $G[\eta]$.
\end{definition}
\begin{definition}
    A \EMPH{partition tree} for $G$ is a rooted tree $\cT_G$ whose vertices are supernodes of $G$, such that $V(\cT_G)$ is a partition of $V(G)$.
    For any supernode $\eta \in V(\cT_G)$, we define the \EMPH{domain $\dom(\eta)$} to be the subgraph of $G$ induced by the union of all vertices in supernodes in the subtree of $\cT_G$ rooted at $\eta$.
    We define \EMPH{$\bag \eta$} to be the set containing $\eta$ and all ancestor supernodes $\eta'$ of $\eta$ such that there is an edge between $\eta'$ and $\dom(\eta)$ in $G$. 
\end{definition}

\begin{definition}
\label{def:buffered-cop}
    A \EMPH{$(\Delta, \gamma, w)$-buffered cop decomposition} for $G$ is a partition tree $\cT_G$ for $G$, such that every supernode $\eta \in V(\cT_G)$ satisfies the following properties:
    \begin{itemize}
        \item \textnormal{[Supernode radius.]} The supernode $\eta$ has radius at most $\Delta$.
        
        \item \textnormal{[Shortest-path skeleton.]} The skeleton $T_\eta$ of $\eta$ is an SSSP tree in $\dom(\eta)$ with at most $w-1$ leaves (not counting the root).
        
        \item \textnormal{[Supernode buffer.]} Let $\eta'$ be another supernode that is an ancestor of $\eta$. Then either $\eta'\in\bag\eta$ (i.e. $\dom(\eta)$ and $\eta'$ are adjacent in $G$), or for every vertex $v$ in $\dom(\eta)$, we have $\dist_{\dom(\eta')}(v, \eta') \ge \gamma$.%
        \item \textnormal{[Tree decomposition.]}
        $\lvert \bag \eta \rvert \le w$.
        Further, define $\hat B_\eta \subseteq V(G)$ to be the set of vertices contained in some supernode in $B_\eta$, that is, $\hat B_\eta = \bigcup_{X \in \bag \eta} X$; and define the \EMPH{expansion} of $\cT_G$, denoted \EMPH{$\hat \cT_G$}, to be a tree isomorphic to $\cT_G$ with vertex set $\set{\hat B_\eta}_{\eta \in V(\cT_G)}$. Then $\hat \cT_G$ is a tree decomposition of $G$.

    \end{itemize}
\end{definition}

\begin{lemma}[Theorem 3.15 of \cite{CCLMST24}]
\label{lem:cclmst-cop}
Let $G$ be a $K_r$-minor-free graph, and let $\Delta$ be a positive number. Then $G$ admits a $(\Delta, \Delta/r, r-1)$-buffered cop decomposition.   
\end{lemma}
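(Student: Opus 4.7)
The plan is to construct the partition tree iteratively in the cop-decomposition style of~\cite{AGGNT19}, but to replace the random radius ball around each skeleton with a deterministic, adaptively grown supernode that enforces the buffer property.

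First, I would establish the skeletal backbone. Starting from $G$, repeatedly pick an arbitrary connected component $\cA$ of the currently unclustered vertices; let $\cK$ be the set of previously-constructed supernodes with a neighbor in $\cA$; and observe that $|\cK|\le r-2$, since contracting $\cA$ together with each supernode in $\cK$ would otherwise exhibit a $K_r$-minor. Build an SSSP tree $T$ rooted at an arbitrary vertex of $\cA$ whose leaves reach a vertex adjacent to each member of $\cK$; this needs at most $r-2$ leaves. Attach the new supernode $\eta$ as a child of the most recently constructed member of $\cK$. This setup already secures the SSSP-skeleton property, the bag-size bound $|\bag \eta|=|\cK \cup \{\eta\}|\le r-1$, and (by the usual cop-decomposition argument) the induced tree-decomposition property.

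Next, determine the vertex set of $\eta$ by sweeping outward from $T$ while monitoring buffer constraints. Conceptually, as one grows a ball of radius increasing from $0$ toward $\Delta$ around $T$ in $\dom(\eta)$, additional vertices become candidates to enter $\eta$. The rule is to force a vertex $v$ into $\eta$ whenever leaving it out would allow a future descendant subproblem to come within $\Delta/r$ of some supernode $\eta'\in\cK$ (measured in $\dom(\eta')$) without being adjacent to it. Absorbing $v$ either creates the needed adjacency or removes the offending short path, and thereby discharges the buffer constraint $\eta'$ imposes on future descendants. The key amortization is that at most $r-2$ ancestors in $\cK$ are simultaneously demanding buffering, each capable of forcing at most $\Delta/r$ of additional growth, so the total extension beyond the base skeleton stays below $(r-2)\cdot(\Delta/r)<\Delta$.

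Finally, verify the four properties of \Cref{def:buffered-cop}. Radius at most $\Delta$ and the shortest-path skeleton with at most $r-2$ leaves follow directly from the construction. The buffer property is the inductive payoff: if an ancestor $\eta'\notin\bag \eta$ were to lie within $\Delta/r$ of some vertex $v\in\dom(\eta)$ measured in $\dom(\eta')$, then at the step $\eta'$ was built the growth rule above would have absorbed $v$ (or a predecessor on the short path) into $\eta'$ or into a supernode adjacent to $\eta'$, contradicting the survival of $v$ into $\dom(\eta)$. Bag size and the tree decomposition were established above.

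The main obstacle is exactly the amortization $(r-2)\cdot(\Delta/r)<\Delta$: certifying that the cascading, deterministic demands from multiple ancestors never force the supernode radius past $\Delta$. This requires a carefully maintained invariant on the pushed boundary of each already-built supernode together with a charging scheme that assigns every unit of forced extension to a distinct ancestor whose buffer is being protected. This delicate bookkeeping is precisely what~\cite{CCLMST24} carries out in detail, and we invoke the result as a black box.
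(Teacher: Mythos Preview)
The paper does not prove this lemma at all: it is stated as ``Theorem~3.15 of~\cite{CCLMST24}'' and used as a black box, with the explicit remark that the construction is ``somewhat involved'' and its specifics are not described. Your proposal ultimately does the same thing---you end by invoking~\cite{CCLMST24} as a black box---so in that sense you are aligned with the paper.

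The proof sketch you add on top is a reasonable high-level outline of the~\cite{CCLMST24} strategy (iterative cop-style supernode construction, SSSP skeleton with $\le r-2$ leaves giving $|\bag\eta|\le r-1$, and deterministic buffer-enforcing growth amortized against the at most $r-2$ ancestors in $\cK$). You correctly identify the crux as the charging argument that bounds total forced growth by $(r-2)\cdot(\Delta/r)<\Delta$, and you are honest that this bookkeeping is the nontrivial part you are not carrying out. Since the paper itself offers no such sketch, your write-up goes beyond what the paper provides, but there is nothing to compare it against here.
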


For a given $(\Delta, \gamma, w)$-buffered cop decomposition $\cT_G$, for any supernode $\eta$ in $V(\cT_G)$, we let \EMPH{$\net \eta$} denote some (arbitrary) \EMPH{$\Delta$-net} of the skeleton $T_\eta$.
That is, $\net \eta$ is a subset of vertices from $T_\eta$ such that the distance between every two net points $u,v\in\net{\eta}$ satisfies $\dist_{G[T_\eta]}(u,v)>\Delta$, and for every vertex $u\in T_\eta$ there is a net point $v\in \net{\eta}$ such that $\dist_{G[T_\eta]}(u,v)\le\Delta$. 
The following observation follows almost immediately from [shortest-path skeleton] property (see e.g. Claim 1 of \cite{Fil24sparse}).
\begin{observation}
\label{obs:net}
    Let $\cT$ be a $(\Delta, \gamma, w)$-buffered cop decomposition, and let $\eta$ be a supernode in $\cT$. For any number $\alpha \ge 1$ and any vertex $v\in\dom(\eta)$, there are at most $O(\alpha \cdot w)$ net points $p \in \net \eta$ such that $\dist_{\dom(\eta)}(v, p) \le \alpha \cdot \Delta$.
\end{observation}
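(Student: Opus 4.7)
The plan is to exploit the fact that $T_\eta$ is an SSSP tree with at most $w-1$ leaves, and reduce the counting to a per-path estimate. Fix $v \in \dom(\eta)$ and $\alpha \ge 1$, let $r_\eta$ denote the root of $T_\eta$, and let $S := \{p \in \net \eta \mid \dist_{\dom(\eta)}(v,p) \le \alpha\Delta\}$ be the set whose size we want to bound. I would decompose $T_\eta$ into root-to-leaf paths $P_1,\dots,P_{w-1}$ (one per leaf). Every vertex of $T_\eta$ lies on at least one $P_i$, so $S = \bigcup_i S_i$ where $S_i = S \cap P_i$, and by the union bound it suffices to show $|S_i| = O(\alpha)$ for each $i$.

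The key observation is that the [Shortest-path skeleton] property aligns distances inside $T_\eta$ with distances from $r_\eta$ measured in $\dom(\eta)$. Concretely, for any two vertices $p,q$ on the same root-to-leaf path $P_i$, the unique path between them in $T_\eta$ is a sub-path of $P_i$, so
$\dist_{P_i}(p,q) = \dist_{G[T_\eta]}(p,q) = |\dist_{\dom(\eta)}(r_\eta,p) - \dist_{\dom(\eta)}(r_\eta,q)|$.
Combining with the triangle inequality in $\dom(\eta)$ applied at $v$, every $p \in S$ satisfies $|\dist_{\dom(\eta)}(r_\eta,p) - \dist_{\dom(\eta)}(r_\eta,v)| \le \alpha\Delta$, so any two $p,q \in S_i$ lie within $P_i$-distance at most $2\alpha\Delta$ of each other.

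On the other hand, by the $\Delta$-net property distinct net points satisfy $\dist_{G[T_\eta]}(p,q) > \Delta$, so consecutive points of $S_i$ sorted along $P_i$ are at $P_i$-distance strictly greater than $\Delta$. Since the total $P_i$-span of $S_i$ is at most $2\alpha\Delta$, this forces $|S_i| \le 2\alpha + 1$, and summing over $i$ yields $|S| \le (w-1)(2\alpha+1) = O(\alpha \cdot w)$, as required.

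I do not expect any real obstacle here: once the root-to-leaf decomposition is in place, the SSSP property does all the work. The one point to be careful about is that the net property is phrased with respect to $\dist_{G[T_\eta]}$ while the hypothesis on $v$ is phrased with respect to $\dist_{\dom(\eta)}$, but this mismatch is precisely what the shortest-path-tree property reconciles, since for vertices on a common root-to-leaf path $P_i$ the two metrics coincide through their distances to $r_\eta$.
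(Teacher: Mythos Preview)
Your argument is correct and is exactly the standard one the paper has in mind (the paper does not spell out a proof, only citing Claim~1 of \cite{Fil24sparse}, which proceeds in the same way): decompose the SSSP tree $T_\eta$ into at most $w-1$ root-to-leaf shortest paths, use the triangle inequality at $v$ together with the SSSP property to confine $S_i$ to an interval of $P_i$-length $\le 2\alpha\Delta$, and then use the $\Delta$-separation of net points to bound $|S_i|\le 2\alpha+1$.
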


\section{From buffered cop decomposition to sparse cover}
This section is devoted to proving the following theorem.
\begin{theorem}
\label{thm:BufferedToSparseCover}
    Suppose that graph $G$ has a $(\Delta, \gamma, w)$-buffered cop decomposition. Then for every $\rho \ge 1$,
    graph $G$ admits an
    $\left(8+\frac4\rho, O(\rho^2\cdot\frac{\Delta}{\gamma} \cdot w^3), (4+8\rho)\Delta\right)$-sparse partition cover.
\end{theorem}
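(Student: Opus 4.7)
The plan is a divide-and-conquer on the buffered cop decomposition $\cT_G$, mirroring the strategy outlined in Section 2. Each recursive call processes a pair $(\cT, A)$, where $\cT$ is a subtree of $\cT_G$ and $A \subseteq V(G)$ is a set of \emph{active vertices}; the invariant is that for every $v \in A$ there is some supernode $\eta \in V(\cT)$ with $\dist_{\dom(\eta)}(\eta, v) \le (2+2\rho)\Delta$. Write $\eta[v]$ for the minimum-depth such supernode. The top-level call is $(\cT_G, V(G))$. Within a call I would (a) choose a set $\cX \subseteq V(\cT)$ of \emph{separator supernodes}, (b) grow a cluster around each supernode in $\bigcup_{X \in \cX}(\bag X \cap V(\cT))$ and refine it via the skeleton's net, then (c) split the remaining active vertices by the components of $\cT \setminus \cX$ and recurse.

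The separator $\cX$ is chosen greedily as in Section 2: repeatedly take an unmarked supernode $X \in V(\cT)$ of minimum depth, add it to $\cX$, and mark every $\eta \in V(\cT)$ with $\bag \eta \cap \bag X \cap V(\cT) \ne \emptyset$. This guarantees (i) in every component $\cT_i$ of $\cT \setminus \cX$, for each $\eta \in V(\cT_i)$ the set $\bag \eta \cap V(\cT_i)$ is strictly smaller than $\bag \eta \cap V(\cT)$, so the recursion depth is at most $w$; and (ii) via the [Supernode buffer] property, separator supernodes relevant to any one vertex are spaced $\gamma$ apart in the appropriate domain, so only $O(\rho\Delta/\gamma)$ of them can lie within distance $O(\rho\Delta)$ of it. For each $X \in \cX$ and each $X' \in \bag X \cap V(\cT)$ I would grow $\hat C_{X'} \coloneqq B_{\dom(X')}(X', (2+4\rho)\Delta) \cap A$. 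Since $\dom(X')$ may be arbitrarily large, each $\hat C_{X'}$ is refined by Voronoi assignment to the $\Delta$-net $\net{X'}$ of the skeleton $T_{X'}$ (ties broken canonically), yielding pieces $C_{X', p}$ whose diameter is at most $(4+8\rho)\Delta$ by the triangle inequality together with the $\Delta$-spacing of the net. Active vertices $v$ with $\eta[v]$ in component $\cT_i$ form the active set $A_i$ of the recursive call on $(\cT_i, A_i)$.

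For padding, fix a ball $B = B_G(v, \rho\Delta) \subseteq A$ and let $v_1 \in B$ minimize the depth of $\eta[v_1]$. The [Tree decomposition] property makes $\eta[v_1]$ an ancestor in $\cT_G$ of every $\eta[u]$ for $u \in B$, and forces $B \subseteq \dom(\eta[v_1])$. If all $\eta[u]$ sit in a single component $\cT_i$, then $B \subseteq A_i$ and recursion handles it; otherwise some $\eta[u]$ is separated from $\eta[v_1]$ by $\cX$, and the $\cT_G$-path from $\eta[v_1]$ to $\eta[u]$ crosses some $X' \in \bag X \cap V(\cT)$ with $X \in \cX$. Combining $\dist_{\dom(\eta[v_1])}(v_1, \eta[v_1]) \le (2+2\rho)\Delta$ with $\dist_G(v_1, u) \le 2\rho\Delta$, and using the tree-decomposition fact that the geodesic from $v_1$ to $u$ inside $\dom(\eta[v_1])$ must hit $X'$, one derives that every point of $B$ lies within distance $(2+4\rho)\Delta$ of $X'$ inside $\dom(X')$, i.e.\ $B \subseteq \hat C_{X'}$. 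Because the diameter of $B$ is at most $2\rho\Delta$ and the net is $\Delta$-spaced, all of $B$ then falls within a single Voronoi piece $C_{X', p}$. Sparsity then follows by multiplying (a) the $O(\rho\Delta/\gamma)$ threatening ancestors (buffer property), (b) the $w$ members of one $\bag X$, (c) the $O(\rho w)$ candidate net points within range (\Cref{obs:net}), and (d) the $O(w)$ recursion levels, yielding the claimed $O(\rho^2 \cdot \Delta/\gamma \cdot w^3)$. Clusters are organized into partitions of $V(G)$ by grouping according to the role of $X'$ in $\bag X$ and the net-point index, with singletons filling out inactive vertices.

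The hard part will be the distance-arithmetic lemma that routes the geodesic from $v_1$ to $u$ through $X'$ and then bounds $\dist_{\dom(X')}(v_1, X')$: the distances on the two sides of this argument are measured in different domains, so one must use [Tree decomposition] and [Supernode buffer] together to transfer distances from $\dom(\eta[v_1])$ into $\dom(X')$, and to show that tree-level separation by $\cX$ translates into graph-level separation witnessed by $X'$. Pinning the constants so that the thresholds $(2+2\rho)\Delta$, $(2+4\rho)\Delta$, $(4+8\rho)\Delta$, and the ball radius $\rho\Delta$ balance to produce padding exactly $8 + 4/\rho$ is the other delicate piece.
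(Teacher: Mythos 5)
Your high-level architecture matches the paper's: the greedy separator-supernode selection (with the [supernode buffer] and [tree decomposition] properties yielding bounded threateners and a subtree-width drop of one per recursion level), the recursion on components of $\cT \setminus \cX$ with active vertex sets defined via a highest supernode $\eta[v]$, and the use of $\net{X'}$ to break a possibly huge cluster into bounded-diameter pieces. However, there is a genuine gap in how you break up $\hat C_{X'}$, and it is exactly the step that makes padded/sparse decompositions nontrivial.

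You propose refining $\hat C_{X'} = B_{\dom(X')}(X',(2+4\rho)\Delta)\cap A$ by \emph{Voronoi assignment} to the $\Delta$-net $\net{X'}$ and then claim that any ball $B$ of radius $\rho\Delta$ ``falls within a single Voronoi piece'' because the net is $\Delta$-spaced. This is false: Voronoi cells of a $\Delta$-net have no padding guarantee whatsoever. A ball of radius $\rho\Delta \geq \Delta$ can (and typically will) straddle the boundary between the cells of two net points $p,p'$ with $\dist(p,p')$ just over $\Delta$, so there need not be any single cell $C_{X',p}$ that swallows $B$. This is the same reason a single deterministic Voronoi partition never gives a padded decomposition. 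The paper avoids this by not partitioning $\hat C_{X'}$ at all: it creates a separate cluster $C_{X,p} \coloneqq B_{\dom(X')}(p,(2+4\rho)\Delta) \cap \dom(X) \cap A$ for \emph{each} net point $p$. These balls overlap, so a ball $B$ that would straddle a Voronoi boundary is still fully inside the ball around one of the nearby net points; the sparse-\emph{partition} structure is then recovered a posteriori by a greedy coloring argument using the set $\Gamma(C_{Y,q})$ of potentially intersecting clusters. Note also that even your diameter bound fails for Voronoi cells: a vertex of $\hat C_{X'}$ can be at distance up to $(2+4\rho)\Delta + 2\Delta$ from its nearest net point (going via $X'$ and then along the skeleton), so the cell diameter is closer to $(8+8\rho)\Delta$, not $(4+8\rho)\Delta$ as claimed; the paper's balls of radius $(2+4\rho)\Delta$ around individual net points give diameter $(4+8\rho)\Delta$ directly. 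Finally, you omit the intersection with $\dom(X)$ in the cluster definition, which the paper uses to keep clusters inside the domain of the selected separator supernode $X$; this matters both for sparsity accounting and for the ``$B \subseteq \dom(X)$'' step in the covering proof.
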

The sparse partition cover of \Cref{thm:BufferedToSparseCover} guarantees that every ball of radius \EMPH{$\rho\Delta$} is contained in some cluster of the sparse partition cover.
For the sake of simplicity, one can choose $\rho =1$ to construct a $(12, O(w^3 \cdot \Delta/\gamma), 12\Delta)$-sparse partition cover. To optimize the padding parameter, one can choose $\rho = 4/\e$, and thereby get a $(8+\e, O(\e^2 \cdot w^3 \cdot \Delta/\gamma), O(\Delta/\e))$-sparse partition cover.
Given a $K_r$-minor-free graph $G$ and $\Delta>0$, by \Cref{lem:cclmst-cop}, $G$ admits a $(\Delta, \Delta/r, r-1)$-buffered cop decomposition. 
By fixing $\rho=4/\eps$, \Cref{thm:BufferedToSparseCover} implies that $G$ admits a $(8+\eps, O(r^4/\eps^2))$-\SPCS, as claimed in \Cref{thm:cover}.

The proof of \Cref{thm:BufferedToSparseCover} is divided into two subsection. 
In \Cref{subsec:SeparatorSupernodes} we present the procedure \textsc{SeparatorSupernodes}(\cT) which find a collection of separator supernodes $\cX$ in $\cT$ such that every vertex is ``close'' only to a small number of supernodes, while the removal of this supernodes breaks $\cT$ into connected components such that the width of the associated tree decomposition is reduced (see \Cref{lem:good-separator-nodes}).
Then, in \Cref{subsec:SparseCoverConstruction} we construct a sparse cover in the procedure \textsc{Cover}. This is done by first creating clusters from the separator supernodes, and then recursing on each connected component in $\cX$.

For the rest of this section, let \EMPH{$\Delta$}, \EMPH{$\gamma$}, and \EMPH{$w$} be fixed parameters, let \EMPH{$G$} be a graph, and let \EMPH{$\cT_G$} be a $(\Delta, \gamma, w)$-buffered cop decomposition of $G$.
For any subtree $\cT$ of $\cT_G$, we use \EMPH{$V(\cT)$} to denote the set of supernodes which are nodes of $\cT$.
We define the root of $\cT$, denoted \EMPH{$\Root(\cT)$}, to be the supernode of $\cT$ of minimum depth in $\cT_G$ (i.e., the node closest in $\cT_G$ to the root of $\cT_G$).

\subsection{Selecting separator supernodes in a cop decomposition}\label{subsec:SeparatorSupernodes}
In this section, we introduce a procedure \EMPH{$\textsc{SeparatorSupernodes}(\cT)$}; it takes as input a subtree $\cT$ of $\cT_G$, and it outputs a subset $\cX\subseteq V(\cT)$ of the supernodes in $\cT$ which we call the \EMPH{separator supernodes}.

\begin{definition}
    Let $\cT$ be a subtree of $\cT_G$. We say that $\cT$ has \EMPH{subtree-width $\hat w$} if, for every supernode $\eta$ in $V(\cT)$, the size of $\bag{\eta} \cap V(\cT)$ is at most $\hat w$.
\end{definition}
    Observe that the $(\Delta, \gamma, w)$-buffered cop decomposition $\cT_G$ has subtree-width $w$, and only the empty subtree $\varnothing$ has subtree-width 0. 
    The following lemma will be used for the ``divide'' part in our divide-and-conquer algorithm.
\begin{lemma}
\label{lem:good-separator-nodes}
    Let $\cT$ be a subtree of the $(\Delta, \gamma, w)$-buffered cop decomposition $\cT_G$. There is a procedure $\textsc{SeparatorSupernodes}(\cT)$ that returns a set of supernodes $\cX \subseteq V(\cT)$ such that:
    \begin{itemize}
        \item \textnormal{[Bounded threateners.]} For any vertex $v$ in the graph $G$ and $\alpha \ge 1$, there are $O(\alpha \cdot \Delta/\gamma)$ separator supernodes $X \in \cX$ such that (i) $v \in \dom(X)$ and (ii) there exists some $X' \in \bag X \cap V(\cT)$ with $\dist_{\dom(X')}(v, X') \le \alpha \cdot \Delta$.
        \item \textnormal{[Bounded recursion.]} If $\cT$ has subtree-width $\hat w$, then each subtree in \EMPH{$\cT \setminus \cX$} (that is, the set of connected components obtained by deleting $\cX$ vertices from $\cT$) has subtree-width at most $\hat w - 1$.
    \end{itemize}
\end{lemma}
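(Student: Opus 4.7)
The plan is to realize $\textsc{SeparatorSupernodes}(\cT)$ as the greedy marking procedure sketched in the technical overview: initialize $\cX \gets \varnothing$ and every supernode in $V(\cT)$ as unmarked; add $\Root(\cT)$ to $\cX$ and mark every $\eta \in V(\cT)$ with $\Root(\cT) \in \bag \eta$; then repeatedly pick a supernode $X \in V(\cT)$ that is currently unmarked and of minimum depth in $\cT$, add $X$ to $\cX$, and mark every $\eta \in V(\cT)$ with $\bag \eta \cap \bag X \cap V(\cT) \ne \varnothing$. Iterate until every supernode in $V(\cT)$ is marked.

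For the \emph{bounded recursion} property, I would fix a connected component $\cT_i$ of $\cT \setminus \cX$ with root $R_i$ in $\cT$; by the definition of component, the parent $R_i^+$ of $R_i$ in $\cT$ lies in $\cX$. The cop-decomposition construction places the parent of any non-root supernode into its bag (since the parent is the most recently created neighbor of the newly formed domain), so $R_i^+ \in \bag R_i$, which disposes of the case $\eta = R_i$. For $\eta \in V(\cT_i) \setminus \set{R_i}$, exploit the witness $Y_\eta \in \bag X_\eta \cap \bag \eta \cap V(\cT)$ provided by $\eta$'s marker $X_\eta \in \cX$; a short argument (using that $X_\eta$ was minimum-depth unmarked at the moment $\eta$ became marked, while $\eta$ was still unmarked) rules out $Y_\eta = \eta$, so $Y_\eta$ is a strict ancestor of $\eta$ in $\cT_G$. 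The hard part will be showing that the $\cT$-path from $\eta$ up to $Y_\eta$ encounters some $\cX$-node, placing $Y_\eta \notin V(\cT_i)$. The key tools are (a) the bag-monotonicity identity --- if $\eta_1$ is an ancestor of $\eta_2$ in $\cT_G$ with $\eta_1 \in \bag \eta_2$, then $\eta_1 \in \bag{\eta'}$ for every $\eta'$ on the $\cT_G$-path between them, since $\dom \eta_2 \subseteq \dom{\eta'}$ --- and (b) a careful induction on the step at which $\eta$ was marked, tracing the chain of witnesses backwards through the algorithm's history. Once $Y_\eta \notin V(\cT_i)$, we conclude $\bag \eta \cap V(\cT) \setminus V(\cT_i) \ne \varnothing$ and hence $|\bag \eta \cap V(\cT_i)| \le \hat w - 1$.

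For the \emph{bounded threateners} property, I would fix $v \in V(G)$ and $\alpha \ge 1$, and let $\eta_v$ denote the supernode containing $v$. Each threatening separator $X \in \cX$ satisfies $v \in \dom X$ --- so $X$ is an ancestor of $\eta_v$ in $\cT_G$ --- and carries a witness $X' \in \bag X \cap V(\cT)$ with $\dist_{\dom X'}(v, X') \le \alpha \Delta$, making $X'$ itself an ancestor of $\eta_v$ in $\cT_G$. By the supernode buffer property, any ancestor $\eta'$ of $\eta_v$ with $\eta' \notin \bag \eta_v$ satisfies $\dist_{\dom{\eta'}}(v, \eta') \ge \gamma$; iterating this bound along the ancestor chain of $\eta_v$, at most $O(\alpha \Delta / \gamma)$ ancestor supernodes $X'$ can have $\dist_{\dom X'}(v, X') \le \alpha \Delta$. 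Finally, the greedy construction of $\cX$ forces at most one $X \in \cX$ per witness $X'$: if $X_1, X_2 \in \cX$ shared a common witness $X' \in \bag X_1 \cap \bag X_2 \cap V(\cT)$, then whichever of $X_1, X_2$ was added to $\cX$ first would already have marked the other at that step (since the intersection $\bag X_1 \cap \bag X_2 \cap V(\cT)$ is then non-empty), contradicting the latter's eligibility for inclusion in $\cX$. Thus the total number of threatening separator supernodes is at most $O(\alpha \Delta / \gamma)$.
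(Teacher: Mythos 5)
Your algorithm matches the paper's \textsc{SeparatorSupernodes} exactly, and your observation that two separator supernodes in $\cX$ cannot share a common witness $X'$ (because the first one selected would have marked the second) is correct and indeed implicit in the paper's reasoning. However, both parts of your proof plan have problems.

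\emph{Bounded threateners.} The step ``iterating this bound along the ancestor chain of $\eta_v$, at most $O(\alpha\Delta/\gamma)$ ancestor supernodes $X'$ can have $\dist_{\dom(X')}(v,X')\le\alpha\Delta$'' is where the argument breaks. The buffer property gives, for each individual ancestor $\eta'\notin\bag{\eta_v}$, the pointwise lower bound $\dist_{\dom(\eta')}(v,\eta')\ge\gamma$; these are distances measured in different subgraphs $\dom(\eta')$ and they do \emph{not} accumulate as you move up the ancestor chain. In fact, the intermediate claim that only $O(\alpha\Delta/\gamma)$ ancestors of $\eta_v$ can be ``close'' is not something the buffer property alone implies; many ancestors can all be at distance exactly $\gamma$ from $v$ in their respective domains. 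The paper bounds $|\cX[v]|$ directly, without bounding the number of close witnesses, via a careful induction: it orders $\cX[v]=(X_0,\dots,X_k)$ by ancestor-descendant and proves $\dist_{\dom(X_i')}(X_i',v)\ge\lfloor i/2\rfloor\cdot\gamma$ for every $X_i'\in\bag{X_i}\cap V(\cT)$. The inductive step jumps from $X_i$ to $X_{i-2}$: a shortest path $P$ from $X_i'$ to $v$ is split at the last supernode $X_{i-2}'\in\bag{X_{i-2}}$ it hits; the suffix lives in $\dom(X_{i-2}')$ and contributes $\lfloor(i-2)/2\rfloor\gamma$ by induction, while the prefix contributes an extra $\gamma$ because $X_i'\notin\bag{X_{i-2}'}$ --- and establishing \emph{that} exclusion is precisely where the algorithm's marking rules (``$X_{i-1}$ was selected before $X_{i-2}$, which was selected before \dots, and cross-bag membership would cause premature marking'') come in. This marking-based exclusion is the crux of the lemma and is missing from your plan; without it there is no reason for the distances to grow along the chain.

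\emph{Bounded recursion.} Here you are overcomplicating. There is no need to trace a chain of witnesses or induct on the marking step. The witness $Y_\eta\in\bag{X_\eta}\cap\bag{\eta}\cap V(\cT)$ is, by definition of $\bag{\cdot}$, either $X_\eta$ itself or a proper ancestor of $X_\eta$. Since $X_\eta\in\cX$ and $X_\eta$ is an ancestor of $\eta\in V(\cT_i)$, the path in $\cT$ from $\eta$ up to $X_\eta$ must exit $\cT_i$ (a connected component of $\cT\setminus\cX$) at the parent $R_i^+$ of $\Root(\cT_i)$; hence $X_\eta$ is $R_i^+$ or an ancestor thereof, and so is $Y_\eta$. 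Therefore $Y_\eta\notin V(\cT_i)$ immediately, and $|\bag{\eta}\cap V(\cT_i)|\le\hat w-1$. (The paper phrases this via \Cref{obs:tree-decomp}(i) to conclude $Y_\eta\in\bag{R_i^+}$, but the key point is simply that $Y_\eta$ sits above $\cT_i$.) Your proposed tools (a) and (b) are not needed, and the ``careful induction tracing the chain of witnesses'' is a detour.
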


In the next section, we use the [bounded recursion] property to show that our (recursive) sparse cover construction terminates after $O(w)$ iterations; we use the [bounded threateners] property to show that each iteration adds only $O(w\cdot\hat{w} \cdot \Delta/\gamma)$ overlapping clusters in the sparse cover. The rest of this subsection is dedicated to proving \Cref{lem:good-separator-nodes}.

\begin{tcolorbox}
$\textsc{SeparatorSupernodes}(\cT)$:
\begin{enumerate}
    \item Initialize $\cX \gets \varnothing$ to be a set of separator supernodes.
    Initially, every node in $\cT$ is \EMPH{unmarked}.
    \item While there is an unmarked node of $\cT$, do the following: Let $X$ be an unmarked node in $\cT$ if minimum depth (i.e., a node that is closest in tree $\cT$ to the root of $\cT$), and add $X$ to $\cX$. Then, letting $\mathcal{M} \gets \bag X \cap V(\cT)$, \EMPH{mark} all nodes $\eta$ in $\cT$ such that (i) $\eta$ is a descendant of $X$ in $\cT$, and (ii) $\bag \eta$ contains some supernode in $\mathcal{M}$. In particular, note that $X$ is marked.
    \item Once every node of $\cT$ is marked, return $\cX$.
\end{enumerate}
\end{tcolorbox}

\begin{figure}[h!]
    \centering
    \includegraphics[width=\linewidth]{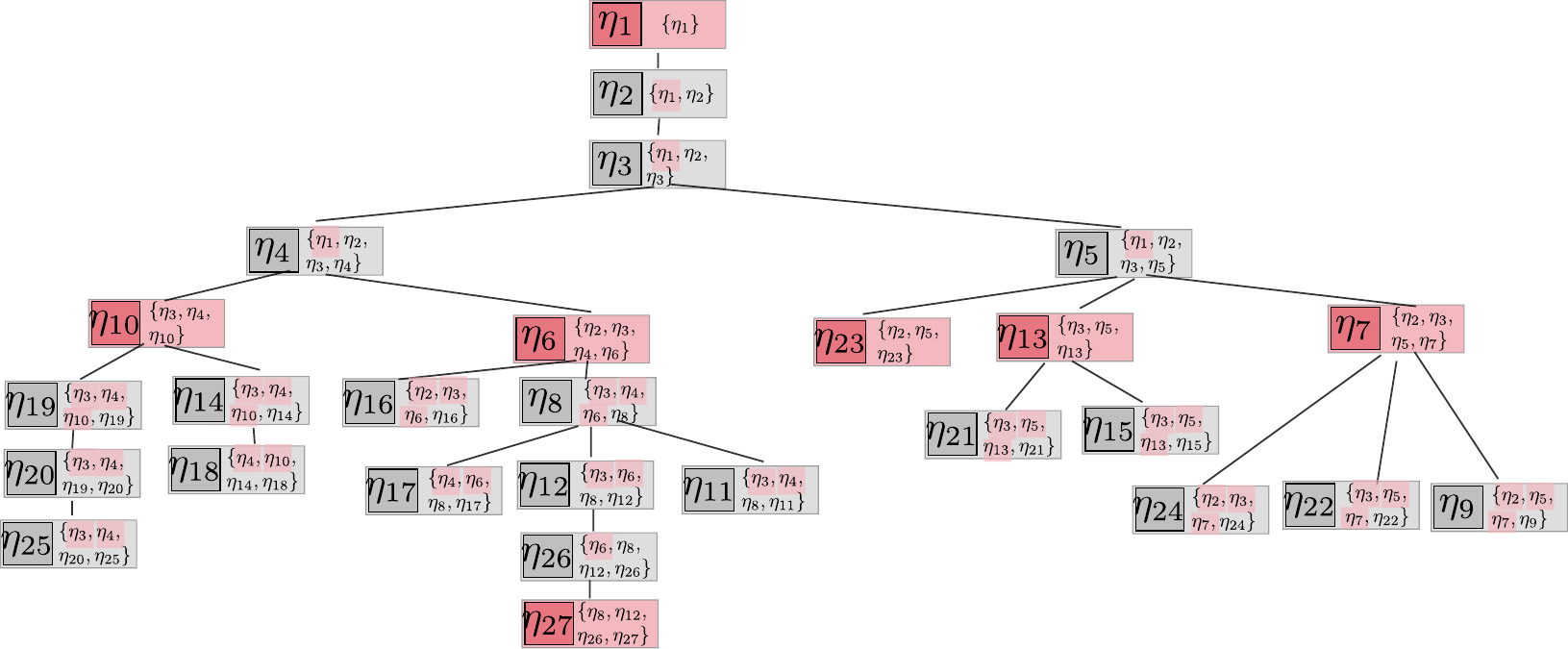}
    \caption{
    \textit{Illustration of the partition tree $\cT_G$ (of the buffered cop decomposition from \Cref{fig:CopDecomp}), and the output of $\textsc{SeparatorSupernodes}(\cT_G)$.    
    Supernodes in $\cX$ are red; supernodes not in $\cX$ are gray. 
    For each supernode $\eta$ that is \emph{not} in $\cX$, we highlight (in pink) each supernode in $\bag{\eta}$ that appears in the bag of some ancestor supernode in $\cX$; these supernodes witness the fact that $\eta$ is ``marked'' and does not join $\cX$.
    }}
    \label{fig:sep-supernodes}
\end{figure}

Observe that $\textsc{SeparatorSupernodes}(\cT)$ terminates after finitely many iterations of Step 2, as each iteration causes at least one unmarked node (namely, $X$) to become marked.
In this section, we use two straightforward consequences of the [tree decomposition] property of buffered cop decomposition, which we state here for clarity (\Cref{obs:tree-decomp}). We then prove that $\textsc{SeparatorSupernodes}$ satisfies the [bounded threateners] property (\Cref{clm:sep-nodes-sparse}) and the [bounded recursion] property (\Cref{clm:subtree-width}); our \Cref{lem:good-separator-nodes} follows directly from these two claims.
\begin{observation}
\label{obs:tree-decomp}
    Let $\eta^{+}$, $\eta$, and $\eta^{-}$ be supernodes in $\cT$, where $\eta^{+}$ is not a descendant of $\eta$, and $\eta^{-}$ is a descendant of $\eta$.
    \begin{enumerate}
        \item[(i)] If $\eta^+ \in \bag{\eta^-}$, then $\eta^+ \in \bag{\eta}$.
        \item[(ii)] Let $v^{+}$ and $v^{-}$ be vertices with $v^{+} \in \eta^{+}$ and $v^{-} \in \eta^{-}$. Then any path $P$ in $G$ between $v^{+}$ and $v^{-}$ includes some vertex in a supernode $\eta' \in \bag {\eta}$, where $\eta' \neq \eta$.
    \end{enumerate}
\end{observation}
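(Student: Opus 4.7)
The plan is to treat both parts as ancestor/descendant bookkeeping on the partition tree $\cT_G$, using the [tree decomposition] property of the buffered cop decomposition as the only nontrivial ingredient. I would prove (i) first by a direct argument on $\bag{\cdot}$, and then derive (ii) by walking along $P$ until it first crosses into the subtree of $\cT_G$ rooted at $\eta$, and applying (i) at the crossing edge.

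For (i): Suppose $\eta^{+}\in\bag{\eta^{-}}$. By definition of $\bag{\cdot}$, $\eta^{+}$ is an ancestor of $\eta^{-}$ in $\cT_G$ and there is an edge of $G$ between a vertex of $\eta^{+}$ and a vertex of $\dom(\eta^{-})$. The ancestors of $\eta^{-}$ form a root-to-$\eta^{-}$ path that passes through $\eta$, and $\eta^{+}$ lies on that path. Since $\eta^{+}$ is assumed not to be a descendant of $\eta$, it must lie strictly above $\eta$ on the path, i.e.\ $\eta^{+}$ is an ancestor of $\eta$ with $\eta^{+}\neq\eta$. Because $\eta^{-}$ is a descendant of $\eta$ we have $\dom(\eta^{-})\subseteq\dom(\eta)$ directly from the definition of domain, so the same edge witnesses adjacency between $\eta^{+}$ and $\dom(\eta)$; hence $\eta^{+}\in\bag{\eta}$.

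For (ii): Write $P=u_{0},u_{1},\dots,u_{k}$ with $u_{0}=v^{+}$ and $u_{k}=v^{-}$, and let $\eta^{(j)}$ denote the unique supernode containing $u_{j}$ (uniqueness follows from $V(\cT_G)$ being a partition of $V(G)$). Since $\eta^{(0)}=\eta^{+}$ is not a descendant of $\eta$ while $\eta^{(k)}=\eta^{-}$ is, there is a least index $i\ge 1$ such that $\eta^{(i)}$ is a descendant of $\eta$; then $\eta^{(i-1)}$ is not a descendant of $\eta$, and in particular $\eta^{(i-1)}\neq\eta^{(i)}$. The edge $u_{i-1}u_{i}\in E(G)$ together with $\hat\cT_G$ being a tree decomposition of $G$ forces $\eta^{(i-1)}$ and $\eta^{(i)}$ to share a bag of $\hat\cT_G$, so one is an ancestor of the other in $\cT_G$. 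A descendant of $\eta^{(i)}$ is automatically a descendant of $\eta$, which excludes $\eta^{(i-1)}$; thus $\eta^{(i-1)}$ is an ancestor of $\eta^{(i)}$, and in particular $\eta^{(i-1)}\in\bag{\eta^{(i)}}$. If $\eta^{(i)}=\eta$ then $\eta^{(i-1)}\in\bag{\eta}$ directly; otherwise $\eta^{(i)}$ is a strict descendant of $\eta$, and part (i) applied with $\eta^{+}:=\eta^{(i-1)}$ and $\eta^{-}:=\eta^{(i)}$ gives $\eta^{(i-1)}\in\bag{\eta}$. In either case $\eta^{(i-1)}\neq\eta$ (since $\eta^{(i-1)}$ is not a descendant of $\eta$), and $u_{i-1}$ is the desired vertex of $P$ lying in a supernode of $\bag{\eta}\setminus\{\eta\}$.

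The main obstacle is really just notational: one must be careful that ``descendant of $\eta$'' is applied with a consistent convention (I take it to include $\eta$ itself, so that ``not a descendant'' means strictly above or in a different subtree), and that the jump from an edge $u_{i-1}u_{i}$ of $G$ to an ancestor/descendant relation between $\eta^{(i-1)}$ and $\eta^{(i)}$ in $\cT_G$ is routed through the [tree decomposition] property of $\hat\cT_G$ rather than through the partition tree $\cT_G$ alone. Once that translation is set up, the case analysis is short.
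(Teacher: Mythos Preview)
Your proof is correct and follows essentially the same approach as the paper: both parts hinge on the [tree decomposition] property, and for (ii) you (like the paper) walk along $P$ to the first edge crossing into $\dom(\eta)$ and then invoke (i). The only minor gloss is the phrase ``in particular $\eta^{(i-1)}\in\bag{\eta^{(i)}}$'': being an ancestor alone is not enough, but the edge $u_{i-1}u_{i}$ (with $u_{i}\in\dom(\eta^{(i)})$) supplies the required adjacency, so the step is fine.
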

\begin{proof}
    \textbf{(i.)} This follows immediately from the definition of tree decomposition.

    \medskip \noindent
    \textbf{(ii.)} Viewing $P$ as a path starting at $v^+$ and ending at $v^-$, let $(a,b)$ be the first edge on $P$ such that $a \not \in \dom(\eta)$ and $b \in \dom(\eta)$.
    Let $\eta_a$ and $\eta_b$ be the supernodes containing $a$ and $b$, respectively.
    We choose $\eta'$ to be $\eta_a$.    
    By definition, $\eta_a$ is not a descendant of $\eta$ (and $\eta_a \neq \eta$).
    Now, by [tree decomposition] property, there is some supernode $\eta''$ such that $\bag{\eta''}$ contains both $\eta_a$ and $\eta_b$.
    In particular, $\eta''$ is a descendant of $\eta_b$, and thus $\eta''$ is also a descendant of $\eta$.
    Thus item (i) implies that $\eta_a \in \bag{\eta}$.    
\end{proof}

For the rest of this section, let \EMPH{$\cT$} be a subtree of $\cT_G$, and let \EMPH{$\cX$} be the output of $\textsc{SeparatorSupernodes}(\cT)$.

\begin{claim}
\label{clm:sep-nodes-sparse}
    For any vertex $v$ in $G$, let $\EMPH{$\cX[v]$} \subseteq \cX$ denote the set of separator supernodes $X \in \cX$ such that (i) $v \in \dom(X)$ and (ii) there exists some $X' \in \bag X \cap V(\cT)$ with $\dist_{\dom(X')}(v, X') \le \alpha \cdot \Delta$. Then $\cX[v]$ contains at most $2\alpha \cdot \Delta/\gamma + 2$ supernodes.
\end{claim}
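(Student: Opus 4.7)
Order the supernodes in $\cX[v]$ as $X_1,\dots,X_k$ in the order that $\textsc{SeparatorSupernodes}(\cT)$ adds them to $\cX$. Since every $X_i$ satisfies $v\in\dom(X_i)$, all of them lie on the single root-to-leaf chain in $\cT_G$ passing through the supernode containing $v$, and the addition order coincides with strictly increasing depth in $\cT$, so $X_i$ is a strict ancestor of $X_j$ for $i<j$. For each $i$, fix a witness $X'_i \in \bag{X_i} \cap V(\cT)$ with $\dist_{\dom(X'_i)}(v,X'_i)\le \alpha\Delta$. The central algorithmic observation is that whenever $X_j$ was added \emph{after} $X_i$, $X_j$ must still have been unmarked, which forces
\[
\bag{X_j}\cap \bag{X_i}\cap V(\cT) = \varnothing;
\]
in particular $X_i,X'_i\notin\bag{X_j}$ for every $i<j$.

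The next step is to show that the witnesses $X'_1,\dots,X'_k$ themselves form a strictly nested chain descending a single branch of $\cT_G$. Fix $i<j$ and suppose, for contradiction, that $X'_j$ is an ancestor of $X_i$. Applying item~(i) of \Cref{obs:tree-decomp} with $\eta^{+}=X'_j$, $\eta=X_i$, $\eta^{-}=X_j$ and using $X'_j\in\bag{X_j}$ would yield $X'_j\in\bag{X_i}$, contradicting the disjointness above. Hence $X'_j$ is a strict descendant of $X_i$ and therefore of $X'_i$. A direct consequence is that for every $i<k$, $X'_i$ is a strict ancestor of $X_{i+1}$ with $X'_i\notin\bag{X_{i+1}}$; the [supernode buffer] property then gives $\dist_{\dom(X'_i)}(u,X'_i)\ge\gamma$ for every $u\in\dom(X_{i+1})$, and substituting $u=v$ yields $\dist_{\dom(X'_i)}(v,X'_i)\ge\gamma$.

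The main step is to promote these per-level $\gamma$-buffers into a global growth of distances from $v$, so as to bound $k$. I plan to set $d_i := \dist_{\dom(X'_1)}(v,X'_i)$ and show $d_{i+1}\ge d_i+\gamma/2$ for each $i$. A shortest path witnessing $d_{i+1}$ either (a)~stays inside $\dom(X'_i)$, in which case the continuation from a closest-to-$v$ vertex of $X'_i$ toward a vertex of $X'_{i+1}\subseteq\dom(X_{i+1})$ has to cross the $\gamma$-buffer just established; or (b)~exits $\dom(X'_i)$, in which case item~(ii) of \Cref{obs:tree-decomp} applied with $\eta=X'_i$ forces the path to re-enter $\dom(X'_i)$ through some supernode in $\bag{X'_i}\setminus\{X'_i\}$, and the resulting detour, combined with the buffer on the return, still contributes at least $\gamma/2$. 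Since $d_k\le\dist_{\dom(X'_k)}(v,X'_k)\le\alpha\Delta$, telescoping gives $(k-1)\gamma/2\le\alpha\Delta$, so $k\le 2\alpha\Delta/\gamma+2$.

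The hard part will be case~(b) of the growth step, because the [supernode buffer] controls distances cleanly only inside $\dom(X'_i)$, whereas the target inequality lives in the potentially much larger $\dom(X'_1)$, where shortcuts through strict ancestors of $X'_i$ are possible. In particular, it is consistent with our hypotheses that $X'_i\in\bag{X'_{i+1}}$ even though $X'_i\notin\bag{X_{i+1}}$, so one cannot simply invoke the buffer with $\eta=X'_{i+1},\eta'=X'_i$; ruling out short ``bypass'' paths in that situation is exactly where item~(ii) of \Cref{obs:tree-decomp} and the strict nesting of witnesses must be combined carefully to salvage the $\gamma/2$ increment.
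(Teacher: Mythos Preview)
The proposal has a genuine gap in the ``main step.''

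First, the inequality $d_{i+1}\ge d_i+\gamma/2$ is in the wrong direction. In your ordering $X_1$ is the shallowest separator and $X_k$ the deepest; since $v\in\dom(X_k)$, one expects the distance from $v$ to the \emph{shallower} witnesses to be larger, not to the deeper ones. Your case~(a) reflects this confusion: a shortest path from $v$ to $X'_{i+1}$ has no reason to pass through the strictly shallower supernode $X'_i$, so there is no ``continuation from a closest-to-$v$ vertex of $X'_i$'' to speak of. The auxiliary claim $X'_{i+1}\subseteq\dom(X_{i+1})$ is also false whenever $X'_{i+1}$ is a strict ancestor of $X_{i+1}$.

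Second, even after reversing the direction, fixing a single witness $X'_i$ per level and a single ambient domain $\dom(X'_1)$ does not give enough control. When you take a shortest path from $v$ to a shallow $X'_i$ and split it where it leaves $\dom(X_{i+1})$, \Cref{obs:tree-decomp}(ii) only guarantees the path meets \emph{some} supernode in $\bag{X_{i+1}}$, not your pre-chosen $X'_{i+1}$; your inductive hypothesis says nothing about that other supernode, so the telescoping breaks. This is exactly the difficulty you flag in case~(b), and it is not a technicality---it is where the argument collapses.

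The paper sidesteps both issues by (i) proving the lower bound simultaneously for \emph{every} $X'_i\in\bag{X_i}\cap V(\cT)$ rather than one fixed witness, (ii) measuring distance in the varying domain $\dom(X'_i)$, and (iii) stepping two levels at a time: the shortest $\dom(X'_i)$-path from $X'_i$ to $v$ is split at the last supernode $X'_{i-2}\in\bag{X_{i-2}}$ it meets, the suffix is bounded by induction, and the prefix has length $\ge\gamma$ via the buffer once one shows $X'_i\notin\bag{X'_{i-2}}$. That last fact needs the intermediate level $X_{i-1}$ together with the marking rule (if $X'_i$ were in $\bag{X'_{i-2}}$ then, via $X_{i-1}$, it would also be in $\bag{X_{i-1}}$, so $X_{i-1}$ would have been marked before it could join $\cX$). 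This ``skip a level'' trick and the universal-witness induction are the missing ideas.
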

\begin{proof}
Let \EMPH{$\eta$} denote the supdenode in $\cT_G$ that contains $v$.
Every supernode $X \in \cX[v]$ has $v \in \dom(X)$ and is therefore an ancestor of $\eta$, and so these supernodes can be put in a linear order based on ancestor-descendant relationship. 
Let \EMPH{$(X_0, \ldots, X_k)$} denote an ordered list of the supernodes in $\cX[v]$, where $X_i$ is a proper descendant of $X_{i+1}$ (note that the supernode $\eta$ is a descendant of $X_0$, where possibly $\eta = X_0$).
To prove the lemma, we must show that $k \le 2\alpha \cdot \Delta/\gamma + 1$. To this end, we prove the following claim by induction on $i$:
\begin{quote}
For every $i \in \{0,\dots,k\}$ and every $X_i' \in \bag{X_i} \cap V(\cT)$,
we have $\dist_{\dom(X_i')}(X_i', v) \ge \left\lfloor \frac{i}{2} \right\rfloor \cdot \gamma$ .
\end{quote}
This claim suffices to prove the lemma: as $X_k\in \cX[v]$, there is some $X_k' \in \bag{X_k} \cap V(\cT)$ with
$\dist_{\dom(X_k')}(X_k', v) \le \alpha \cdot \Delta$. It follows that $\left\lfloor \frac{k}{2}\right\rfloor \cdot\gamma\le\alpha\cdot\Delta$
and so $k\le2\alpha\cdot\Delta/\gamma+1$, and in particular 
$\left|\cX[v]\right|=k+1\le 2\alpha\cdot\Delta/\gamma+2$.

In the base case, $i = 0$ or $i=1$, and the claim holds trivially (as distances are nonnegative).
For the inductive step, we have $i \ge 2$. Let \EMPH{$P$} be a shortest path in $\dom(X_i')$ between $X_i'$ and $v$. By \Cref{obs:tree-decomp}(ii), the path $P$ intersects some supernode in $\bag{X_{i-2}}$.
Let \EMPH{$X_{i-2}'$} denote the last supernode in $\bag{X_{i-2}}$ that $P$ intersects, where we view $P$ as a path that starts at $X_i'$ and ends at $v$.
Let  $P_{\rm pre}$ be the \emph{prefix} of $P$ that travels from $X_{i}'$ to $X_{i-2}'$, 
and $P_{\rm suf}$ be the \emph{suffix} of $P$ that travels from $X_{i-2}'$ to $v$.
Clearly $P_{\rm pre}$ is contained in $\dom(X_i')$ (as $P$ is a path in $\dom(X_i')$).
We argue that $P_{\rm suf}$ is contained in $\dom(X_{i-2}')$. 
Indeed, for the sake of contradiction suppose that $P_{\rm suf}$ goes through a supernode $\eta'$ which is not a descendant of $X_{i-2}'$ in $\cT$. In particular, $\eta'$ is not a descendant of $X_{i-2}$. By \Cref{obs:tree-decomp}(ii), $P_{\rm suf}$ goes though a supernode $X_{i-2}''\in\bag{X_{i}}$, a contradiction to the maximality of $X_{i-2}'$.
We conclude:
\[
\dist_{\dom(X_{i}')}(X_{i}',v)=\norm{P}\ge\norm{P_{{\rm pre}}}+\norm{P_{{\rm suf}}}\ge\dist_{\dom(X_{i}')}(X_{i}',X_{i-2}')+\dist_{\dom(X_{i-2}')}(X_{i-2}',v)~.
\]
We lower bound each term. First we claim that $X_{i-2}' \in V(\cT)$; we can then apply induction to conclude $\dist_{\dom(X_{i-2}')}(X_{i-2}', v) \ge \lfloor \frac{i-2}{2} \rfloor \cdot \gamma$. 
Indeed, $X_{i-2}$ is a supernode in $V(\cT)$, so (because $\cT$ is connected) every ancestor of $X_{i-2}$ is either in $V(\cT)$ or an ancestor of $\Root(\cT)$. In particular, $X_{i-2}'$ is in $\bag {X_{i-2}}$ and is an ancestor of $X_{i-2}$, so it is either in $V(\cT)$ or an ancestor of $\Root(\cT)$. 
But, because $X_i'$ is in $\cT$ and $P$ is a path in $\dom(X_i')$, $P$ does not intersect any supernode that is an ancestor of $\Root(\cT)$.
Thus $X_{i-2}' \in V(\cT)$.

Next, we show that $X_i' \not \in \bag{X_{i-2}'}$, and so the [supernode buffer] property of $\cT_G$ implies that $\dist_{\dom(X_i')}(X_i', X_{i-2}') \ge \gamma$; the claim then follows.
We first argue that  $X_{i-2}'$ is a descendant of $X_{i-1}$. 
For the sake of contradiction, suppose otherwise;
that is, $X_{i-2}'$ is not a descendant of $X_{i-1}$. As $X_{i-2}$ is a descendant of $X_{i-1}$, and $X_{i-2}' \in \bag{X_{i-2}}$, \Cref{obs:tree-decomp}(i) implies that $X_{i-2}' \in \bag{X_{i-1}}$.
This leads to a contradiction: the \textsc{SeparatorSupernodes} algorithm selects $X_{i-1}$ to join $\cX$ before $X_{i-2}$ (because $X_{i-1}$ is an ancestor of $X_{i-2}$), and if $X_{i-2}'$ were in both $\bag{X_{i-1}}$ and $\bag{X_{i-2}}$, then $X_{i-2}$ would have been ``marked'' at the time $X_{i-1}$ was added to $\cX$;  thus $X_{i-2}$ would not have been added to $\cX$ itself.

We finish the proof with a similar argument. Suppose for the sake of contradiction that $X_i' \in \bag{X_{i-2}'}$.
By \Cref{obs:tree-decomp}(i), as $X_{i-2}'$ is a descendant of $X_{i-1}$, and $X_{i-1}$ is a descendant of $X_{i}'$, it holds that $X_i' \in \bag{X_{i-1}}$.
This is a contradiction: the \textsc{SeparatorSupernodes} algorithm selects $X_i$ to join $\cX$ before $X_{i-1}$, and if $X_i'$ is in both $\bag{X_i}$ and $\bag{X_{i-1}}$ then $X_{i-1}$ would not have been selected to join $\cX$.
\end{proof}

\begin{claim}
\label{clm:subtree-width}
Let $\hat w > 0$. If $\cT$ has subtree-width $\hat w$, then every subtree in $\cT \setminus \cX$ has subtree-width at most $\hat w-1$.
\end{claim}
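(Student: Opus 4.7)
The plan is to fix an arbitrary connected component $\cT'$ of $\cT\setminus\cX$ and an arbitrary supernode $\eta\in V(\cT')$, and then to exhibit a single element of $\bag{\eta}\cap V(\cT)$ that lies outside $V(\cT')$. Since the subtree-width hypothesis gives $|\bag{\eta}\cap V(\cT)|\le\hat w$, producing such a ``missing'' element immediately yields $|\bag{\eta}\cap V(\cT')|\le\hat w-1$, which is exactly what the claim asserts.

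To locate the missing element, I will trace how $\eta$ was processed by \textsc{SeparatorSupernodes}. Because $\eta\in V(\cT')\subseteq V(\cT)\setminus\cX$, there is some iteration in which $\eta$ is marked but is \emph{not} itself added to $\cX$. Let $X$ denote the separator supernode chosen in that iteration; the marking rule then guarantees that $\eta$ is a proper descendant of $X$ in $\cT$ and that there exists a witness $\xi\in\bag{\eta}\cap\bag{X}\cap V(\cT)$. I will split on whether $\xi=X$ or $\xi\neq X$. If $\xi=X$, then $X$ itself is the desired element, because $X\in\bag{\eta}\cap V(\cT)$ and $X\in\cX$ is disjoint from $V(\cT')$. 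If $\xi\neq X$, then $\xi\in\bag{X}\setminus\{X\}$ forces $\xi$ to be a strict ancestor of $X$ in $\cT_G$, and since $\cT$ is a subtree of the tree $\cT_G$, the unique $\cT$-path from $\xi$ down to $\eta$ coincides with the $\cT_G$-path and therefore must traverse $X$. Deleting $X$ (along with the rest of $\cX$) thus separates $\xi$ from $\eta$ in $\cT\setminus\cX$, so $\xi$ lands in a different connected component than $\cT'$; hence $\xi\in\bag{\eta}\cap V(\cT)\setminus V(\cT')$ is the element I need.

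The only substantive step is the second case, where I must verify that $\xi$ is genuinely disconnected from $\eta$ once $\cX$ is removed. This relies purely on the tree structure of $\cT_G$, the fact that $\cT$ is a subtree (so path-uniqueness is inherited), and the observation that any non-$X$ element of $\bag{X}$ must lie strictly above $X$. Everything else is bookkeeping about the marking rule of \textsc{SeparatorSupernodes}, which in essence certifies, for every non-separator supernode $\eta\in V(\cT)\setminus\cX$, an ancestor witness that is provably outside the connected component of $\cT\setminus\cX$ containing $\eta$.
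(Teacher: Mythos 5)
Your proof is correct, and it reaches the same conclusion as the paper's proof (exhibiting a witness in $\bag{\eta}\cap V(\cT)$ that escapes $V(\cT')$), but by a somewhat different and arguably cleaner route. The paper's argument introduces an auxiliary separator supernode $X$ defined as the parent of $\Root(\cT')$ (which it must first argue exists and lies in $\cX$), shows that the marker $X_\eta$ is an ancestor of this $X$, and then invokes \Cref{obs:tree-decomp}(i) to push the witness $X_\eta'$ into $\bag{X}$, from which the conclusion follows because ancestors of $X$ cannot lie in $\cT'$. You skip the detour through $\Root(\cT')$ and the tree-decomposition observation entirely: you work directly with the marker (your $X$, the paper's $X_\eta$) and observe that since the witness $\xi$ is a strict ancestor of $X$ in $\cT_G$ (or equals $X$), the unique $\cT$-path from $\xi$ to $\eta$ must pass through $X\in\cX$, so removing $\cX$ disconnects $\xi$ from $\eta$. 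This is a purely tree-connectivity argument and avoids \Cref{obs:tree-decomp}(i). Both proofs are sound; yours is a bit more elementary and self-contained, while the paper's route pins down the witness's location in the specific bag $\bag{X}$ of the component's parent, a slightly stronger structural statement that the claim itself does not require.

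One small thing worth making explicit in a write-up: you should note that the witness $\xi$ cannot equal $\eta$ (which would make $\xi$ vacuously inside $\cT'$). This follows because $\xi\in\bag{X}$ forces $\xi$ to be an ancestor of $X$ or $X$ itself, whereas $\eta$ is a proper descendant of $X$; but the case split you give ($\xi=X$ versus $\xi$ a strict ancestor of $X$) already implicitly covers this, so it is a matter of presentation rather than a gap.
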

\begin{proof}
Let \EMPH{$\cT'$} be an arbitrary connected component in $\cT \setminus \cX$. Note that $\Root(\cT) \in \cX$ by construction, so
the parent of $\Root(\cT')$ exists and is some separator supernode \EMPH{$X$}$\in \cX$.
Consider a supernode $\EMPH{$\eta$}\in V(\cT')$. To prove the claim, we have to show that the bag size of $\eta$ with respect to $\cT'$ is bounded by $\hat{w}-1$; that is, $|\bag{\eta} \cap V(\cT')|\le \hat w -1$. 
As the subtree-width of $\cT$ is at most $\hat{w}$, it is enough to show that there is some supernode in $\bag{\eta} \cap V(\cT)$, i.e. the bag of $\eta$ with respect to $\cT$, which does not belong to $V(\cT')$.

The supernode $\eta$ was not selected as a separator supernode, so it must have been ``marked'' at some point when some ancestor \EMPH{$X_\eta$} of $\eta$ was selected as a separator supernode.
The fact $\eta$ is marked means there is some supernode \EMPH{$X_{\eta}'$} such that $X_{\eta}' \in \bag{X_\eta} \cap \bag{\eta} \cap V(\cT)$. As $\cT'$ is connected, and every supernode in $\cT'$ is not a separator supernode, $X_\eta$ is a (not necessarily proper) ancestor of $X$.\footnote{It is not hard to show that $X_\eta = X$, but we do not need this fact.} \Cref{obs:tree-decomp}(i) implies that $X_{\eta}' \in \bag{X}$.
In particular, $X_{\eta}'$ is ancestor of $X$ and does not belong to $\cT'$, as required.
\end{proof}

\subsection{Sparse cover construction}\label{subsec:SparseCoverConstruction}
The procedure $\textsc{Cover}(\cT,A)$ takes as an input a subtree $\cT$, and a subset of vertices $A$.
Each vertex in $A$ will be at distance at most $2\Delta$ from some supernode in $\cT$.
The procedure returns a sparse cover for the vertices in $A$.
Recall that for a supernode $\eta$, $\net{\eta}$ is a subset of points from $T_\eta$ at minimum pairwise distance at least $\Delta$, and such that every vertex in $T_\eta$ has a net point at distance at most $\Delta$. 
\begin{tcolorbox}
$\textsc{Cover}(\cT,A):$
\begin{enumerate}
    \item \emph{Select separator supernodes $\cX$.}

    If $\cT = \varnothing$, return $\varnothing$. Otherwise, let $\EMPH{$\cX$} \gets \textsc{SeparatorSupernodes}(\cT)$.

    \item \emph{Grow clusters around $\cX$.}

    For every supernode $X$ in $\cX$, for every supernode $X' \in \bag{X} \cap V(\cT)$, and for every net point $p\in\net{X'}$, create a cluster $\EMPH{$C_{X,p}$} \gets B_{\dom(X')}(p, (2+4\rho)\cdot\Delta) \cap \dom(X)\cap A$.
    
    Let \EMPH{$\cC$} denote the set of all such clusters.

    \item \emph{Recurse.}
    
    For every vertex $v \in A$, define \EMPH{$\eta[v]$} to be the highest supernode in $\cT$ such that $\dist_{\dom(\eta_v)}(v, \eta_v) \le 2\rho \Delta$.
    For every connected component \EMPH{$\cT_i$} $ \in \cT \setminus \cX$, define \EMPH{$A_i$} to be the set of all vertices $v \in A$ such that $\eta_v$ is in $\cT_i$.
    
    Recursively compute $\cC_i \gets \textsc{Cover}(\cT_i,A_i)$ for each $i$.
    
    Return the set of all clusters created: $\cC \cup \bigcup_i \cC_i$.
\end{enumerate}
\end{tcolorbox}

As the diameter of every ball is at most twice the radius, we have:
\begin{observation}[Diameter]
\label{obs:diameter}
    Let $\cT$ be a subtree of $\cT_G$ and let $\mathfrak{C} \gets \textsc{Cover}(\cT)$. Every cluster $C \in \mathfrak{C}$ has (weak) diameter at most $(4+8\rho)\cdot\Delta$.
\end{observation}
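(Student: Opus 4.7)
\medskip
\noindent\textbf{Proof plan.} The statement is essentially a direct consequence of the triangle inequality, so the plan is quite short. Every cluster $C \in \mathfrak{C}$ is produced in Step~2 of some recursive call to \textsc{Cover}, and therefore has the form
\[
C_{X,p} = B_{\dom(X')}\bigl(p,(2+4\rho)\Delta\bigr) \cap \dom(X) \cap A
\]
for some separator supernode $X \in \cX$, some $X' \in \bag{X} \cap V(\cT)$, and some net point $p \in \net{X'}$.

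The plan is to fix two arbitrary vertices $u, v \in C_{X,p}$ and bound $d_G(u,v)$. By construction, both $u$ and $v$ lie in $B_{\dom(X')}(p,(2+4\rho)\Delta)$, so there exist paths from $u$ to $p$ and from $p$ to $v$ in $\dom(X')$, each of length at most $(2+4\rho)\Delta$. Concatenating these two paths yields a walk in $\dom(X')$ (hence a walk in $G$) from $u$ to $v$ of length at most $2\cdot(2+4\rho)\Delta = (4+8\rho)\Delta$. Since weak diameter is measured with respect to $d_G$, and $d_G(u,v)$ is bounded above by the length of any $u$-to-$v$ walk in $G$, we conclude $d_G(u,v) \le (4+8\rho)\Delta$, as required.

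There is no real obstacle here; the only thing worth emphasizing in the writeup is that the bound holds in the weak-diameter sense precisely because we measure distances in $G$ (not in $G[C_{X,p}]$), and that the ambient graph in which the ball is grown, namely $\dom(X')$, is a subgraph of $G$, so shortest-path distances in $\dom(X')$ upper bound those in $G$.
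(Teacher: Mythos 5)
Your proof is correct and is essentially the same one-line argument the paper uses (``the diameter of every ball is at most twice the radius''), just spelled out explicitly. You correctly identify the key point that $\dom(X')$ is a subgraph of $G$, so path lengths there upper-bound $d_G$, which is exactly why the weak-diameter bound follows from the ball's radius.
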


In the next lemma we bound the number of clusters containing a single vertex. Clearly the clusters of $\textsc{Cover}(\cT, A)$ only contain vertices in $A$.

\begin{lemma}[Sparsity]
\label{lem:sparsity}
    Let $\cT$ be a subtree of $\cT_G$ of width $\hat{w}$, and let $\mathfrak{C} \gets \textsc{Cover}(\cT,A)$. 
    Then the clusters in $\mathfrak{C}$ can be partitioned into $O(\rho^{2}\cdot\frac{\Delta}{\gamma} \cdot w \cdot \hat w^2)$ partitions $\cP_1,\cP_2,\dots$ such that for every $i$, all the clusters in $\cP_i$ are disjoint.
\end{lemma}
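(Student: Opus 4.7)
The plan is to proceed by induction on the subtree-width $\hat{w}$ of $\cT$. The base case $\hat{w}=0$ is immediate since $\cT=\varnothing$ forces $\mathfrak{C}=\varnothing$. For the inductive step, I will split $\mathfrak{C}$ into the current-level clusters $\cC$ produced in Step 2 of $\textsc{Cover}(\cT,A)$ and the recursive clusters $\bigcup_i\cC_i$, and bound the partition size of each piece separately.

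For the recursive part, I will invoke the [bounded recursion] clause of \Cref{lem:good-separator-nodes} to conclude that every connected component $\cT_i$ of $\cT\setminus\cX$ has subtree-width at most $\hat{w}-1$. By induction, each $\cC_i$ admits a partition into $O(\rho^{2}\cdot\Delta/\gamma\cdot w\cdot(\hat{w}-1)^{2})$ classes of pairwise disjoint clusters. The active sets $\{A_i\}$ are pairwise disjoint: for each $v\in A$ the supernode $\eta[v]$ is uniquely defined, and $v$ joins $A_i$ only if $\eta[v]\in\cT_i$; hence clusters from different recursive calls contain disjoint vertex sets. This lets me identify the $j$-th class across all recursive calls into a single class, yielding a partition of $\bigcup_i\cC_i$ using the same $O(\rho^{2}\cdot\Delta/\gamma\cdot w\cdot(\hat{w}-1)^{2})$ classes.

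For the current-level clusters I will first bound the pointwise sparsity. A cluster $C_{X,p}$ with $p\in\net{X'}$ and $X'\in\bag{X}\cap V(\cT)$ contains a vertex $v$ only if $v\in\dom(X)$ and $\dist_{\dom(X')}(v,p)\le(2+4\rho)\Delta$, which in particular forces $\dist_{\dom(X')}(v,X')\le(2+4\rho)\Delta$. The [bounded threateners] clause of \Cref{lem:good-separator-nodes} with $\alpha=2+4\rho$ bounds the valid choices of $X$ by $O(\rho\cdot\Delta/\gamma)$; the subtree-width assumption gives at most $\hat{w}$ choices of $X'\in\bag{X}\cap V(\cT)$ per $X$; and \Cref{obs:net} with $\alpha=2+4\rho$ bounds the net points $p\in\net{X'}$ close to $v$ in $\dom(X')$ by $O(\rho w)$. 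Multiplying yields $O(\rho^{2}\cdot\Delta/\gamma\cdot w\cdot\hat{w})$ clusters of $\cC$ containing any single vertex.

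The main obstacle will be converting this pointwise sparsity into a genuine partition of $\cC$ into $O(\rho^{2}\cdot\Delta/\gamma\cdot w\cdot\hat{w})$ classes of pairwise disjoint clusters. My plan is to color each $C_{X,p}$ by a triple $(i,j,k)$: $i\in\{1,\ldots,\hat{w}\}$ is the depth-rank of $X'$ within $\bag{X}\cap V(\cT)$; $j\in\{1,\ldots,O(\rho\cdot\Delta/\gamma)\}$ encodes the position of $X$ within the chain of separator supernodes (ordered by depth in $\cT_G$) whose bags contain $X'$, bounded via a chain-separation argument analogous to \Cref{clm:sep-nodes-sparse}; and $k\in\{1,\ldots,O(\rho w)\}$ is a local color of $p$ among net points of $\net{X'}$ that are mutually close in $\dom(X')$, using the bounded-leaf structure of $T_{X'}$. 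The essential combinatorial facts I will leverage are: (a) any two intersecting clusters $C_{X_1,p_1},C_{X_2,p_2}$ share a vertex in $\dom(X_1)\cap\dom(X_2)$, which forces $X_1,X_2$ to be in ancestor-descendant relation in $\cT_G$; and (b) the $\textsc{SeparatorSupernodes}$ procedure guarantees that whenever $X_1\in\cX$ is a proper ancestor of $X_2\in\cX$ we have $\bag{X_1}\cap\bag{X_2}\cap V(\cT)=\varnothing$, so the witness supernodes $X'_1\neq X'_2$. Together these facts should force two same-colored clusters to be either identical or disjoint. Finally, combining the $O(\rho^{2}\cdot\Delta/\gamma\cdot w\cdot\hat{w})$ current-level classes with the $O(\rho^{2}\cdot\Delta/\gamma\cdot w\cdot(\hat{w}-1)^{2})$ recursive classes closes the induction, since $(\hat{w}-1)^{2}+\hat{w}\le\hat{w}^{2}$ for $\hat{w}\ge 1$.
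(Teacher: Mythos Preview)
Your inductive structure, the handling of the recursive clusters (merging the $j$-th class across all $\cC_i$ using the pairwise disjointness of the active sets $A_i$), and the pointwise sparsity count for $\cC$ via [bounded threateners] $\times$ $\hat w$ $\times$ \Cref{obs:net} are all correct and match the paper exactly, down to the closing inequality $(\hat w-1)^{2}+\hat w\le\hat w^{2}$.

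The genuine gap is in your plan to convert the pointwise sparsity of $\cC$ into a partition via an explicit $(i,j,k)$-coloring. First, your color $j$ is ill-defined: you describe it as the position of $X$ in ``the chain of separator supernodes whose bags contain $X'$'', but your own fact~(b) says that any two comparable separator supernodes in $\cX$ satisfy $\bag{X_1}\cap\bag{X_2}\cap V(\cT)=\varnothing$; hence the separator supernodes whose bag contains a fixed $X'$ form an \emph{antichain}, not a chain. Second, and more fundamentally, when $X_1$ is a proper ancestor of $X_2$ your fact~(b) only yields $X'_1\neq X'_2$; all three of your colors are then computed relative to different reference objects ($\bag{X_1}$ versus $\bag{X_2}$ for $i$, different ``chains'' for $j$, $\net{X'_1}$ versus $\net{X'_2}$ for $k$) and nothing you wrote prevents them from coinciding. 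Note also that pointwise sparsity $s$ does not by itself bound the chromatic number of the intersection graph by $s$ (three pairwise-intersecting sets can have pointwise sparsity $2$ yet need $3$ colors), so some genuinely additional structure is required here.

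The paper avoids explicit coloring altogether and instead greedily colors $\cC$, processing the clusters $C_{X,p}$ in non-decreasing order of the depth of $X$ in $\cT$. The point is that if $C_{X,p}$ and $C_{Y,q}$ intersect then (as you observed) $X$ and $Y$ are comparable; combining fact~(b) with \Cref{obs:tree-decomp}(i) one checks that whenever $X$ is a proper ancestor of $Y$ the net point $q$ must lie in $\dom(X)$. Hence when $C_{Y,q}$ is processed, every earlier conflicting cluster is captured by the same product you computed, now applied with the vertex $q$ in place of an arbitrary $v$. This is a degeneracy bound on the intersection graph rather than an explicit coloring, and it closes the argument in a few lines without needing to design a consistent global labeling.
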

\begin{proof}    
    Let $\kappa$ be some large enough constant (to be chosen later). We argue by induction on $\hat w$ that 
    the clusters in $\mathfrak{C}$ can be partitioned into  $\kappa\cdot\rho^{2}\cdot\frac{\Delta}{\gamma} \cdot w \cdot \hat w^2$ partitions $\cP_1,\cP_2,\dots$ such that for every $i$, all the clusters in $\cP_i$ are disjoint. 
    In the base case, $\hat w = 0$ and $\cT = \varnothing$, so we must have $A = \varnothing$. 
    For the inductive case, consider $\hat{w} > 0$.
    We first partition the clusters in $\cC$. 
    Let $\EMPH{$\widetilde{\cX}$}=\bigcup_{X\in\cX}\bag{X}$ be all the supernodes from which we grew clusters. 
    Every cluster $C_{X,p}\in\cC$ is created from a supernode $X\in\widetilde{\cX}$, and a net point $p\in\net{X}$, where $C_{X,p}=B_{\dom(X)}(p,(2+4\rho)\cdot\Delta)$.
    Fix some $\EMPH{Y}\in\widetilde{\cX}$, and $\EMPH{$q$}\in\net{Y}$. Let \EMPH{$\Gamma(C_{Y,q})$} be all the clusters $C_{X,p}\in\cC$ such that
    $q \in \dom(X)$ and $C_{Y,q}$ intersects $C_{X,p}$.
    Note that, for any supernode $X$ with $q \in \dom(X)$ and any $p\in\net{X}$, if $C_{Y,q}$ and $C_{X,p}$ intersect, then it must hold that $d_{\dom(X)}(p,q)\le 4+8\rho$.
    We 
    count the number of clusters in $\Gamma(C_{Y,q})$:
    \begin{itemize}
        \item By the [bounded threateners] property of \Cref{lem:good-separator-nodes}, there are at most $O(\frac{\rho\cdot\Delta}{\gamma})$ supernodes \EMPH{$X$} in $\cX$ such that $q \in \dom(X)$ and there is some $X' \in \bag{X} \cap V(\cT)$ with $\dist_{\dom(X')}(X', \dom(Y)) \le (4+8\rho)\cdot\Delta.$
        \item For each such supernode $X$, there are at most $\hat{w}$ supernodes $\EMPH{$X'$} \in \bag{X} \cap V(\cT)$, as the subtree-width of $\cT$ is $\hat{w}$.
        \item For each such supernode $X'$, \Cref{obs:net} implies that there are $O(\rho\cdot w)$ net points $\EMPH{$p$}\in\net{X'}$ such that $\dist_{\dom(X')}(p, q) \le (4+8\rho)\Delta$.
    \end{itemize}    
    We conclude that  $\left|\Gamma(C_{Y,q})\right|=O(\frac{\rho\cdot\Delta}{\gamma}\cdot\rho w\cdot\hat{w})\le\kappa\cdot\rho^{2}\cdot\frac{\Delta}{\gamma}\cdot w\cdot\hat{w}$, for a sufficiently large $\kappa$.

    The clusters $\cC$ are partitioned in a greedy manner:
    Initially all the partitions sets $\cP_1,\cP_2,\dots$ are empty. Then, we go over the clusters $C_{X,p}\in\cC$ in non-decreasing order of depth (of $X$, w.r.t. $\cT$), and arbitrary order of $p\in\net{X}$.
    The cluster $C_{X,p}$ will join the partition $\cP_i$ with the minimum index $i$ such that there is no cluster in $\cP_i$ intersecting $C_{X,p}$.
    As among the clusters intersecting $C_{X,p}$ only the clusters in $\Gamma(C_{X,p})$ could be examined before $C_{X,p}$, it holds that $C_{X,P}$ joins partition $\cP_i$ for $i\le|\Gamma(C_{X,p})|$ (note that $C_{X,p}\in \Gamma(C_{X,p})$).
    Thus we partitioned the clusters of $\cC$ into at most $\kappa\cdot\rho^{2}\cdot\frac{\Delta}{\gamma}\cdot w\cdot\hat{w}$ partitions.

    Next, consider a connected component $\cT_j\in\cT\setminus\cX$, with a subset $A_j$ of active vertices.
    By the [bounded recursion] property of \Cref{lem:good-separator-nodes}, the corresponding tree $\cT_j$ has subtree-width at most $\hat w - 1$. 
    Thus, using the induction hypothesis, the clusters in $\cC_j=\textsc{Cover}(\cT_{j},A_{j})$ can be partitioned into at most $\kappa\cdot \rho^{2}\cdot\frac{\Delta}{\gamma} \cdot w \cdot  (\hat w-1)^2$ partitions $\cP^{(j)}_1,\cP^{(j)}_2,\dots$ such that all the clusters in every partition $\cP^{(j)}_l$ are disjoint.
    Given two connected components $\cT_{j_1},\cT_{j_2}\in\cT\setminus\cX$, the corresponding active sets $A_{j_1},A_{j_2}$ are disjoint, and so there are no two clusters $C_1\in\cC_{j_1}$, $C_2\in\cC_{j_2}$ that intersect.
    It follows that we can arbitrarily combine such partitions (for the different connected components).
    In total, the clusters in $\mathfrak{C}$ can be partitioned into at most $\kappa\cdot\rho^{2}\cdot\frac{\Delta}{\gamma}\cdot w\cdot\left((\hat{w}-1)^{2}+\hat{w}\right)\le\kappa\cdot\rho^{2}\cdot\frac{\Delta}{\gamma}\cdot w\cdot\hat{w}^{2}$ partitions, such that every two clusters in the same partition are disjoint, proving the claim.
\end{proof}

It remains to prove that every $\rho \Delta$-ball is contained in some cluster returned by $\textsc{Cover}$. 
The following technical claim will be used several times during the proof of this property.

\begin{claim}
\label{clm:close-ancestor}
    Let $\cT$ be a subtree of $\cT_G$, let $\eta$ be a supernode in $V(\cT)$, and let $u$ and $v$ be vertices in $\dom(\Root(\cT))$ with $u \not \in \dom(\eta)$ and $v \in \dom(\eta)$. Then there is some supernode $\eta'\in\bag \eta \cap V(\cT)$ such that $\eta' \neq \eta$, and $\dist_{\dom(\eta')}(\eta', v) \le \dist_{\dom(\Root(\cT))}(u,v)$.
    
\end{claim}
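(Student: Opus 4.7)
The plan is to take a shortest path witnessing $\dist_{\dom(\Root(\cT))}(u,v)$ and find the right ``pivot'' supernode along it. Concretely, let $P$ be a shortest $u$-to-$v$ path inside $\dom(\Root(\cT))$. Since $u\notin\dom(\eta)$ but $v\in\dom(\eta)$, the path must eventually enter $\dom(\eta)$ and remain there. I would pick $b'$ to be the \emph{last} vertex along $P$ with the property that $b'$ and every later vertex of $P$ lies in $\dom(\eta)$, and let $a'$ be the vertex immediately preceding $b'$ on $P$, so that $a'\notin\dom(\eta)$. The candidate supernode will be $\eta'\coloneqq \eta_{a'}$, the supernode of $\cT_G$ containing $a'$.

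The first main step is to argue that $\eta_{a'}\in\bag{\eta}\cap V(\cT)$ and $\eta_{a'}\neq\eta$. Let $\eta_{b'}$ be the supernode containing $b'$; then $\eta_{b'}$ is a descendant of $\eta$ (possibly equal), and $\eta_{a'}$ is not a descendant of $\eta$. Using the [tree decomposition] property on the adjacent vertices $a'$ and $b'$, there is some supernode $\xi$ whose bag contains both $\eta_{a'}$ and $\eta_{b'}$; hence $\xi$ is a descendant of $\eta_{b'}$ (so a descendant of $\eta$), and $\eta_{a'}$ is an ancestor of $\xi$. Since $\eta_{a'}$ is not a descendant of $\eta$ but is an ancestor of $\xi$ which sits below $\eta$, $\eta_{a'}$ must be a proper ancestor of $\eta$. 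The edge $(a',b')$ itself witnesses that $\eta_{a'}$ is adjacent to $\dom(\eta)$ in $G$, so $\eta_{a'}\in\bag{\eta}$ by definition. To place $\eta_{a'}$ inside $V(\cT)$, I would observe that $a'\in\dom(\Root(\cT))$, hence $\eta_{a'}$ is a descendant of $\Root(\cT)$ in $\cT_G$; together with being an ancestor of $\eta\in V(\cT)$, this forces $\eta_{a'}$ onto the $\Root(\cT)$-to-$\eta$ path in $\cT_G$, which lies entirely in the connected subtree $\cT$.

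The second step is the distance bound. The suffix of $P$ starting at $a'$ has length at most $\|P\|=\dist_{\dom(\Root(\cT))}(u,v)$. I would argue this suffix lies inside $\dom(\eta_{a'})$: $a'\in\eta_{a'}\subseteq\dom(\eta_{a'})$, and by the choice of $b'$ every subsequent vertex on $P$ lies in $\dom(\eta)\subseteq\dom(\eta_{a'})$ (the inclusion holds because $\eta_{a'}$ is an ancestor of $\eta$). Therefore $\dist_{\dom(\eta_{a'})}(\eta_{a'},v)\le\dist_{\dom(\eta_{a'})}(a',v)\le\dist_{\dom(\Root(\cT))}(u,v)$, completing the proof with $\eta'=\eta_{a'}$.

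The only subtle point — and the step I would handle most carefully — is the verification that $\eta_{a'}$ belongs to $V(\cT)$ rather than being a stray ancestor of $\Root(\cT)$. Everything else follows from standard uses of the [tree decomposition] property (much in the spirit of \Cref{obs:tree-decomp}(ii)) and the fact that $\dom(\eta)\subseteq\dom(\eta_{a'})$ for any ancestor $\eta_{a'}$ of $\eta$. Once the ``last entry'' choice of $a'$ is in place, both the bag membership and the confinement of the suffix to $\dom(\eta_{a'})$ become essentially immediate.
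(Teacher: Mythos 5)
Your proof is correct and takes essentially the same route as the paper: pick a shortest $u$-$v$ path $P$ in $\dom(\Root(\cT))$, locate the last vertex $a'$ of $P$ outside $\dom(\eta)$, and argue that the supernode $\eta_{a'}$ containing $a'$ is the required pivot. The paper phrases the pivot choice slightly differently --- it invokes \Cref{obs:tree-decomp}(ii) to say $P$ must hit some supernode in $\bag{\eta}\setminus\{\eta\}$, then takes the \emph{last such supernode} hit by $P$ --- but because any vertex of $P$ after $a'$ lies in $\dom(\eta)$ and hence cannot belong to a proper ancestor of $\eta$, the two choices coincide. Your step establishing that $\eta_{a'}$ is a proper ancestor of $\eta$ and lies in $\bag\eta$ just re-proves inline what \Cref{obs:tree-decomp}(ii) packages as a black box, and the remaining steps (containment of the suffix in $\dom(\eta_{a'})$, membership in $V(\cT)$ via the $\Root(\cT)$-to-$\eta$ path) match the paper. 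One small wording slip: you describe $b'$ as the ``last'' vertex with the property that it and all later vertices lie in $\dom(\eta)$, which would literally be $v$; you clearly intend the \emph{first} such vertex (equivalently, $a'$ is the last vertex of $P$ outside $\dom(\eta)$ and $b'$ is its successor), and the rest of the argument uses $a'$ and $b'$ consistently with this intent.
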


\begin{wrapfigure}{r}{0.12\textwidth}
\centering
		\vspace{-10pt}
  \includegraphics[width=0.11\textwidth]{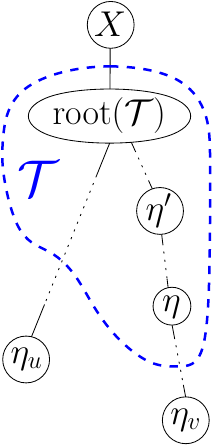}
\end{wrapfigure}

\noindent \textbf{Proof:}
Let \EMPH{$P$} be a shortest path in $\dom(\Root(\cT))$ between $u$ and $v$. By definition of $\dom(\cdot)$, vertex $u$ belongs to a supernode that is not a descendant of $\eta$, and $v$ belongs to a supernode that is a descendant of $\eta$. 
In the illustration on the right, $\cT$ is a subtree of $\cT_G$; and $u$ and $v$ belong respectively to the supernodes $\eta_u$ and $\eta_v$, which are descendants of $\Root(\cT)$ but are not in $V(\cT)$.
By \Cref{obs:tree-decomp}(ii), path $P$ intersects some supernode in $\bag{\eta}$ which is not $\eta$ itself.
Let $\EMPH{$\eta'$} \in \bag{\eta}$ be the last such supernode that $P$ intersects (when viewing $P$ as a path starting at $u$ and ending at $v$). We have $\eta' \neq \eta$ by assumption.
Let \EMPH{$P[\eta':v]$} be the suffix subpath of $P$ that starts at the last vertex in $\eta' \cap P$ and ends at $v$. By the choice of $\eta'$, the subpath $P[\eta':v]$ is contained in $\eta' \cup \dom(\eta)$; as $\eta'$ is an ancestor of $\eta$, this implies $P[\eta':v]$ is contained in $\dom(\eta')$. We conclude
    \[\dist_{\dom(\eta')}(\eta', v) \le \norm{P[\eta':v]} \le \norm{P} = \dist_{\dom(\Root(\cT))}(u,v)~.\]
    Finally, note that $\eta'$ is an ancestor of $\eta$, and descendant of $\Root(\cT)$. As $\cT$ is connected and both $\eta,\Root(\cT)$ belong to the subtree $\cT$, it follows that $\eta'\in V(\cT)$, as required. 
    \markatright\QED

\begin{lemma}[Covering]
\label{lem:covering}
    The procedure $\textsc{Cover}(\cT_G,V)$ returns a clustering $\mathfrak{C}$ such that every $\rho\Delta$ radius ball is contained in some cluster of $\mathfrak{C}$.
\end{lemma}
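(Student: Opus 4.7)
The plan is to prove by strong induction on $|V(\cT)|$ a slightly stronger statement than the lemma requires: for every recursive call $\textsc{Cover}(\cT,A)$ maintaining the invariant that each $v\in A$ admits some $\eta\in V(\cT)$ with $v\in\dom(\eta)$ and $\dist_{\dom(\eta)}(v,\eta)\le 2\rho\Delta$ (which is easily verified for the initial call $\textsc{Cover}(\cT_G,V)$, and inherited by each recursive call thanks to the definition of $\eta[v]$), every $G$-ball $B=B_G(v,\rho\Delta)\subseteq A$ is contained in some returned cluster. The base case $\cT=\varnothing$ is vacuous since then $A=\varnothing$. For the inductive step, the argument bifurcates: if every $u\in B$ has $\eta[u]$ in a common component $\cT_i$ of $\cT\setminus\cX$, then $B\subseteq A_i$ and the induction hypothesis, valid since $|V(\cT_i)|<|V(\cT)|$, delivers a cluster from $\cC_i$ containing $B$; otherwise, I must exhibit a current-round cluster $C_{X,p}$ that contains $B$.

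Let $v_1\in B$ minimize the depth of $\eta[v_1]$ in $\cT_G$. A key preliminary step is establishing $B\subseteq\dom(\eta[v_1])$. If some $u\in B$ lay outside $\dom(\eta[v_1])$, then concatenating the shortest paths $v_1\to v\to u$ through the ball center $v$ (which stay inside $B\subseteq A\subseteq\dom(\Root(\cT))$) yields $\dist_{\dom(\Root(\cT))}(v_1,u)\le 2\rho\Delta$; applying \Cref{clm:close-ancestor} with $\eta=\eta[v_1]$ would then exhibit some $\eta'\in\bag{\eta[v_1]}\cap V(\cT)$ strictly above $\eta[v_1]$ in $\cT$ with $\dist_{\dom(\eta')}(\eta',v_1)\le 2\rho\Delta$, contradicting the maximality defining $\eta[v_1]$.

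Next I select the parameters $X$, $X'$, $p$. If $\eta[v_1]\in\cX$, take $X=X'=\eta[v_1]$; otherwise, $\eta[v_1]$ got marked when some ancestor $X\in\cX$ entered $\cX$, and the marking rule of $\textsc{SeparatorSupernodes}$ yields an $X'\in\bag X\cap\bag{\eta[v_1]}\cap V(\cT)$ that is a strict ancestor of $\eta[v_1]$. In both sub-cases pick $p\in\net{X'}$ as the net point closest to $v_1$ in $\dom(X')$. Since $X$ is always an ancestor of $\eta[v_1]$, we have $B\subseteq\dom(\eta[v_1])\subseteq\dom(X)$. Moreover, the diameter of $B$ inside $\dom(\eta[v_1])\subseteq\dom(X')$ is at most $2\rho\Delta$ (again via center-routed paths). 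Hence it suffices to show $\dist_{\dom(X')}(v_1,p)\le(2+2\rho)\Delta$; triangulating then gives $\dist_{\dom(X')}(u,p)\le(2+4\rho)\Delta$ for every $u\in B$, so $u\in C_{X,p}$.

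The main obstacle is the final distance bound in the sub-case $\eta[v_1]\notin\cX$. Using the $\Delta$-radius of $X'$ around $T_{X'}$ and the $\Delta$-net property of $\net{X'}$, the required inequality reduces to $\dist_{\dom(X')}(v_1,T_{X'})\le(1+2\rho)\Delta$. Unfortunately, maximality of $\eta[v_1]$ only provides a \emph{lower} bound $\dist_{\dom(X')}(v_1,X')>2\rho\Delta$, leaving only a $\Delta$-wide window for the desired matching upper bound. The idea for closing this window is to exploit the shortest-path skeleton: because $X'\in\bag{\eta[v_1]}$, the skeleton $T_{\eta[v_1]}$ contains a vertex $z$ neighboring some $x\in X'$, and $x$ is within $\Delta$ of $T_{X'}$ in $G[X']$, yielding a bridge $v_1\to\eta[v_1]\to T_{\eta[v_1]}\to z\to x\to T_{X'}$. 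Carefully tracking the length of this bridge -- combining the bound $\dist_{\dom(\eta[v_1])}(v_1,\eta[v_1])\le 2\rho\Delta$, the supernode's radius-$\Delta$ property, and the routing along $T_{\eta[v_1]}$ to reach the specific leaf $z$ adjacent to $X'$ -- and showing the total is at most $(1+2\rho)\Delta$ is the delicate crux of the proof.
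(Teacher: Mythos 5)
Your overall skeleton (induction, the case split on whether $B \subseteq A_i$, the choice of $v_1$, and the argument that $B \subseteq \dom(\eta[v_1])$ via \Cref{clm:close-ancestor}) matches the paper, but you choose the separator supernode $X$ incorrectly, and this creates a gap you cannot close.

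You take $X$ to be the separator supernode that \emph{marked} $\eta[v_1]$ (an ancestor of $\eta[v_1]$), so the witness supernode $X'$ is also an \emph{ancestor} of $\eta[v_1]$. By the minimality defining $\eta[v_1]$, every proper ancestor $\eta'$ of $\eta[v_1]$ in $V(\cT)$ satisfies $\dist_{\dom(\eta')}(\eta',v_1) > 2\rho\Delta$; you thus have the inequality pointing the wrong way, and you must close a window by a completely different route. Your proposed bridge $v_1 \to \eta[v_1] \to T_{\eta[v_1]} \to z \to x \to T_{X'}$ does not close it: the segment ``routing along $T_{\eta[v_1]}$ to reach the specific leaf $z$'' is unbounded (the skeleton is merely a tree, and the walk from the nearest skeleton vertex to the particular leaf adjacent to $X'$ can be arbitrarily long), and the edge $\{z,x\}$ also has no weight bound. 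So the claimed total $(1+2\rho)\Delta$ is not achievable by this argument.

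The paper avoids this by exploiting the assumption that $B \not\subseteq A_i$ for every component: there exists $v_2 \in B$ with $\eta[v_2]$ in a different component of $\cT \setminus \cX$ than $\eta[v_1]$, which yields a separator supernode $X \in \cX$ lying on the $\cT$-path from $\eta[v_1]$ down to $\eta[v_2]$ — in particular $X$ is a (not necessarily proper) \emph{descendant} of $\eta[v_1]$. One then applies \Cref{clm:close-ancestor} with $\eta = X$, taking $u$ to be the vertex of $\eta[v_1]$ within $2\rho\Delta$ of $v_1$ (so $u \notin \dom(X)$ when $\eta[v_1]$ is a proper ancestor of $X$, while $v_1 \in \dom(X)$); this produces $X_B \in \bag{X}\cap V(\cT)$ with $\dist_{\dom(X_B)}(X_B, v_1) \le 2\rho\Delta$, i.e.\ an \emph{upper} bound, and the rest follows by the supernode-radius and net-point triangulation exactly as you sketch. (One also needs $B\subseteq\dom(X)$, which the paper establishes via another application of \Cref{clm:close-ancestor} and the maximality of $\eta[v_2]$.) Without introducing $v_2$ and this descendant separator, I do not see how to obtain the needed upper bound, so the gap in your argument is essential rather than a matter of bookkeeping.
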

\begin{proof}
    The procedure $\textsc{Cover}(\cT,A)$ receives a subtree $\cT$ of $\cT_G$, and a subset $A\subseteq V$ of vertices.
    We think of the set $A$ as the set of active vertices. Note that a vertex $v\in A$ might not belong to any of the supernodes in $\cT$.
    Nevertheless, by a simple induction it follows that for every vertex $v\in A$, there is some supernode $\eta\in\cT$ such that $\dist_{\dom(\eta)}(v,\eta) \le 2\rho \Delta$.
    We prove the lemma by induction on the size of $\cT$. 

\begin{quote}
    \textbf{Induction Hypothesis:} Let $\textsc{Cover}(\cT,A)$ be a call made during the recursive execution of $\textsc{Cover}(\cT_G,V)$ which returned the cover $\mathfrak{C}$. Let $v_0 \in A$ be a vertex such that the ball $B \coloneqq B_{G}(v_0, \rho\Delta)$ is contained in $A$. Then there is some cluster in $\mathfrak{C}$ that contains $B$.    
\end{quote}
    
    The base case is when $\cT=\varnothing$, here also $A=\varnothing$ and there is nothing to prove (this case is actually never implemented by the algorithm and is useful only for the proof).
    For the inductive step, consider a ball $B=B_{G}(v_0, \rho\Delta)$ such that $B\subseteq A$.
    Suppose first that there is some connected component $\cT_i\in\cT\setminus\cX$ such that $B\subseteq A_i$, as defined in Step 3 of \textsc{Cover}. 
    By induction, the ball $B$ is contained in some cluster in $\cC_i$, and we are done. 
    In the remainder of the proof we will thus assume that the ball $B$ is not contained in any such set $A_i$.
    
    Recall that for any vertex $v$ in $A$, we use $\eta[v]$ to denote the supernode of minimum depth in $\cT$ such that $\dist_{\dom(\eta[v])}(v, \eta[v]) \le 2\rho \Delta$, and we assign $v$ to the set $A_i$ associated with the connected component $\cT_i\in\cT \setminus \cX$ containing $\eta[v]$. 
    Let \EMPH{$v_1$} be a vertex in $B$ that minimizes the depth of $\eta[v_1]$ (that is, there is no vertex $v\in B$ such that $\eta[v]$ is an ancestor of $\eta[v_1])$. 
    We first argue that for every other vertex $u\in B$, $\eta[v_1]$ is an ancestor of $\eta[u]$. Suppose for the sake of contradiction otherwise, and let $u\in B$ be such a vertex. 
    By the minimality of $v_1$, it follows that $\eta[v_1]$ and $\eta[u]$ are not in ancestor-descendant relation. Then $u \not \in \dom(\eta[v_1])$.     
    By triangle inequality, $B$ has strong diameter at most $2\rho\Delta$, hence $\dist_{\dom(\Root(\cT))}(v_1, u) \le 2\rho\Delta$. By \Cref{clm:close-ancestor}, there is some supernode $\eta'$ in $V(\cT)$ that is a proper ancestor of $\eta[v_1]$ with $\dist_{\dom(\eta')}(\eta', v_1) \le 2\rho\Delta$. This contradicts the definition of $\eta[v_1]$.
    We conclude that $\eta[v_1]$ is an ancestor of $\eta[u]$, for all $u\in B$.
    
    As $B$ is not contained in any set $A_i$, there exists some vertex $\EMPH{$v_2$}\in B$ such that $\eta[v_1]$ and $\eta[v_2]$ are not in the same connected component of $\cT \setminus \cX$. 
    Let $\EMPH{$X$} \in \cX$ be some separator supernode on the path in $\cT$ between $\eta[v_1]$ and $\eta[v_2]$ (note that it is possible that either $\eta[v_1]$ or $\eta[v_2]$ equal $X$).
    Such an $X$ exists because $\eta[v_1]$ and $\eta[v_2]$ are disconnected in $\cT \setminus \cX$.
    We next argue that $B$ is contained in $\dom(X)$. 
    For the sake of contradiction suppose otherwise. 
    Then there is some vertex $u \in B$ that is not in $\dom(X)$. 
    Using \Cref{clm:close-ancestor},
    there is some supernode $\eta'\in V(\cT)$ that is a proper ancestor of $X$ with $\dist_{\dom(\eta')}(\eta', v_2) \le\dist_{\dom(\Root(\cT))}(u, v_2) \le 2\rho\Delta$. But $X$ is an ancestor of $\eta[v_2]$, so $\eta'$ is a proper ancestor of $\eta[v_2]$, contradicting the definition of $\eta[v_2]$.
    We conclude that $B\subseteq\dom(X)$.

    Next, we claim that there is some supernode
    $\EMPH{$X_B$} \in \bag{X} \cap V(\cT)$ with $\dist_{\dom(X_B)}(X_B, v_1) \le 2\rho\Delta$.
    There are two cases. The first case is when $\eta[v_1] = X$; here we choose $X_B \coloneqq \eta[v_1]$. In the second case, $\eta[v_1]$ is a proper ancestor of $X$. By definition of $\eta[v_1]$, there is some $u \in \eta[v_1]$ such that $\dist_{\dom(\eta[v_1])}(u,v_1) \le 2\rho\Delta$; by \Cref{clm:close-ancestor}, there is some supernode $X_B\in \bag{X} \cap V(\cT)$ such that $\dist_{\dom(X_B)}(X_B, v_1) \le 2\rho\Delta$.

    We are now ready to find a cluster in $\cC$ that contains $B$. 
    Let $\EMPH{$v'$} \in X_B$ be a vertex such that $\dist_{\dom(X_B)}(v', v_1)= \dist_{\dom(X_B)}(X_B, v_1)\le 2\rho\Delta$.
    By the [supernode radius] property, there is some point \EMPH{$v''$} on the skeleton $T_{X_B}$ of $X_B$ with $\dist_{X_B}(v'', v') \le \Delta$.
    As $\net{X_B}$ is a $\Delta$-net of the skeleton of $X_B$, there is some point $\EMPH{$p$} \in \net{X_B}$ with $\dist_{X_B}(p, v'') \le \Delta$. 
    Using the triangle inequality (and the fact that $X_B$ is a subgraph of $\dom(X_B)$), for every vertex $u\in B$ it holds that
    \begin{align*}
    \dist_{\dom(X_{B})}(p,u) & \le\dist_{\dom(X_{B})}(p,v'')+\dist_{\dom(X_{B})}(v'',v')+\dist_{\dom(X_{B})}(v',v_{1})+\dist_{\dom(X_{B})}(v_{1},u)\\
     & \le\Delta+\Delta+2\rho\Delta+2\rho\Delta=(2+4\rho)\cdot\Delta~.
    \end{align*}
    It follows that the entire ball $B$ is contained in the ball $B_{\dom(X_B)}(p, (2+4\rho)\cdot\Delta)$. 
    As $B$ is contained in both $\dom(X)$ and $A$, it follows that $B$ is contained in the cluster $C_{X_B,p}$ created in Step 2 of $\textsc{Cover}(H, \cT)$, as required.
\end{proof}

By \Cref{obs:diameter} every cluster has diameter at most $(4+8\rho)\Delta$, while by \Cref{lem:covering} every ball of radius $\rho\Delta$ is fully contained in some cluster. It follows that our cover has padding $8+\frac4\rho$. By \Cref{lem:sparsity}, the clusters in $\mathfrak{C}$ can be partitions into at most  $O(\rho^2\cdot\frac{\Delta}{\gamma} \cdot w^3)$ partitions, as required. \Cref{thm:BufferedToSparseCover} now follows.

\section{From sparse covers to padded decompositions}

This section is devoted to proving our meta \Cref{thm:coversToPaddedDecomp} (restated below for convenience) that reduces sparse covers into padded decomposition.
\SparseCoverToDecomposition*
\begin{proof}
	Given a cover $\cC$, we will sample a partition $\cP$ with the desired padding properties. To avoid confusion, the sets of $\cC$ (resp. $\cP$) will be called $\cC$-clusters (resp. $\cP$-clusters).  
	Our decomposition is morally \cite{MPX13}-based, and closely follows \cite{Fil19padded}. Filtser \cite{Fil19padded} proved that the existence of a ``sparse net'' implies padded decomposition. Specifically, if there is a net $N$ such that every vertex $v$ has a net point at distance at most $d_G(v,N)\le \Delta$, and the number of net points in the ball $B_G(v,3\Delta)$ is at most $\tau$, then $G$ admits a strong $\left(O(\ln \tau),\frac{1}{16},4\Delta\right)$-padded decomposition. One can notice that by taking balls of radius $2\Delta$ around the sparse net $N$, we will get a $(\frac14,\tau,4\Delta)$-sparse cover, and thus our \Cref{thm:coversToPaddedDecomp} is in a sense generalization of \cite{Fil19padded} to arbitrary sparse covers (alas only with a weak diameter guarantee).
	
	For a $\cC$-cluster $C\subseteq V$ and a vertex $v\in V$, denote by $\EMPH{$\partial_{C}(v)$}\coloneqq d_G(v,V\setminus C)$ the distance between $v$ and the boundary of the $\cC$-cluster $C$.
	Note that if $v\notin C$, $\partial_{C}(v)=0$, while $B_G(v,r)\subseteq C$ implies  $\partial_{C}(v)>r$. Furthermore, for every pair of vertices $u,v\in V$, $|\partial_{C}(v)-\partial_{C}(u)|=\left|d_G(v,V\setminus C)-d_G(u,V\setminus C)\right|\le d_G(u,v)$.
	
	To create a padded decompositions, following previous works, we will use truncated exponential distribution. That is, exponential distribution conditioned on the event that the outcome lays in a certain interval. 
	The \emph{$[0,1]$-truncated exponential distribution} with
	parameter $\lambda$ is denoted by \EMPH{$\Texp(\lambda)$}, the density function is then
	$g(y) = \frac{ \lambda\cdot e^{-\lambda\cdot y} }{1 - e^{-\lambda}}$, for $y \in [0,1]$.

	\paragraph*{Construction.} Consider a $(\beta,s,\Delta)$-sparse cover $\cC$. For every $\cC$-cluster $C$, we sample $\EMPH{$\delta_C$}\sim\Texp(\lambda)$ using truncated exponential distribution with parameter $\EMPH{$\lambda$}=2+2\ln s$.
	For a $\cC$-cluster $C$, we define a function $\EMPH{$f_C$}:V\rightarrow\R_{\ge0}$, as follows: \[f_C(v)\coloneqq\delta_C\cdot\frac{\Delta}{\beta}+\partial_{C}(v).\]
	We create a partition $\EMPH{$\cP$}\coloneqq\{P_C\}_{C\in\cC}$ where each vertex $v\in V$ joins the $\cP$-cluster \EMPH{$P_C$} associated with the $\cC$-cluster $C$ that maximizes $f_C(v)$.
	
	\paragraph*{Diameter.} Fix a vertex $v\in V$. As $\cC$ is a sparse cover, there is some $\cC$-cluster $C_v\in\cC$ such that $B_G(v,\frac{\Delta}{\beta})\subseteq C_v$, implying $f_{C_v}(v)> \frac{\Delta}{\beta}$.
    From the other hand, for every $\cC$-cluster $C$ which does not contain $v$, it holds that $f_C(v)\le\delta_C\cdot\frac{\Delta}{\beta}+0\le\frac{\Delta}{\beta}$. It follows that $f_C(v)<f_{C_v}(v)$. Hence $v$ can only join a $\cP$-cluster in $P_C$ that is associated with a $\cC$-cluster $C$ that contains it. In particular, for every $C\in\cC$, $P_C\subseteq C$. As every $\cC$-cluster has diameter at most $\Delta$, we conclude that $\cP$ has diameter at most $\Delta$. Note that the diameter guarantee in $\cP$ is weak, regardless of the diameter guarantee in $\cC$.

    \paragraph*{Padding probability.} Next we analyze the padding probability. We begin with a claim that provides a sufficient condition for a ball to be contained in some $\cP$-cluster.

\begin{claim}\label{claim:PadProperty}
 	Let $v$ be a vertex. If $C$ is a $\cC$-cluster such that $f_C(v) > \max_{C' \neq C}f_{C'}(v) + 2r$ for some $r > 0$, then $B_G(v,r) \subseteq P_C$.
  \end{claim}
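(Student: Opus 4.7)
The plan is a short Lipschitz argument. The key observation is that for every fixed $\cC$-cluster $C$, the function $f_C(\cdot) = \delta_C\cdot\frac{\Delta}{\beta} + \partial_C(\cdot)$ is $1$-Lipschitz in its vertex argument, since the paper has already recorded that $|\partial_C(v)-\partial_C(u)| \le d_G(u,v)$ and the $\delta_C\cdot\frac{\Delta}{\beta}$ term is a constant independent of the vertex.

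Given that, I would argue as follows. Fix any vertex $u \in B_G(v,r)$, so $d_G(u,v) \le r$. By the $1$-Lipschitz property,
\[ f_C(u) \ge f_C(v) - r, \qquad f_{C'}(u) \le f_{C'}(v) + r \quad \text{for every } C' \neq C. \]
Combining these with the hypothesis $f_C(v) > \max_{C' \neq C} f_{C'}(v) + 2r$ gives
\[ f_C(u) \ge f_C(v) - r > \max_{C' \neq C} f_{C'}(v) + r \ge f_{C'}(u) \]
for every $C' \neq C$. Hence $C$ is the unique maximizer of $f_{(\cdot)}(u)$, so by construction $u$ joins $P_C$. Since $u \in B_G(v,r)$ was arbitrary, $B_G(v,r) \subseteq P_C$.

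There is no real obstacle here; the whole point of the construction was to make $f_C(\cdot)$ Lipschitz precisely so that a gap of $2r$ at $v$ transfers to a positive gap everywhere in $B_G(v,r)$. The only thing to be slightly careful about is tie-breaking in the definition of $P_C$ (``maximizes $f_C(v)$''), but the strict inequality $f_C(u) > f_{C'}(u)$ that we derive for every competing $C'$ means tie-breaking never comes into play, so the argument goes through regardless of how ties are resolved.
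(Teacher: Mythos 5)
Your proof is correct and is essentially the same argument as the paper's: both exploit the $1$-Lipschitz property of $f_C$ (which follows from the Lipschitzness of $\partial_C$) to show that the gap of $2r$ at $v$ yields a strict gap $f_C(u) > f_{C'}(u)$ at every $u \in B_G(v,r)$, so $u$ joins $P_C$.
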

\begin{proof}
  For every $u\in B_G(v,r)$, and every center $C'\ne C$ it holds that
\begin{align*}
    f_{C}(u) & =\delta_{C}\cdot\frac{\Delta}{\beta}+\partial_{C}(u)~\ge~\delta_{C}\cdot\frac{\Delta}{\beta}+\partial_{C}(v)-d_{G}(u,v)\\
     & =f_{C}(v)-d_{G}(u,v)~\boldsymbol{>}~f_{C'}(v)+2r-d_{G}(u,v)\\
     & =\delta_{C'}\cdot\frac{\Delta}{\beta}+\partial_{C'}(v)+2r-d_{G}(u,v)\\
     & \ge\delta_{C'}\cdot\frac{\Delta}{\beta}+\partial_{C'}(u)+2r-2\cdot d_{G}(u,v)~\ge~f_{C'}(u)~.
\end{align*}
It follows that $f_{C}(u)>f_{C'}(u)$ for every $C'\ne C$, and hence $u\in P_C$. In particular, $B_G(v,r)\subseteq P_{C}$ as required. 
\end{proof}

Consider some vertex $v\in V$, and parameter $\gamma\le\frac{1}{4}$. We will first argue that the ball $B \coloneqq B_G(v,\gamma\cdot \frac{\Delta}{\beta})$ is fully contained in a single $\cP$-cluster with probability at least $e^{-4\gamma\cdot\lambda}$. The theorem will then follow due to scaling.
Let $\cC=\{C_1,C_2,\dots\}$ be an arbitrarily ordering of the $\cC$-clusters. 
Denote by \EMPH{$\mathcal{F}_i$} the event that $v$ joins the cluster associated with $C_i$, i.e. $v\in P_{C_i}$. 
Denote by \EMPH{$\cQ_i$} the event that $v\in P_{C_i}$, but not all of the vertices in $B$ joined $P_{C_i}$, that is $v\in P_{C_i}$ and $P_{C_i} \cap B\neq B$. 
Let $\EMPH{$\cC_v$}\subseteq\cC$ be the clusters containing $v$ (in the cover). Note that for every $C_i\notin \cC_v$, $\Pr[\cF_i]=\Pr[\cQ_i]=0$.
To prove our assertion, it is enough to show that $\Pr\left[\cup_{i}\cQ_{i}\right]\le 1-e^{-4\gamma\cdot\lambda}$.
Set $\EMPH{$\alpha$}\coloneqq e^{-2\gamma\cdot\lambda}$.
\begin{claim}
	For every $i$, 
	$\Pr\left[\cQ_{i}\right]\le\left(1-\alpha\right)\left(\Pr\left[\mathcal{F}_{i}\right]+\frac{1}{e^{\lambda}-1}\right)$.
\end{claim}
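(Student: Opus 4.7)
The plan is to condition on the values $\{\delta_{C_j}\}_{j\neq i}$ (which are independent of $\delta_{C_i}$), reducing everything to a one-variable computation against the truncated exponential density. After this conditioning, $M\coloneqq\max_{j\neq i}f_{C_j}(v)$ becomes a fixed number, and the event $\cF_i$ is exactly $\{f_{C_i}(v)>M\}$. For $\cQ_i$, I would invoke the contrapositive of \Cref{claim:PadProperty}: if $v\in P_{C_i}$ but $B\not\subseteq P_{C_i}$, then $f_{C_i}(v)\le M+2r$ where $r\coloneqq\gamma\Delta/\beta$. Substituting $f_{C_i}(v)=\delta_{C_i}\cdot\Delta/\beta+\partial_{C_i}(v)$ and setting $t\coloneqq(M-\partial_{C_i}(v))\cdot\beta/\Delta$, these events translate to $\{\delta_{C_i}>t\}$ and (a subset of) $\{t<\delta_{C_i}\le t+2\gamma\}$ respectively.

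Next I would compute these probabilities for $\delta_{C_i}\sim\Texp(\lambda)$. Writing $t^{\ast}\coloneqq\max(t,0)$ to absorb possibly negative $t$, and noting that $t>1$ makes both probabilities identically zero, integration against the density $\frac{\lambda e^{-\lambda y}}{1-e^{-\lambda}}$ yields
\[
\Pr\!\left[\cF_i\mid\{\delta_{C_j}\}_{j\neq i}\right]=\frac{e^{-\lambda t^{\ast}}-e^{-\lambda}}{1-e^{-\lambda}}\qquad\text{and}\qquad\Pr\!\left[\cQ_i\mid\{\delta_{C_j}\}_{j\neq i}\right]\le\frac{e^{-\lambda t^{\ast}}\bigl(1-e^{-2\lambda\gamma}\bigr)}{1-e^{-\lambda}},
\]
where the truncation of the interval $(t,t+2\gamma)$ at the endpoint $1$ (when $t+2\gamma>1$) only decreases the probability of that interval and therefore preserves the upper bound.

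The last step is a one-line algebraic manipulation: with $\alpha=e^{-2\lambda\gamma}$, split the numerator in the bound for $\Pr[\cQ_i\mid\cdot]$ as $e^{-\lambda t^{\ast}}=(e^{-\lambda t^{\ast}}-e^{-\lambda})+e^{-\lambda}$ to rewrite it as
\[
\Pr\!\left[\cQ_i\mid\{\delta_{C_j}\}_{j\neq i}\right]\le(1-\alpha)\left(\Pr\!\left[\cF_i\mid\{\delta_{C_j}\}_{j\neq i}\right]+\frac{1}{e^{\lambda}-1}\right),
\]
and then take expectation over $\{\delta_{C_j}\}_{j\neq i}$ to remove the conditioning. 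The main (mild) obstacle is the bookkeeping of the two boundary cases $t<0$ and $t+2\gamma>1$, both of which are benign and handled by passing to $t^{\ast}$ and using monotonicity of the density; beyond that, the argument is essentially a memoryless-type shift for $\Texp(\lambda)$, with the additive $\tfrac{1}{e^{\lambda}-1}$ slack being precisely the correction coming from the truncation at $1$.
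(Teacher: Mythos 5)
Your proposal is correct and takes essentially the same route as the paper: condition on the other shifts, reduce to a threshold $t$ (the paper's $\rho_X$, up to the $\max$ with $0$) for the one remaining truncated exponential, use \Cref{claim:PadProperty} to sandwich $\cQ_i$ inside $\{t<\delta_{C_i}\le t+2\gamma\}$, integrate the density, and split $e^{-\lambda t^\ast}$ to produce the $\tfrac{1}{e^\lambda-1}$ correction. The only cosmetic difference is notation ($t$, $M$, $t^\ast$ in place of $\rho_X$) and that the paper handles the $t>1$ case by simply noting $\Pr[\cQ\mid X]=0$ there, matching your remark.
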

\begin{proof}
	As the order in $\cC$ is arbitrary, assume w.l.o.g. that $i=|\cC|$ and denote $C\coloneqq C_{i}$, $\cQ\coloneqq \cQ_i$, $\mathcal{F}\coloneqq \mathcal{F}_i$, and $\delta\coloneqq \delta_{C_i}$.
    We begin by proving the claim conditioned on the samples of all the clusters other than $C$.
    Specifically, let $X\in [0,1]^{|\cC|-1}$ be the vector where the $j$'th coordinate equals $\delta_{x_j}$. Set 
	\[\EMPH{$\rho_X$}\coloneqq \max\left\{0,\frac{\beta}{\Delta}\cdot
    \left(\max_{C'\ne C}\left\{f_{C'}(v)\right\} -\partial_{C}(v)\right)\right\}.\]
    Note that $\rho_X$ is the minimal value of $\delta$ such that if $\delta>\rho_X$, then $C$ has the maximal value $f_C(v)$, and therefore $v$ will join $P_C$. 
	Note that it is possible that $\rho_X>1$.
	Conditioning on all the other samples having values $X$, and assuming first that $\rho_{X}\le 1$, it holds that
	\[
	\Pr\left[\mathcal{F}\mid X\right]=\Pr\left[\delta>\rho_{X}\right]=\int_{\rho_{X}}^{1}\frac{\lambda\cdot e^{-\lambda y}}{1-e^{-\lambda}}dy=\frac{e^{-\rho_{X}\cdot\lambda}-e^{-\lambda}}{1-e^{-\lambda}}~.
	\]    
	If $\delta>\rho_{X}+2\gamma$, then
\[
    f_{C}(v)~=~\delta_{C}\cdot\frac{\Delta}{\beta}+\partial_{C}(v)~\boldsymbol{>}~(\rho_{X}+2\gamma)\cdot\frac{\Delta}{\beta}+\partial_{C}(v)~\ge~2\gamma\cdot\frac{\Delta}{\beta}+\max_{C'\ne C}f_{C'}(v)~.
\]
	In particular, by \Cref{claim:PadProperty} the ball $B$ will be contained in $C$.
	We conclude
	\begin{align*}
		\Pr\left[\cQ\mid X\right] & \le\Pr\left[\rho_{X}\le\delta\le\rho_{X}+2\gamma\right]\\
		& =\int_{\rho_{X}}^{\max\left\{ 1,\rho_{X}+2\gamma\right\} }\frac{\lambda\cdot e^{-\lambda y}}{1-e^{-\lambda}}dy\\
		& \le\frac{e^{-\rho_{X}\cdot\lambda}-e^{-\left(\rho_{X}+2\gamma\right)\cdot\lambda}}{1-e^{-\lambda}}\\
		& =\left(1-e^{-2\gamma\cdot\lambda}\right)\cdot\frac{e^{-\rho_{X}\cdot\lambda}}{1-e^{-\lambda}}\\
		& =\left(1-\alpha\right)\cdot\left(\Pr\left[\mathcal{F}\mid X\right]+\frac{1}{e^{\lambda}-1}\right)~.
	\end{align*}
	Note that if $\rho_{X}> 1$ then $\Pr\left[\cQ\mid X\right]=0\le \left(1-\alpha\right)\cdot\left(\Pr\left[\mathcal{F}\mid X\right]+\frac{1}{e^{\lambda}-1}\right)$ as well.
	Denote by $h$ the density function of the distribution over all possible values of $X$. Using the law of total probability, we can bound the probability of $\cQ$:
	\begin{align*}
		\Pr\left[\cQ\right] & =\int_{X}\Pr\left[\cQ\mid X\right]\cdot h(X)~dX\\
		& \le\left(1-\alpha\right)\cdot\int_{X}\left(\Pr\left[\mathcal{F}\mid X\right]+\frac{1}{e^{\lambda}-1}\right)\cdot h(X)~dX\\
		& =\left(1-\alpha\right)\cdot\left(\Pr\left[\mathcal{F}\right]+\frac{1}{e^{\lambda}-1}\right)~.
	\end{align*}
\end{proof}

Next, we bound the probability that the ball $B$ is cut. 
\begin{align}	\Pr\left[\cup_{i}\cQ_{i}\right]=\sum_{C_{i}\in\cC_{v}}\Pr\left[\cQ_{i}\right] & \le\left(1-\alpha\right)\cdot\sum_{C_{i}\in\cC_{v}}\left(\Pr\left[\mathcal{F}_{i}\right]+\frac{1}{e^{\lambda}-1}\right)\nonumber\\
	& \le\left(1-e^{-2\gamma\cdot\lambda}\right)\cdot\left(1+\frac{s}{e^{\lambda}-1}\right)\nonumber\\
	& \le\left(1-e^{-2\gamma\cdot\lambda}\right)\cdot\left(1+e^{-2\gamma\cdot\lambda}\right)=1-e^{-4\gamma\cdot\lambda}~,\label{eq:cutProb}
\end{align}
where the last inequality follows as
$e^{-2\gamma\lambda}=\frac{e^{-2\gamma\lambda}\left(e^{\lambda}-1\right)}{e^{\lambda}-1}\ge\frac{e^{-2\gamma\lambda}\cdot e^{\lambda-1}}{e^{\lambda}-1}\ge\frac{e^{\frac{\lambda}{2}-1}}{e^{\lambda}-1}=\frac{s}{e^{\lambda}-1}$, where we used that $\gamma\in(0,\frac14]$, and $\lambda=2+2\ln s$.

We conclude that our distribution indeed produces a weak $(4\beta\cdot\lambda,\frac{1}{4\beta},\Delta)$-padded decomposition. Indeed, we already established that the diameter is at most $\Delta$. Next, fix $\gamma\le\frac{1}{4\beta}$, and $v\in V$. Denote by $P(v)$ the $\cP$-cluster containing $v$. Then
\[
\Pr_{\cP}\left[B_{G}(v,\gamma\cdot\Delta)\subseteq P(v)\right]=\Pr_{\cP}\left[B_{G}(v,\beta\cdot\gamma\cdot\frac{\Delta}{\beta})\subseteq P(v)\right]\ge e^{-4\cdot\beta\cdot\gamma\cdot\lambda}~,
\]
where we used inequality (\ref{eq:cutProb}) w.r.t. $\beta\cdot\gamma$, and the fact that $\beta\cdot\gamma\le\frac14$. The theorem now follows.

\end{proof}

\section{Applications}\label{sec:apps}
Our \Cref{thm:padded} and \Cref{thm:cover} have numerous algorithmic applications. We highlight some of them:
\begin{enumerate}
\item \textbf{Multi-Commodity Max-Flow/Min-Cut Gap}: Here we are given undirected graph $G=(V,E)$ with capacity function $c:E\rightarrow\R_{\ge0}$ over the edges, and $k$ demand pairs $(s_i,t_i)$. There are two different versions of the problem.
    \begin{itemize}
        \item \textbf{Maximum throughput/Minimum Multicut}:
        In the maximum throughput version the goal is to send the maximum total amount of flow (denoted $F_{\rm tp}$) between the demand pairs (while respecting the capacity constrains). 
        In the minimum multicut problem the goal is to delete edges  of minimum total capacity (denoted $C_{\rm multi}$) so that to separate every demand pair. 
        The maximum throughput is bounded by the capacity of the minimum multicut, that is $F_{\rm tp}\le C_{\rm multi}$.
        The ratio $\nicefrac{C_{\rm multi}}{F_{\rm tp}}$ is called the multiflow-multicut gap.
        Using \cite{TV93} and \Cref{thm:padded}, it follows that in $K_r$-minor free graph the multiflow-multicut gap is at most $O(\log r)$, which is also tight \cite{GVY96} (as there is an $\Omega(\log n)$ lower bound for general graphs).
        The previously known upper bound was $O(r)$ \cite{TV93,AGGNT19}.

        \item \textbf{Maximum  concurrent flow/Cut Ratio}: Here in addition each demand pair $(s_i,t_i)$ has demand $D_i$. 
        The goal is to route the maximum fractional flow $f\cdot D_i$ between all the demand pairs simultaneously.
        The minimum cut ratio is $R=\min_{U\subset V}\frac{C(U,\bar{U})}{D(U,\bar{U})}=\min_{U\subset V}\frac{\sum_{e\in E\cap(U\times\bar{U})}c(e)}{\sum_{(s_{i},t_{i})\in(U\times\bar{U})\cup(\bar{U}\times U)}D_{i}}$.
        The maximum  concurrent flow is bounded by the minimum cut ratio, that is $f\le R$.
        For the case of $K_r$-minor free graph with uniform demands (where there is a unit demand between every vertex pair), \cite{KPR93} and \Cref{thm:padded} provide an $O(\log r)$ bound on the minimum cut ratio.
        An $O(\log r)$ approximation to the sparsest cut with uniform demands follows \cite{KPR93, Rab03}.
        Both improving a previous $O(r)$ upper bounds \cite{KPR93,Rab03,AGGNT19}.
    \end{itemize}

\item \textbf{Flow sparsifier}: Given an edge-capacitated graph $G=(V,E,c)$ and a subset of terminals $K\subseteq V$, a flow sparsifier with quality $q\ge 1$ is a graph $H$ over the terminal set $K$, such that (a) any feasible $K$-flow in $G$ can be feasibly routed in $H$, and (b) any feasible $K$-flow in $H$ can be routed in $G$ with congestion $\rho$.
Based on \cite{EGKRTT14} and \Cref{thm:padded}, every $K_r$ minor free graph admits (efficiently commutable) flow sparsifier of quality $O(\log r)$, improving a previous $O(r)$ quality flow sparsifier \cite{EGKRTT14,AGGNT19}. 

\item \textbf{Sparse partition}: A graph admits $(\rho,\tau)$-sparse partition scheme if for every $\Delta>0$ there is a partition into clusters of diameter $\Delta$ such that every ball of radius $\frac\Delta\rho$ intersects at most $\tau$ clusters. Based on \cite{JLNRS05,Fil24scattering} and \Cref{thm:cover} every $K_r$-minor free graph admits an 
$(O(1),O(r^4))$-sparse partition scheme.
This exponentially improves the sparsity parameter compared to the $(4+\eps,O(\frac1\eps)^4)$-sparse partitions from \cite{Fil24sparse}.

\item \textbf{Steiner point removal}: Here we are given a weighted graph $G=(V,E,w)$ and a subset of terminals $K\subseteq V$. The goal is to construct a weighted minor $H$ of $G$ with $K$ as its vertex set while preserving all terminal pairwise distances up to a small multiplicative distortion. 
Using \cite{EGKRTT14} and \Cref{thm:padded}, for every $K_r$-minor free graph, there is a distribution over dominating minors with expected distortion $\tilde{O}(\log r)$, improving a previous $\tilde{O}(r)$ bound \cite{EGKRTT14,AGGNT19}.

\item \textbf{$\boldsymbol{0}$-Extension and MultiWay Cut}: 
In the $0$-Extension problem, the input is a set $X$, a terminal set $K\subseteq X$, a metric $d_K$ over the terminals, and an arbitrary cost function $c: {X\choose 2}\rightarrow \mathbb{R}_+$.
The goal is to find a \emph{retraction} $f:X\rightarrow K$ 
that minimizes $\sum_{\{x,y\}\in {X\choose 2}} c(x,y)\cdot d_K(f(x),f(y))$.
A retraction is a surjective function $f:X\rightarrow K$ that satisfies $f(x)=x$ for all $x\in K$. 
An important special case is the Multiway Cut problem, where the goal is to cut a minimum number of edges to separate $k$ terminals into $k$ disjoint sets.
For the case where the metric $(K,d_K)$ over the terminals is induced by the shortest path metric of a $K_r$-minor free graph,
\cite{LN05} and \Cref{thm:padded} provide an $O(\log r)$ approximation algorithm for the $0$-extension problem (see also \cite{AFHKTT04,CKR04}).
This improved a previous $O(r)$ approximation \cite{LN05,AGGNT19}.

\item \textbf{Lipschitz Extension}: 
For a function $f:X\rightarrow Y$ from a metric spaces $(X,d_X)$ into a Banach space $Y$, set $\|f\|_{{\rm Lip}}=\sup_{x,y\in X}\frac{d_Y(f(x),f(y))}{d_X(x,y)}$ to be the Lipschitz parameter of the function. In the Lipschitz extension problem, we are given a map $f:Z\rightarrow Y$ from a subset $Z$ of $X$. The goal is to extend $f$ to a function $\tilde{f}$ over the entire space $X$, while minimizing $\|\tilde{f}\|_{{\rm Lip}}$ as a function of $\|f\|_{{\rm Lip}}$.
Suppose that $(X,d_X)$ is the shortest path metric of a $K_r$ minor free graph. Then from \cite{LN05} and \Cref{thm:padded} it follows that there is always an extension with  Lipschitz parameter $\|\tilde{f}\|_{{\rm Lip}}\le O(\log r)\cdot \|f\|_{{\rm Lip}}$ (previously it was known that extension with parameter $O(r)\cdot \|f\|_{{\rm Lip}}$ is possible \cite{LN05,AGGNT19}).

\item \label{app:EmbeddingLp}\textbf{Metric embedding into $\boldsymbol{\ell_p}$ spaces}: Given an $n$-vertex graph $G$, here the goal is to embed the vertices of $G$ into $\ell_p$ while preserving pairwise distances up to a small multiplicative distortion. 
From \cite{KLMN04} and \Cref{thm:padded}, it follows that every $K_r$-minor free graph can be embedded into $\ell_p$ with distortion $O((\log r)^{1-\frac1p}\cdot(\log n)^{\frac1p})$ (improving over  $O(r^{1-\frac1p}\cdot(\log n)^{\frac1p})$\cite{KLMN04,AGGNT19}).

\item \textbf{Metric embedding into $\boldsymbol{\ell_\infty}$:} Every metric embeds into $\ell_\infty$  isometrically, thus here we study the trade-off between distortion and dimension.
From \cite{Fil24sparse} and \Cref{thm:cover}, it follows that every $K_r$-minor free graph can be embedded into $\ell_\infty$ with distortion $O(1)$ and $O(r^4\cdot\log n)$ dimensions (improving over  $O(1)$ distortion and $O(1)^r\cdot\log n$ dimensions \cite{Fil24sparse}).

\item \textbf{Average distortion, vertex cuts, and treewidth approximation:} Here we sketch three interconnected applications.
    \begin{itemize}
        \item \textbf{Average distortion embedding into the line:} Given an undirected weighted graph $G = (V,E,w)$, our goal is to find a non-expanding embedding $f:V \to \R$ (that is, $|f(x) - f(y)| \le \dist_G(x,y)$ for all $x,y \in V$) that minimizes the average distortion $\frac{\sum_{x, y\in V}\dist_G(x,y)}{\sum_{x, y\in V} |f(x) - f(y)|}$. Following Rabinovich \cite{Rab03}, our \Cref{thm:padded} implies that every $K_r$-minor-free graph can be embedded into the line with $O(\log r)$ average distortion (improving over $O(r)$ \cite{Rab03, AGGNT19}).

        \item \textbf{Min-ratio vertex separator:} A vertex separator is a partition of $V$ into three sets $(A, S, B)$ such that deleting $S$ from $G$ separates the graph into disconnected pieces $G[A]$ and $G[B]$. Let $\pi:V \to \R_{> 0}$ be a function defining weights on the vertices of $G$; for any subset $S \subseteq V$ we write $\pi(S)$ to mean $\sum_{v \in S}\pi(v)$. The sparsity of a vertex separator $(A, S, B)$ is $\frac{\pi(S)}{\pi(A \cup S) \cdot \pi(B \cup S)}$. Feige, Hajiaghayi, and Lee \cite{FHL08} showed that average-distortion embedding into the line can be used to find an approximate minimum-sparsity vertex separator. Our \Cref{thm:padded} (together with the arguments of \cite{FHL08} Theorem 4.2) yields an $O(\log r)$ approximation. Previously, an $O(r)$ approximation and an $O(\sqrt{\mathrm{opt}})$ approximation were known \cite{AGGNT19, FHL08}.

        \item \textbf{Balanced separators and Treewidth:} Let $W \subseteq V$ be an arbitrary subset of vertices. For any $\delta \in (0,1)$, a $\delta$-balanced vertex separator with respect to $W$ is a vertex separator $(A,S,B)$ such that $|A \cap W|$ and $|B \cap W|$ are at most $\delta |W|$. Given a subset $W \subseteq V$, we would like to find a $\delta$-balanced vertex separator $(A,S,B)$ that minimizes $|S|$. Following \cite{LR99, FHL08}, our $O(\log r)$-approximation to min-sparsity vertex separator can be used to find a $3/4$-balanced separator whose size is within $O(\log r)$ of the optimal $2/3$-balanced separator. An $O(\log r)$-approximation to treewidth follows \cite{BGHK95, FHL08}. Previously, an $O(r)$ approximation and an $O(\sqrt{\mathrm{opt}})$ approximation were known \cite{AGGNT19, FHL08}.
        
\end{itemize}

\end{enumerate}

\section{Concluding Remarks}
\paragraph*{General Graphs.} The main result of the paper is the construction of optimal padded decompositions for $K_r$ minor free graphs with padding parameter $O(\log r)$.
As the padding parameter of general $n$ vertex graphs is $O(\log n)$, it will be beneficial to use our decomposition even if a given graph only excludes a rather large minor (say $K_{2^{O(\sqrt{\log n})}}$).
However, given a graph $G$, unless $r$ is a constant \cite{KPS24}, the best known approximation factor for the minimum $r$ such that $G$ is $K_r$ minor free is $O(\sqrt{n})$ \cite{ALW07}.
Nevertheless, given a $K_r$ minor free graph $G$ we can sample a decomposition with parameter $O(\log r)$, without any knowledge of $r$! 

The details are as follows: the first step in our algorithm is to construct a buffered cop decomposition (\Cref{def:buffered-cop}, \cite{CCLMST24}).
In Chang \etal's \cite{CCLMST24} construction, one provides a buffer parameter $\gamma$, and if the graph is $K_r$ minor free, then each supernode will have radius $r\cdot \gamma$. A prior knowledge of $r$ is not required. 
(This is implicit in \cite{CCLMST24}. They phrase their algorithm as taking input $\Delta$ and $r$, but they use these parameters only to define the buffer parameter $\gamma \gets \Delta/r$.)
One can try different choices of $\gamma$ until all supernodes will have radius $\Delta$ (and thus the buffer will be at least $\frac{\Delta}{r}$). 
Then, given the buffered cop decomposition we construct sparse cover (\Cref{thm:BufferedToSparseCover}), and finally we sample padded decomposition from the sparse cover (\Cref{thm:coversToPaddedDecomp}), obtaining the desired padded decomposition.

\paragraph*{Running Time.} 
While the running time of all the algorithms used and presented in the paper is polynomial, it is not explicitly stated. In particular, the running time of the buffered cop decomposition \cite{CCLMST24} is not explicit.
Nevertheless, after a polynomial time preprocessing, one can sample a padded decomposition using our \Cref{thm:padded} in $O(n\cdot r^4)$ time.
Indeed, for the preprocessing, construct an $(O(1),O(r^4),\Delta)$-sparse cover $\cC$ (\Cref{thm:cover}). In addition, for every cluster $C\in\cC$ and $v\in C$, compute the distance to the boundary $\partial_{C}(v)$.
Then, to sample a padded decomposition, we sample the random shifts $\{\delta_C\}_{C\in\cC}$, and assign each vertex $v$ to the cluster $C$ (containing $v$) that minimizes the function $f_C(v)$. As the sparsity of the cover is $O(r^4)$, the running time is $O(n\cdot r^4)$.

\paragraph*{Open Questions.} We list several open problems following our paper:
\begin{enumerate}
    \item Strong diameter padded decomposition:\label{question:StrongPadded} Our padded decompositions (\Cref{thm:padded}) have only weak diameter guarantee. For strong diameter, the state of the art remains an $O(r)$ padding parameter \cite{Fil19padded}. 
    In fact, even for treewidth $\tw$ graphs, nothing better is known.
    Closing this exponential gap is the main open question left by this paper.
    
    \item Strong diameter sparse covers:\label{question:StrongSparseCover} In \Cref{thm:cover} we constructed sparse covers with weak diameter, constant padding and $\poly(r)$ sparsity. From the other hand, for strong diameter, the best known sparse cover with constant padding has $\exp(r)$ sparsity \cite{Fil24sparse}. Closing this gap is another interesting question. 
    
    \item Reduction from strong sparse covers to strong padded decomposition: In \Cref{thm:coversToPaddedDecomp} even if one plugs in sparse cover with strong diameter, the resulting padded decomposition will only have weak diameter (in particular, a positive answer to \Questionref{question:StrongSparseCover} will not imply a similar improvment for \Questionref{question:StrongPadded}).
    We ask whether there is a similar reduction that given a sparse cover with strong diameter will produce strong diameter padded decomposition, or is there a separation between the two?
    
    \item Buffered cop decomposition with constant buffer: Given a $K_r$ minor free graph $G$, Chang \emph{et~al.} \cite{CCLMST24} constructed a $(\Delta, \Delta/r, r-1)$-buffered cop decomposition. 
    Is it possible to construct a $(\Delta, \Delta/O(1), r-1)$-buffered cop decomposition?
    Note that based on the arguments in \cite{Fil24sparse}, a positive resolution to this question will also provide a positive answer to \Questionref{question:StrongSparseCover}.

    \item Optimal padding for the sparse cover: In our sparse cover (\Cref{thm:cover}) we get padding $8+\eps$. Note that the best padding one can hope for is $4$. Indeed, consider an unweighted star where we subdivided each edge. Any cover with padding strictly better than $4$ will have sparsity $\Omega(n)$. 
    Filtser \cite{Fil24sparse} constructed a sparse cover with padding $4+\eps$ but with $\exp(r)$ sparsity. Is it possible to get padding $4+\eps$ with polynomial sparsity?
\end{enumerate}

\section*{Acknowledgments}
The first author would like to thank Hsien-Chih Chang for many useful discussions on cop decomposition. The second author would like to thank Ofer Neiman, D Ellis Hershkowitz, Jason Li, Nikhil Kumar, and Davis Issac for helpful discussions.

\small
\bibliographystyle{alphaurl}
\bibliography{main,LSObib}

\end{document}